\numberwithin{equation}{section}
\theoremstyle{plain}
\newtheorem{prop}{Proposition}
\newtheorem{assumption}{Assumption}
\newtheorem{theorem}{Theorem}
\theoremstyle{remark}
\newtheorem{remark}{Remark}
\begin{document}
\begin{frontmatter}

\title{Semi-Nonparametric Models of Multidimensional Matching: an Optimal Transport Approach}
\runtitle{Semi-Nonparametric Multidimensional Matching}

\begin{aug}
%
%
%
\author[id=au1,addressref={add1}]{\fnms{Dongwoo}~\snm{Kim}\ead[label=e1]{dongwook@sfu.ca, dongwoo\_kim@korea.ac.kr}}
\author[id=au2,addressref={add2}]{\fnms{Young Jun}~\snm{Lee}\ead[label=e2]{youngjunlee@hanyang.ac.kr}}

\address[id=add1]{%
\orgname{Simon Fraser University and Korea University}}

\address[id=add2]{%
\orgname{Hanyang University}}
\end{aug}

\support{This paper is based on the third chapter of Lee's doctoral dissertation at University College London. We thank Dennis Kristensen, Krishna Pendakur, Minchul Shin, Martin Weidner, and Daniel Wilhelm for their helpful suggestions. We also benefited greatly from comments by seminar and conference participants at UCL, Bocconi University, KIEP, University of Calgary, Northwestern University, Kyung Hee University, University of Seoul, Hanyang University, KAEA virtual seminar, IAAE 2024, KERIC 2024, ESWC 2025, and EcoSta 2025. Kim gratefully acknowledges support from the Social Sciences and Humanities Research Council of Canada under the Insight Grant (435-2024-0322) and Korea University Grants (K2528561, K2528721, K2614081, K2613941). Lee gratefully acknowledges support from the Italian Ministry of University and Research under PRIN 2017 (2017TMFPSH). All errors are our own.}

\begin{abstract}
This paper develops a set of empirically tractable and flexible sieve estimators for semi-nonparametric multidimensional matching models with transferable utility, focusing on worker-job matching. We generalize the parametric quadratic-Gaussian framework employed by Bojilov and Galichon (2016) and Lindenlaub (2017), which relies on joint normality of observed characteristics. We allow unrestricted distributions of characteristics and show identification of the production technology and the equilibrium wage and matching functions using optimal transport theory. Given identification, we propose efficient, consistent, and asymptotically normal sieve estimators. We revisit Lindenlaub's empirical application and show that, between 1990 and 2010, the U.S. economy experienced much larger technological progress favoring cognitive abilities than the original findings suggest. Furthermore, our flexible model specifications provide a significantly better fit for patterns in the evolution of wage inequality.
\vspace{1em}\\
Keywords: Multidimensional matching, transferable utility, optimal transport, sieve extremum estimation, technological progress, wage polarization.
\end{abstract}

\end{frontmatter}
\newpage

\section{Introduction}\label{sec: Matchingintro}

In two-sided markets, agents form optimal matches based on their preferences and characteristics, generating a shareable joint surplus. Matching models are widely used to analyze these dynamics in settings such as the labor market (workers and jobs) and the marriage market (spouses). To keep the analysis tractable, empirical work often relies on low-dimensional heterogeneity, e.g., focusing on one salient attribute or aggregating multiple characteristics into a scalar index.\footnote{Assortative spousal matching on income, wages, education, risk aversion, and preference for childbearing are investigated by \cite{becker1991treatise}, \cite{grossbard1993theory}, \cite{pencavel1998assortative}, \cite{choo2006marries}, \cite{chiappori2016matching, legros2007beauty} and \cite{chiappori2008birth} among many others. \cite{becker1973} and \cite{chiappori2012fatter} investigate spousal matching that hinges on ``ability indices''.} While such scalar-index approaches implicitly encode multidimensional characteristics, they restrict the matching mechanism to operate along a single dimension. In the labor market, for instance, workers develop specialized skills (cognitive for mathematicians and manual for gymnasts), and the single index model cannot capture this specialization. These considerations motivate matching models that accommodate multidimensional heterogeneity more directly, a goal pursued in a growing literature.\footnote{Studies such as \cite{willis1979education} and \cite{papageorgiou2014learning} favor the multidimensional setup over a single index model in the labor market context. Spousal choices are also based on a variety of attributes as shown in \cite{becker1991treatise}, \cite{weiss1997match}, \cite{qian1998changes}, \cite{silventoinen2003assortative}, \cite{hitsch2010matching}, and \cite{oreffice2010anthropometry}.}

Following the seminal work of \cite{tinbergen1956theory}, recent contributions have advanced the analysis of matching with multiple continuous attributes. These frameworks typically achieve tractability by imposing parametric restrictions on the distribution of observable characteristics and/or on the structure of the surplus. In particular, \cite{bojilov2016} and \cite{lindenlaub2017} develop parametric matching models that assume joint normality of attributes and a bilinear surplus, which yields closed-form expressions for the equilibrium assignment. However, the normality assumption can be consequential. If the true joint distribution deviates from Gaussianity, the implied assignment patterns may be distorted. In empirical applications, this approach often requires additional steps to align the data with the normality assumption, for instance, transforming each marginal to standard normal. Such transformations can alter the interpretation of the underlying attributes, and the resulting joint distribution need not be well approximated by a multivariate normal. Consequently, the estimated assignment mechanism may not accurately reflect the true matching process. 

In this paper, we address these challenges by developing a set of empirically tractable and flexible sieve estimators for semi-nonparametric multidimensional matching models with transferable utility in the context of worker-job matching. We generalize the quadratic-Gaussian structure employed in \cite{bojilov2016} and \cite{lindenlaub2017} by accommodating arbitrary distributions of attributes while maintaining the bilinear surplus. In this context, each worker possesses distinct skills, and each job requires specific skills to produce output according to the production technology. The social planner's problem is to optimally assign workers to jobs to maximize total output in the economy, which can be formulated as a Monge-Kantorovich optimal transport problem. Unique solutions for the equilibrium assignment and wage functions are derived using an optimal transport approach \citep{villani2003,villani2008, dephilippis2014}.\footnote{Applications of optimal transport have proven successful in multiple fields of economics (e.g., \cite{ekeland2010notes}, \cite{chiappori2010hedonic}, \cite{chiong2016duality}, \cite{lindenlaub2017}, \cite{galichon2022cupid}, among others). \cite{galichon2017survey} offers a comprehensive survey of the literature.} We then propose an econometric framework to apply the theoretical model to real-world data. Following \cite{lindenlaub2017}, we introduce error terms to bridge the gap between the deterministic equilibrium predictions and observed data. Unlike \cite{lindenlaub2017}, however, we do not require the errors to be normally distributed or uncorrelated.

We estimate the production technology, equilibrium assignment, and wage functions using the semiparametric M-estimation techniques proposed by \cite{ai2003} and \cite{chen2007}. To our knowledge, this paper is the first in the literature to introduce semiparametric M-estimators to multidimensional matching models. Depending on the assumptions on the error terms, the model is estimated by sieve maximum likelihood (SML), least squares (SLS), or generalized least squares (SGLS). These estimators are efficient, asymptotically normal, and easy to implement. Our estimators perform well in extensive simulation experiments for a wide class of data-generating processes. Although we focus on worker-job matching, our method can be broadly applied to other matching problems, such as couple matching in the marriage market.

We revisit \cite{lindenlaub2017}'s worker-job data from the U.S. and estimate production technology, equilibrium assignment, and wage functions to investigate the technological shift and its effects on wage inequality between 1990 and 2010. Our results show greater technological progress favoring cognitive skills than Lindenlaub's estimates suggest. Furthermore, the greater flexibility of our models provides much greater explanatory power for the evolution of wage inequality, particularly the `\textit{wage polarization}' phenomenon featuring stronger wage growth in the bottom and upper tails of the wage distribution relative to the median. The Gaussian model fails to predict wage polarization because, on top of misspecification bias, it restricts the equilibrium wage function to a quadratic form. 

This paper is organized as follows. The remainder of this section discusses the related literature. Section \ref{sec: Matchingmodel} introduces the multidimensional matching model with a bilinear surplus and characterizes the equilibrium using optimal transport theory. Section \ref{sec: MatchingModelID} proposes the econometric matching models and establishes identification. Section \ref{sec:MatchingEst} presents the sieve estimators. Section \ref{sec: asymptotics} derives the asymptotic properties of our sieve GLS estimator. Section \ref{sec: Simul} conducts simulation experiments. Section \ref{sec: Matchingemp} revisits \citet{lindenlaub2017}'s empirical analysis. Section \ref{sec: conclusion} concludes. Technical proofs and additional theoretical details are provided in the appendix.

\subsection{Related literature}

\citet{choo2006marries} (CS henceforth) introduced an empirical transferable utility (TU) model for matching with discrete types and multidimensional unobserved heterogeneity.\footnote{See \citet{galichon2019} for the imperfectly transferable utility model with unobserved heterogeneity.} Their framework assumes that i.i.d.\ unobserved heterogeneity follows the extreme value type I (Gumbel) distribution, under which the systemic match surplus is identified by logit formulae. Several extensions remain within this discrete-type framework: \citet{galichon2022cupid} generalize the distributional assumption on unobservables beyond the logit specification and establish identification of the social surplus, while \cite{gualdani2023partial} show partial identification of the systemic surplus under nonparametric assumptions on the error distribution.

\citet{dupuy2014} extend the CS framework to continuous types while maintaining the i.i.d.\ Gumbel structure for unobserved heterogeneity, formulating the matching problem as an entropy-regularized optimal transport problem. \citet{bojilov2016} specialize the \citeauthor{dupuy2014} framework by imposing the quadratic-Gaussian structure---a bilinear surplus function combined with jointly normal attributes---and derive closed-form solutions for the equilibrium matching distribution. Taking a fundamentally different approach, \citet{lindenlaub2017} formulates a deterministic matching model without unobserved heterogeneity in the surplus, so the equilibrium assignment is governed by a standard Monge-Kantorovich optimal transport problem.\footnote{Alternatively, \citet{lise2020} consider a search-theoretic model in which workers are matched to firms in a dynamic setup.} Although \citeauthor{lindenlaub2017} also employs the quadratic-Gaussian structure to obtain closed-form solutions, the absence of unobserved heterogeneity yields a purely deterministic assignment rather than the stochastic matching patterns in the CS--\citeauthor{dupuy2014}--\citeauthor{bojilov2016} tradition.

All these approaches share the separability assumption that unobserved heterogeneity (or, in our case, measurement error) does not interact with observed characteristics in generating the surplus. However, the nature of unobservables differs fundamentally across frameworks. In the CS tradition (including \citeauthor{dupuy2014} and \citeauthor{bojilov2016}), unobserved heterogeneity enters the surplus function directly and determines stochastic matching patterns; the systemic surplus is identified from matching patterns---i.e., the joint distribution of matched characteristics---via logit inversion or its generalizations, and individual-level wage data are not required. In contrast, our framework and \citeauthor{lindenlaub2017}'s introduce measurement errors that are additive and separable, capturing discrepancies between the deterministic equilibrium predictions and observed data. Our econometric model explicitly exploits observed wages alongside matching patterns: the equilibrium wage function enters as a directly estimable object, which strengthens identification and estimation of the technology parameters.

Our paper extends \citeauthor{lindenlaub2017}'s deterministic matching framework by dispensing with distributional assumptions on the characteristics, thereby offering a more flexible and robust econometric framework for multidimensional matching. We propose efficient econometric procedures that jointly estimate both finite-dimensional parameters and infinite-dimensional functions using conditional moments implied by the model equilibrium, leveraging the sieve M-estimation literature.\footnote{\citet{shen1997} establishes asymptotic properties of smooth functionals of sieve MLE. \citet{newey2003}, \citet{ai2003, ai2007}, and \citet{blundell2007} propose efficient sieve IV and sieve minimum distance (SMD) estimators. \citet{chen2009} further show that the SMD estimator under proper penalization is consistent and efficient when residuals are potentially nonsmooth. \citet{chen2007} provides an extensive overview of sieve estimation of semi-nonparametric models.} In particular, we employ the sieve GLS estimator proposed in \citet{chen2007} for our most flexible model specification. This estimator is efficient and computationally simpler than the SMD estimator.
 
We establish the convergence, efficiency, and asymptotic normality of our sieve estimators relying on the smoothness of unknown nonparametric components (optimal transport maps). This smoothness condition can be verified by applying the results in the mathematical literature on optimal transport maps. \citet{caffarelli1992CPAM,caffarelli1992JAMS,caffarelli1996} show the smoothness of transport maps when the distributions of characteristics on both sides are compactly supported. \citet{cordero-erausquin2019} further extend the earlier results to cases where the distributions may have unbounded supports. The degree of smoothness of a transport map depends on how smooth the densities are.

\section{Multidimensional matching with a bilinear surplus}\label{sec: Matchingmodel}

We consider an environment where every worker with a bundle of skills sorts into a job demanding specific combinations of those skills. Let $\mathcal{X}\subset\mathbb{R}^{d}$ and $\mathcal{Y}\subset\mathbb{R}^{d}$ be spaces of worker and job characteristics endowed with probability measures $P$ and $Q$, respectively. Workers and jobs are described by the corresponding vectors of characteristics $x\in\mathcal{X}$ and $y\in\mathcal{Y}$. Every matched pair produces output according to production technology $s\left(x,y\right)$, and the surplus is shared through a competitive process that determines equilibrium wages $w^{*}(x)$ and profits $v^{*}(y)$. The equilibrium is characterized by the Walrasian stability condition: no worker-job pair can do better by deviating. As is well known in the optimal transport literature, under regularity conditions the stable matching can be formulated as a Monge-Kantorovich optimal transport problem whose solution determines a unique equilibrium assignment map $T^{*}:\mathcal{X}\to\mathcal{Y}$, mapping each worker $x$ to her equilibrium job $y=T^{*}(x)$, and whose dual yields the equilibrium wage and profit functions.\footnote{For the general theory, see \citet{villani2003, villani2008}, \cite{chiappori2010hedonic}, and \cite{galichon2017survey}. For reference on matching with transferable utility, see \cite{browning2014} and \cite{chiappori2017}. The bilinear surplus specification used in this paper dates back to at least \cite{tinbergen1956theory}, and has been employed in \citet{bojilov2016}, \citet{lindenlaub2017}, and the recent working paper by \cite{boerma2025}.}

We specify a bilinear surplus function, as in \cite{tinbergen1956theory}, \cite{bojilov2016}, and \cite{lindenlaub2017}:
\begin{equation}\label{eq: bilinear production function}
    s\left(x,y\right):=s\left(x,y;A,b\right)
    =x'Ay+x'b,
\end{equation}
where $A$ is a $d\times d$ matrix and $b$ is a $d\times1$ vector. The diagonal elements of $A$ capture within-task complementarities and the off-diagonal elements indicate between-task complementarities. $x'b$ represents non-interaction skill terms.\footnote{In the transferable utility framework with one-to-one matching, a linear term in $y$ can be absorbed into the firm-side profit function without affecting equilibrium matching or wages, so it is normalized to zero for identification and parsimony.} We keep the parametric surplus specification for two primary reasons: i) interpretability and comparability; and ii) identification power. The quadratic surplus provides a direct economic interpretation of the technology parameters (in $A$) as measures of production complementarity. As \cite{bojilov2016} advocated, the quadratic surplus is widely used in applied studies and provides simple, intuitive, and meaningful interpretations of worker-job interactions.\footnote{\cite{lindenlaub2025worker} also employ a similar bilinear surplus specification when they empirically implement their theoretical model with a general surplus, referring to the bilinear specification as a second-order Taylor series approximation.}

Matching assortativity depends crucially on the properties of the surplus function. To fix ideas, consider \citet{becker1973}'s spousal matching model where men and women are endowed with ``ability indices'' $x$ and $y$, respectively. If $\partial^{2}s\left(x,y\right)/\partial x\partial y\geq0$, then $T^{*}\left(x\right)=F_{y}^{-1}\left(F_{x}\left(x\right)\right)$ where $F_{x}$ and $F_{y}$ are the cumulative distribution functions of $x$ and $y$. $T^{*}$ having this property is defined as positive assortative matching (PAM) in the sense that high-type males match high-type females. Negative assortative matching (NAM) is the opposite. 

The properties of $A$ are pivotal to the assortativity of the equilibrium assignment in our specification. $T^{*}$ satisfies PAM if $A$ is diagonal with all positive principal minors (Proposition 2 of \citet{lindenlaub2017}). The assignment is unaffected by non-interaction terms because $\mathbb{E}_{\pi}\left[X'AY+X'b\right]=\mathbb{E}_{\pi}\left[X'AY\right]+\mathbb{E}_{P}\left[X'b\right]$ and the latter does not depend on $\pi$. Therefore, the dual optimal transport problem with $s\left(x,y\right)$ can be rewritten in terms of $s^{o}\left(x,y\right)=x'Ay$ as follows:
\begin{equation}\label{eq: dual-matching2}
\begin{split}
    &\inf_{w\in\mathcal{W}}\left\{\mathbb{E}_{P}\left[w\left(X\right)\right]
    +\mathbb{E}_{Q}\left[\sup_{x\in\mathcal{X}}
        \left\{s\left(x,Y\right)-w\left(x\right)\right\}\right]\right\}\\
    &=\inf_{w^{o}\in\mathcal{W}}\left\{\mathbb{E}_{P}\left[w^{o}\left(X\right)\right]
    +\mathbb{E}_{Q}\left[\sup_{x\in\mathcal{X}}
                            \left\{s^{o}\left(x,Y\right)-w^{o}\left(x\right)\right\}\right]\right\}
    +\mathbb{E}_{P}\left[X\right]'b.
\end{split}\end{equation}
Note that the solution $w^{*}$ to the problem with the original $s$ is obtained by $w^*\left(x\right)=w^{o*}\left(x\right)+x'b+c$ with any constant $c$ where $w^{o*}$ is the solution to the problem with $s^{o}$.

We now consider the case where the attributes of workers and firms are two-dimensional ($d=2$), tailoring the model to our data. Every worker is endowed with a bundle of cognitive and manual skills,
$x=\left(x_{C},x_{M}\right)$. In turn, each firm is endowed
with both cognitive and manual skill requirements, $y=\left(y_{C},y_{M}\right)$. With $A=\left(\left(\alpha_{CC},\alpha_{MC}\right)',\left(\alpha_{CM},\alpha_{MM}\right)'\right)$ and $b=\left(\beta_{C},\beta_{M}\right)'$, define $\delta:=\frac{\alpha_{MM}}{\alpha_{CC}}$, which represents the relative level of complementarities across cognitive and manual tasks. When both $\alpha_{CC}$ and $\alpha_{MM}$ are positive, $\delta < 1$ indicates that worker-job complementarity in the cognitive task is stronger than in the manual task.

Under joint normality of $x$ and $y$, \citet{lindenlaub2017} and \citet{bojilov2016} derive closed-form solutions for $T^{*}$ and $w^{*}$ and estimate the production technology parameters. In practice, however, $x$ and $y$ are non-normal as natural skills tend to have skewed distributions. To align the data with the model, \cite{lindenlaub2017} converts each element of $x$ and $y$ into standard normal. Their dependence is then modeled using a Gaussian copula. Figure \ref{fig: lindenlaub} illustrates how she derives the equilibrium assignment and wage function from transformed data. If the transformed data ($\tilde{x}$ and $\tilde{y}$) are bivariate normal, this transformation provides a way to study properties of matching in terms of technology parameters. However, as the equilibrium assignments are based on the original data, not the transformed data, $T^*(x)$ (optimal assignment map from the original data) and $\tilde{T}^*(x)$ (optimal assignment from transformed data) in Figure \ref{fig: lindenlaub} may not deliver the same empirical implications. Moreover, the joint distribution of $\tilde{x}$ and $\tilde{y}$ is not in general normal, and hence, the model can be misspecified even after transforming the data. To avoid such misspecification, we allow $x$ and $y$ to have any arbitrary distributions in our model.

\begin{figure}[tbh]
\begin{center}
\begin{tikzpicture}
\node at (0,0) (X) {$x\sim P$};
\node at (0,-3) (Y) {$y\sim Q$};
\node at (5,0) (X1) {$\begin{array}{l}\tilde{x}_{C}=\Phi^{-1}\left(F_{x_{C}}\left(x_{C}\right)\right)\sim{\rm N}\left(0,1\right)\\ \tilde{x}_{M}=\Phi^{-1}\left(F_{x_{M}}\left(x_{M}\right)\right)\sim{\rm N}\left(0,1\right)\end{array}$};
\node at (5,-3) (Y1) {$\begin{array}{l}\tilde{y}_{C}=\Phi^{-1}\left(F_{y_{C}}\left(y_{C}\right)\right)\sim{\rm N}\left(0,1\right)\\ \tilde{y}_{M}=\Phi^{-1}\left(F_{y_{M}}\left(y_{M}\right)\right)\sim{\rm N}\left(0,1\right)\end{array}$};
\node at (11.5,0) (X2) {$\begin{pmatrix}\tilde{x}_{C}\\ \tilde{x}_{M}\end{pmatrix}\sim{\rm N}\left(0,\begin{pmatrix}1 & \rho_{\tilde{x}}\\ \rho_{\tilde{x}} & 1\end{pmatrix}\right)$};
\node at (11.5,-3) (Y2) {$\begin{pmatrix}\tilde{y}_{C}\\ \tilde{y}_{M}\end{pmatrix}\sim{\rm N}\left(0,\begin{pmatrix}1 & \rho_{\tilde{y}}\\ \rho_{\tilde{y}} & 1\end{pmatrix}\right)$};
\draw [->] (X) -- (Y);
\draw [->] (X) -- (X1);
\draw [->] (X1) -- (X2);
\draw [->] (Y) -- (Y1);
\draw [->] (Y1) -- (Y2);
\draw [->] (X2) -- (Y2);
\node [above] at (8.5,0) {?};
\node [above] at (8.5,-3) {?};
\node [right] at (0,-1.5) {$T^{*}\left(x\right)$};
\node [left] at (11.5,-1.5) {$\tilde{T}^{*}\left(\tilde{x}\right)$};
\end{tikzpicture}
\end{center}
\caption{\label{fig: lindenlaub}\citet{lindenlaub2017}'s transformation. This figure should be read from left to right: starting from arbitrary marginals for $x$ and $y$, transformations are applied to obtain standard normal marginals, which are then associated by the Gaussian copula. The true dependence structure after transformations can differ from the Gaussian copula. $T^*(x)$ and $\tilde{T}^*(x)$ denote the optimal assignment maps in the original problem and in the transformed data, respectively. $\Phi$ denotes the standard normal c.d.f. $F_{x_{C}}$ and $F_{x_{M}}$ denote the c.d.f. for $x_{C}$ and $x_{M}$, respectively.}
\end{figure}

We further impose conditions on the probability measures $P$ and $Q$, as well as on $A$.
\begin{assumption}\label{assu:Px}
(i) $P$ and $Q$ have finite second moments, and (ii) $P$ is absolutely continuous with respect to the Lebesgue measure.
\end{assumption}
\begin{assumption}\label{assu:bilinearID}
The matrix $A$ in the production technology \eqref{eq: bilinear production function} is invertible.
\end{assumption}
\noindent The following proposition derives the equilibrium assignment and wage as the solution to the dual optimal transport problem \eqref{eq: dual-matching2}. This result builds on the characterization of optimal transport maps in \citet{villani2003, villani2008}. In the special case of jointly normally distributed characteristics, closed-form solutions are derived in \citet{lindenlaub2017} and (with entropy regularization) in \citet{bojilov2016}. Our contribution here is to state the result for general distributions under the bilinear surplus specification and to show that the equilibrium assignment is uniquely recovered from the gradient of the convex wage function.

\begin{prop}
\label{Prop:Matchingiden}Let Assumption \ref{assu:Px} hold. Then, there exists a unique (up to a constant) convex solution, $w^{o*}\left(x\right)$, to the second dual problem in \eqref{eq: dual-matching2}, and the equilibrium wage ($w^{*}$) and assignment ($y^{*}\left(x\right)=T^{*}\left(x\right)$ where $y^{*}\left(x\right)=(y_{1}^{*}\left(x\right),\ldots,y_{d}^{*}\left(x\right))'$) are given by
$$w^{*}\left(x\right)=w^{o*}\left(x\right)+x'b+c,\quad Ay^*\left(x\right)=\nabla w^{o*}\left(x\right),$$
where $c$ is the constant of integration. In addition, if Assumption \ref{assu:bilinearID} holds, the unique equilibrium assignment function is given by $y^*\left(x\right)=A^{-1}\nabla w^{o*}\left(x\right)$.
\end{prop}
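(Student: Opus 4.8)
The plan is to reduce the bilinear optimal transport problem to the canonical inner-product (Brenier) problem by a linear change of variables, and then to read off all three conclusions from Brenier's polar factorization theorem. First I would introduce the pushforward measure $\tilde{Q}:=A_{\#}Q$, i.e.\ the law of $\tilde{Y}:=AY$ when $Y\sim Q$. Since $Q$ has a finite second moment and $A$ has finite entries, $\mathbb{E}_{\tilde{Q}}[\|\tilde{Y}\|^{2}]=\mathbb{E}_{Q}[\|AY\|^{2}]\leq\|A\|^{2}\,\mathbb{E}_{Q}[\|Y\|^{2}]<\infty$, so $\tilde{Q}$ also has a finite second moment. Under this change of variables the surplus factors as $s^{o}(x,y)=x'Ay=\langle x,\tilde{y}\rangle$, and the inner conjugate in the second dual problem of \eqref{eq: dual-matching2} becomes an ordinary Legendre transform, $\sup_{x\in\mathcal{X}}\{x'AY-w^{o}(x)\}=(w^{o})^{*}(AY)$. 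Hence the second dual problem is exactly the Kantorovich dual of the canonical transport problem between $P$ and $\tilde{Q}$ with cost $-\langle x,\tilde{y}\rangle$.

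Second, I would invoke Brenier's theorem (the inner-product special case of Theorem 10.28 in \citet{villani2008}), which applies under Assumption \ref{assu:Px}: $P$ is absolutely continuous with finite second moment and $\tilde{Q}$ has finite second moment. This yields a convex potential $w^{o*}$, unique up to an additive constant, that solves the dual, and whose gradient $\nabla w^{o*}$ is the $P$-a.e.\ unique optimal transport map pushing $P$ onto $\tilde{Q}$. This simultaneously delivers existence, uniqueness up to a constant, and convexity of $w^{o*}$, which is the first claim.

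Third, I would translate the map back to the original coordinates. Since $\nabla w^{o*}$ pushes $P$ to $\tilde{Q}=A_{\#}Q$, the optimal assignment $Y^{*}=T^{*}(X)$ satisfies $\nabla w^{o*}(X)=\tilde{Y}^{*}=AY^{*}$ $P$-a.s., which is precisely the stated relation $A(y_{1}^{*}(x),\dots,y_{d}^{*}(x))'=\nabla w^{o*}(x)$; equivalently this is what Proposition \ref{prop:MatchingEq} gives, since $\nabla_{x}s^{o}(x,y)=Ay$. The wage decomposition $w^{*}(x)=w^{o*}(x)+x'b+c$ is then immediate from the splitting of the objective already recorded in \eqref{eq: dual-matching2}. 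Finally, under Assumption \ref{assu:bilinearID} I left-multiply by $A^{-1}$ to obtain $T^{*}(x)=A^{-1}\nabla w^{o*}(x)$.

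The main obstacle --- really the only nontrivial point --- is the bookkeeping around the linear change of variables and the precise role of invertibility. The existence, convexity, and uniqueness (up to a constant) of $w^{o*}$ require only that $P$ be absolutely continuous with finite second moments, and they hold whether or not $A$ is invertible, because Brenier's theorem imposes no regularity on the target $\tilde{Q}$. Invertibility of $A$ (Assumption \ref{assu:bilinearID}) enters only at the last step: without it the relation $AY^{*}=\nabla w^{o*}(X)$ pins the assignment down only modulo $\ker A$ (equivalently, the twist condition in Assumption \ref{assu:den1}(ii), $Ay_{1}\neq Ay_{2}$, fails), whereas with $A^{-1}$ in hand the assignment $T^{*}$ is uniquely determined. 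I would therefore take care to state the first part under Assumption \ref{assu:Px} alone and reserve Assumption \ref{assu:bilinearID} for the explicit inversion.
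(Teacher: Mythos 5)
Your proposal is correct and follows essentially the same route as the paper: both reduce the bilinear cost to the inner-product case via the change of variables $\tilde{y}=Ay$ (equivalently $\tilde{Q}=A_{\#}Q$) and invoke Brenier's theorem (Theorem 2.12 in \citet{villani2003}) for existence, convexity, and uniqueness up to a constant of $w^{o*}$, reserving invertibility of $A$ only for the final inversion. The only cosmetic difference is that the paper obtains $w^{*}=w^{o*}+x'b+c$ by integrating the firm's first-order condition, whereas you read it off the additive splitting of the dual objective in \eqref{eq: dual-matching2}; your write-up is in fact more explicit about the pushforward bookkeeping and the finite-second-moment check than the paper's own proof.
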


\noindent We can interpret this problem as assigning from $\mathcal{X}$ to $A\mathcal{Y}:=\left\{Ay:y\in\mathcal{Y}\right\}$. Denoting $z =Ay$, the surplus function can be written as $s(x,z)=x'z$, which satisfies the twist condition, regardless of the invertibility of $A$. Hence, Assumption \ref{assu:Px} alone guarantees the existence of the convex solution, $w^{o*}\left(x\right)$, which implicitly depends on $A$. Assumption \ref{assu:bilinearID} is required to ensure the unique one-to-one mapping between $x$ and $y$. 

\section{\label{sec: MatchingModelID}Econometric model and identification}

This section describes the econometric model using the theoretical results in the previous section. Let $\left\{\left(w_{i},x_{i}',y_{i}'\right)\right\}_{i=1}^{n}$ represent an i.i.d. sequence of $n$ matched observations on worker $i$'s wage $w_{i}$, her bundle of skills $x_{i}$, and the matched job's skill demands $y_{i}$. As Proposition \ref{Prop:Matchingiden} provides the equilibrium mapping from $x$ to $w$ and $y$ that is interpreted as Walrasian equilibrium, we can write down a system of equations for $w_i$ and $y_i$ in terms of $x_i.$ Pure Monge-Kantorovich optimal transport models (without unobserved heterogeneity) typically predict deterministic assignments between agent characteristics. To apply the model in empirical analysis, the models are typically regularized by introducing unobserved heterogeneity, search frictions, or measurement errors. Here we introduce measurement errors in the equilibrium wage and assignment functions to keep the econometric model in line with optimal transport theory. Then, we show that the equilibrium functions and technology parameters in $A$ are identified from the following nonparametric regression system.
\begin{equation}\label{eq:OTmodel}\begin{split}
    w_{i}&=w^{*}\left(x_{i}\right)+\varepsilon_{wi}
    =w^{o*}\left(x_{i}\right)+x_{i}'b+c+\varepsilon_{wi},\\
    y_{i}&=y_{i}^{*}(x_i)+\varepsilon_{yi}
    =A^{-1}\nabla w^{o*}\left(x_{i}\right)+\varepsilon_{yi}.
\end{split}\end{equation}
Here, $\varepsilon_{wi}$ is a scalar measurement error in the observed wage. $\varepsilon_{yi}$ is a $d\times1$ vector of measurement errors in the firm's skill demands. 

In the empirical application, the introduction of measurement errors can be motivated by the construction of skill measures from data. \citet{sanders2014}, \citet{lindenlaub2017}, and \citet{lise2020} use the U.S. Department of Labor Occupational Characteristics Database (O*NET) to determine the levels of skills required for each categorical task. O*NET data provide rich information (more than 270 descriptors) on skill requirements for a large number of occupations. Measurement errors could arise in three ways. First, researchers conventionally classify the descriptors into predetermined skill categories, e.g., ``cognitive'' and ``manual''. However, this decision may be far from clear-cut for many descriptors. Second, the descriptors are aggregated within each category using principal component analysis. This procedure inevitably produces measurement errors even if the descriptors are correctly classified. Lastly, there may be unobserved factors not included in O*NET. As discussed in the Related Literature, unobserved heterogeneity also arises in other matching frameworks, but our additive and separable measurement errors are fundamentally different from the surplus-entering unobservables in the CS tradition.

The closed-form solutions in the Gaussian model involve the productivity correlation $\rho_{y^*} = \mathrm{corr}(y^*_C, y^*_M)$, which is the correlation of the latent equilibrium skill demands. However, the econometrician only observes $y = y^* + \varepsilon_y$, so the observed correlation $\mathrm{corr}(y_C, y_M)$ generally differs from $\rho_{y^*}$ due to attenuation from measurement error. In the Gaussian ML implementation of \citet{lindenlaub2017}, the observed correlation is used in place of the productivity correlation, which introduces a discrepancy in the likelihood function. Our sieve-based methods do not rely on the closed-form solution under bivariate normality, thereby avoiding this difficulty.

Unlike \cite{lindenlaub2017}'s approach, we do not need to impose distributional assumptions on measurement errors---assumptions that are vulnerable to misspecification. The moment conditions \eqref{eq:OTexoX} below correspond to what is usually referred to as \textit{weakly classical measurement error} in the terminology of \cite{schennach2016}. We impose the following conditions assuming the exogeneity of $x_i$:
\begin{equation}\label{eq:OTexoX}
    \mathbb{E}\left[\varepsilon_{wi}|x_{i}\right]=0,\quad
    \mathbb{E}\left[\varepsilon_{yi}|x_{i}\right]=0.
\end{equation}
Let $\theta=\left(\text{vec}\left(A^{-1}\right)',b'\right)'$
denote a vector of unknown finite-dimensional parameters and $\theta\in\Theta$
where $\Theta$ is a compact subset of $\mathbb{R}^{d^{2}+d}$. The normalizing constant is not in our parameters of interest, and henceforth, we
refer to $w\left(x\right):=w^{o*}\left(x\right)+c$ as the constant added infinite-dimensional parameter. We denote
$z_{i}=\left(w_{i},x_{i}',y_{i}'\right)'$
and $\rho\left(z_{i};\theta,w\right)=\left(\rho_{w}\left(w_{i},x_{i};\theta,w\right),\rho_{y}\left(y_{i},x_{i};\theta,w\right)'\right)'$,
where 
\begin{equation*}
    \rho_{w}\left(w_{i},x_{i};\theta,w\right)=w_{i}-\left(w\left(x_{i}\right)+x_{i}'b\right),\quad
    \rho_{y}\left(y_{i},x_{i};\theta,w\right)=y_{i}-A^{-1}\nabla w\left(x_{i}\right).
\end{equation*}

For each observation $i$, the model \eqref{eq:OTmodel} satisfies the moment conditions \eqref{eq:OTexoX}. This implies that the following conditional moments hold:
\begin{equation}\label{eq:CondMom}
    \mathbb{E}\left[\rho\left(z_{i};\theta,w\right)|x_{i}\right]=0,
\end{equation}
at the true parameter $\left(\theta_{0},w_{0}\right)$. Then $\left(\theta_{0},w_{0}\right)$ are identified via the model \eqref{eq:CondMom} by Proposition \ref{Prop:Matchingiden} and the exogeneity of $x_i$ as well as the following assumption on $\mathcal{Y}$, which requires that the support of job attributes is not contained in a lower-dimensional affine subspace.

\begin{assumption}\label{assu:bilinearID2}
There exist $y_{1},\ldots,y_{d},y_{d+1}\in\mathcal{Y}$ such that $\left\{y_{1}-y_{2},\ldots,y_{d}-y_{d+1}\right\}$ is linearly independent.
\end{assumption}

Proposition \ref{Prop:Matchingiden} implies the existence of a deterministic equilibrium characterized by a unique convex function $w_{0}$. When there is no non-interaction term $(b_{0} = 0)$, it follows that $\nabla w_{0}^{*} = \nabla w_{0}$. The strict convexity of $w_{0}$ further implies that $\mathbb{E}\left[\nabla w_{0}\left(x_{i}\right)\nabla w_{0}\left(x_{i}\right)^{\prime}\right]$ has a full rank, thus identifying $A_{0}$. Additionally, Assumption \ref{assu:bilinearID2} is sufficient to identify the nonzero vector $b_{0}$, as stated in the following theorem.

\begin{theorem}\label{thm:MatchingID}
Let Assumptions \ref{assu:Px}-\ref{assu:bilinearID2} hold and the moment conditions \eqref{eq:CondMom} be satisfied. Then, $\theta_{0}$ and $w_{0}=w_{0}^{o*}+c_{0}$ are identified.
\end{theorem}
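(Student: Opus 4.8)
The plan is to establish identification in its definitional sense: any pair $(\theta, w)$ satisfying the conditional moment restrictions \eqref{eq:CondMom} must equal $(\theta_0, w_0)$. Write the matrix and vector parts of the candidate $\theta$ as $A^{-1}$ and $b$. Since both the truth $(\theta_0, w_0)$ and the candidate $(\theta, w)$ make the conditional expectation of $\rho(z_i;\cdot)$ given $x_i$ vanish, subtracting the two implied representations of $\mathbb{E}[w_i \mid x_i]$ and of $\mathbb{E}[y_i \mid x_i]$ gives, for $P$-almost every $x$,
\[
w(x) + x'b = w_0(x) + x'b_0, \qquad A^{-1}\nabla w(x) = A_0^{-1}\nabla w_0(x).
\]

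First I would exploit the wage equation, which reads $w(x) - w_0(x) = x'(b_0 - b)$ for $P$-a.e.\ $x$. By Proposition \ref{Prop:Matchingiden} the true potential $w_0$ is convex, hence differentiable Lebesgue- (and, by Assumption \ref{assu:Px}(ii), $P$-) almost everywhere; since $w$ differs from $w_0$ by an affine function it is differentiable at the same points, with $\nabla w(x) = \nabla w_0(x) + (b_0 - b)$ $P$-a.e. Substituting this into the assignment equation collapses the two unknowns into a single affine constraint on the gradient,
\[
\bigl(A_0^{-1} - A^{-1}\bigr)\nabla w_0(x) = A^{-1}(b_0 - b) \quad\text{for } P\text{-a.e. } x.
\]
Writing $C := A_0^{-1} - A^{-1}$ and $v := A^{-1}(b_0 - b)$, the entire problem reduces to showing that $C\,\nabla w_0(x) = v$ holding $P$-a.e.\ forces $C = 0$ and $v = 0$.

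The crux is that $\nabla w_0(X)$ cannot be confined to a proper affine subspace. Here I would invoke the optimal-transport structure: by Proposition \ref{Prop:Matchingiden} the map $T^* = A_0^{-1}\nabla w_0$ pushes $P$ forward to $Q$, so $\nabla w_0(X)$ has the law of $A_0 Y$ with $Y \sim Q$ and support $A_0\mathcal{Y}$. Because $\{g : Cg = v\}$ is closed and carries full probability under this law, it must contain $A_0\mathcal{Y}$. Assumption \ref{assu:bilinearID2} supplies points $y_1,\ldots,y_{d+1} \in \mathcal{Y}$ with $\{y_1 - y_2,\ldots,y_d - y_{d+1}\}$ linearly independent; since $A_0$ is invertible (Assumption \ref{assu:bilinearID}), the images $A_0 y_1,\ldots,A_0 y_{d+1}$ are affinely independent points of $A_0\mathcal{Y}$, whence $C(A_0 y_j) = v$ for every $j$. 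Differencing gives $CA_0(y_j - y_{j+1}) = 0$ for $j = 1,\ldots,d$, and as these differences span $\mathbb{R}^d$ we obtain $CA_0 = 0$, hence $C = 0$; plugging back, $v = C A_0 y_1 = 0$. This translation of a $P$-almost-everywhere identity into exact constraints at the prescribed support points of $Q$ is the step I expect to demand the most care, since it is where the pushforward property and the support condition \ref{assu:bilinearID2} do the real work. Equivalently, one could argue that strict convexity of $w_0$ makes $\mathbb{E}[\nabla w_0(X)\nabla w_0(X)']$ (or its centered version) full rank and kill $C$ directly, with Assumption \ref{assu:bilinearID2} supplying the non-degeneracy of $Q$ that underlies this.

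It then remains to read off the parameters. From $C = 0$ we get $A^{-1} = A_0^{-1}$, i.e.\ $A = A_0$; as $A^{-1}$ is then invertible, $v = A^{-1}(b_0 - b) = 0$ forces $b = b_0$. Finally the wage equation reduces to $w(x) = w_0(x)$ for $P$-a.e.\ $x$, so $(\theta, w) = (\theta_0, w_0)$ and identification follows.
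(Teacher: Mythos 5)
Your proof is correct and follows essentially the same route as the paper's: identify the conditional means of $w_i$ and $y_i$ given $x_i$, then use the pushforward structure of $\nabla w_0$ together with the $d+1$ points of Assumption \ref{assu:bilinearID2} (whose differences are linearly independent and remain so under the invertible $A_0$) to pin down $A_0$ by differencing, and hence $b_0$ and $w_0$. Your observational-equivalence framing and the explicit justification that the $P$-a.e.\ identity extends to the prescribed support points of $Q$ merely make explicit steps the paper leaves implicit.
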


We can identify $w_{0}^{o*}$ and $c_{0}$ separately under a normalization such as
$w_{0}^{o*}\left(x_{0}\right)=0$ for some $x_{0}\in\mathcal{X}$ or $\int_\mathcal{X}w_{0}^{o*}(x)dx=0$ 
when $\mathcal{X}$ is bounded. With the former constraint, $c_{0}$ and $w_{0}^{o*}\left(x\right)$ are 
identified with $w_{0}^{*}\left(x\right)=w_{0}^{o*}\left(x\right)+x'b$ since
$c_{0}=w_{0}^{o*}\left(x_{0}\right)+c_{0}=w_{0}^{*}\left(x_{0}\right)-x_{0}'b$.

\begin{remark}
    Our framework can be extended to allow for endogenous \(x_i\) using nonparametric instrumental-variables (NPIV) methods. Specifically, if one has instruments \(z_i\) satisfying \(E[\varepsilon_{wi}|z_i]=E[\varepsilon_{yi}|z_i]=0\), the equilibrium wage and assignment functions are identified from moment conditions of the form
  \[
    E\big[(w_i - w^\ast(x_i))\phi(z_i)\big]=0, \quad
    E\big[(y_i - y^\ast(x_i))\phi(z_i)\big]=0
  \]
   for a family of test functions \(\phi(\cdot)\), under standard completeness conditions. In this case, the technology parameters and the equilibrium wage function can be estimated by NPIV estimators, e.g., \cite{blundell2007}. NPIV estimation is in general an ill-posed problem. However, in our model, the equilibrium wage function is convex, so the finite sample properties can be improved by imposing a shape restriction as shown in \cite{chetverikov2017nonparametric}.
\end{remark}

\begin{remark}\label{remark:unequal}
Our framework extends to the case where the two sides of the market have different numbers of observable characteristics ($d_x\neq d_y$). Under the bilinear surplus, the matching problem is equivalent to optimally assigning $x$ to $Ay\in\mathbb{R}^{d_x}$, regardless of the dimension of $y$. This allows identification of the equilibrium wage function and the technology parameters without imposing $d_x=d_y$; see \cite{chiappori2017multi, chiappori2020multidimensional}, \cite{mccann2020optimal}, and \cite{nenna2023note} for general unequal-dimension optimal transport formulations, and Appendix~\ref{appen: unequal} for the formal identification result under our bilinear surplus specification. When $d_x \ge d_y$ and $A$ has rank $d_y$, the equilibrium assignment function $y^*(x) = (A'A)^{-1}A'\nabla w^{o*}(x)$ is also identified, and the matching between $x$ and $y$ is pure.
\end{remark}

\section{\label{sec:MatchingEst}Sieve-based semiparametric estimation}
The model parameters are identified by the semiparametric conditional moment restrictions \eqref{eq:CondMom}. If the function $w$ is parametrically specified, these moment conditions lead to standard GMM estimation. As $w$ is infinite-dimensional in our specification, we approximate it using sieves. The unknown function $w\in\mathcal{W}$ is approximated by $w_{n}\in\mathcal{W}_{n}$ where $\mathcal{W}_{n}$ is an approximating multivariate function space becoming dense in $\mathcal{W}$ as $n\rightarrow\infty$. We generate $\mathcal{W}_{n}$ via tensor-product construction. One could consider a kernel-based approximation for the equilibrium wage and assignment functions. However, using sieves (e.g., polynomials or splines) is particularly advantageous in our setting. First, the equilibrium assignment is determined by $y = \nabla w(x)$. Sieves allow us to compute this gradient analytically, so that the functional form information is easily incorporated in estimation. Second, the sieve objective functions are linear in parameters, so that obtaining the score and Hessian is relatively straightforward. Finally, our theory implies that $w$ is convex. The sieve framework accommodates global convexity constraints in a straightforward way, improving stability and computational efficiency.

From now on, we assume that there are sets of firms and workers with $d=2$. Every worker is endowed with a bundle of cognitive and manual skills, $x=\left(x_{C},x_{M}\right)$. In turn, each firm is endowed with both cognitive and manual skill demands, $y=\left(y_{C},y_{M}\right)$. $y_{C}$ ($y_{M}$) corresponds to the productivity or skill requirement of cognitive task $C$ (manual task $M$). In our case, 
{\small\begin{equation}\label{eq: funcspace}
    \mathcal{W}_{n}
    =\left\{w_{n}:\mathcal{X}\to\mathbb{R},
        w_{n}\left(x;\gamma\right)=\sum_{j_{C}=0}^{k_{Cn}}\sum_{j_{M}=0}^{k_{Mn}}
            \gamma_{j_{C}j_{M}}p_{j_{C}}\left(x_{C}\right)p_{j_{M}}\left(x_{M}\right),
            \gamma_{j_{C}j_{M}}\in\mathbb{R}\right\},
\end{equation}}%
where $\left\{p_{j_{C}}\left(x_{C}\right)\right\}_{j_{C}=0}^{k_{Cn}}$ and $\left\{p_{j_{M}}\left(x_{M}\right)\right\}_{j_{M}=0}^{k_{Mn}}$ are basis functions of $x_{C}$ and $x_{M}$. The tensor-product space is simple to extend to higher dimensions and easy to implement. For our second and third conditional moment restrictions, we approximate $\partial w_{0}\left(x\right)/\partial x_{C}$ and $\partial w_{0}\left(x\right)/\partial x_{M}$ with the same parameter values $\left\{\gamma_{j_{C}j_{M}}\right\}$ used to approximate $w_{0}\left(x\right)$ in $\mathcal{W}_{n}$:
\begin{equation*}\begin{split}
    \partial w_{n}\left(x;\gamma\right)/\partial x_{C}
            &=\sum_{j_{C}=0}^{k_{Cn}}\sum_{j_{M}=0}^{k_{Mn}}
            \gamma_{j_{C}j_{M}}\left(\partial p_{j_{C}}\left(x_{C}\right)/\partial x_{C}\right)
            p_{j_{M}}\left(x_{M}\right),\\
    \partial w_{n}\left(x;\gamma\right)/\partial x_{M}
            &=\sum_{j_{C}=0}^{k_{Cn}}\sum_{j_{M}=0}^{k_{Mn}}
            \gamma_{j_{C}j_{M}}p_{j_{C}}\left(x_{C}\right)
            \left(\partial p_{j_{M}}\left(x_{M}\right)/\partial x_{M}\right).
\end{split}\end{equation*}

We first consider the model \eqref{eq:OTmodel} with normally distributed, mean-zero measurement errors that are mutually uncorrelated. Then, we can estimate the parameters using sieve maximum likelihood (SML). Assuming $\varepsilon_{i}\sim N\left(0,\Sigma\right)$, we
write the log-likelihood function of model \eqref{eq:OTmodel} as
\[
    L^{*}\left(\theta,\Sigma,w\left(\cdot\right)\right)
    =-\frac{n}{2}\log\det\left(\Sigma\right)
    +\sum_{i=1}^{n}\log\left|\det\left(\partial\rho_{i}
                             /\partial\left(w_{i},y_{Ci},y_{Mi}\right)\right)\right|
    -\frac{1}{2}\sum_{i=1}^{n}\rho_{i}'\Sigma^{-1}\rho_{i},
\]
where $\rho_{i}=\rho\left(z_{i};\theta,w\right)$. Solving $\partial L^{*}/\partial\Sigma=0$ for $\Sigma$, we get $\Sigma=\frac{1}{n}\sum_{i=1}^{n}\rho_{i}\rho_{i}'$, which yields the concentrated log-likelihood function of our model
\begin{equation}\label{eq: SieveNLFI}
    L\left(\theta,w\left(\cdot\right)\right)
    =-\frac{n}{2}\log\det\left(\frac{1}{n}\sum_{i=1}^{n}\rho_{i}\rho_{i}'\right).
\end{equation}
The value of $\left(\theta,w\right)$ maximizing \eqref{eq: SieveNLFI} is the sieve nonlinear full information maximum likelihood estimator of $\left(\theta,w\right)$.


The normality assumption on measurement errors has no theoretical or empirical basis. Without any distributional assumptions on measurement errors, we can still estimate $\left(\theta,w\right)$ using several sieve M-estimators. As $\rho\left(z;\theta,w\right)-\rho\left(z;\theta_{0},w_{0}\right)$ does not depend on $y$ under Assumption \ref{assu:bilinearID}, we can apply the sieve generalized least squares (GLS) procedure \citep{chen2007} that minimizes the following objective function with respect to $(\theta, w)$:
\[
\min_{\left(\theta,w\right)}\sum_{i=1}^{n}\rho\left(z_{i};\theta,w\right)^{\prime}\left[\hat{\Sigma}_{0}\left(x_{i}\right)\right]^{-1}\rho\left(z_{i};\theta,w\right),
\]
where $\hat{\Sigma}_{0}\left(x\right)$ is a consistent estimator
of the optimal weighting matrix $\Sigma_{0}\left(x\right):={\rm Var}\left[\left.\rho\left(z_{i};\theta,w\right)\right|x_{i}=x\right].$
In addition, if $A$ is diagonal, we can rewrite the last two moment conditions as 
$\rho_{C}\left(y_{C},x;\kappa_{C},w\right)=y_{C}-\kappa_{C}\nabla_{C}w\left(x\right)$ and $\rho_{M}\left(y_{M},x;\kappa_{M},w\right)=y_{M}-\kappa_{M}\nabla_{M}w\left(x\right)$, where $\kappa_{C}=\alpha_{CC}^{-1}$ and $\kappa_{M}=\alpha_{MM}^{-1}$.
Table \ref{tab: SGLS algorithm} outlines the three-step procedure to compute the SGLS estimator. 

\begin{table}[ht!]
\caption{Three-step procedure for Sieve GLS estimation \citep{chen2007}}\label{tab: SGLS algorithm}
\noindent \centering{}%
\begin{tabular}{l}
\hline 
\textbf{\textsc{Algorithm:}} Computing the Sieve GLS Estimator
of $\theta$ and $w$\tabularnewline
\hline 
1. Obtain an initial consistent sieve LS estimator $\left(\tilde{\theta}_{n},\tilde{w}_{n}\right)$
by\tabularnewline
$\quad \ \min_{\left(\theta,w\right)}\sum_{i=1}^{n}\rho\left(z_{i};\theta,w\right)^{\prime}\rho\left(z_{i};\theta,w\right),$\tabularnewline
2. Obtain a consistent estimator $\hat{\Sigma}_{0}\left(x\right)$ of
$\Sigma_{0}\left(x\right)={\rm Var}\left[\left.\rho\left(z_{i};\theta,w\right)\right|x_{i}=x\right]$\tabularnewline
$\quad \ $using $\left(\tilde{\theta}_{n},\tilde{w}_{n}\right)$ and
sieve LS estimation.\tabularnewline
3. Obtain the optimally weighted sieve GLS estimator $\left(\hat{\theta}_{n},\hat{w}_{n}\right)$
by\tabularnewline
$\quad \ \min_{\left(\theta,w\right)}\sum_{i=1}^{n}\rho\left(z_{i};\theta,w\right)^{\prime}\left[\hat{\Sigma}_{0}\left(x_{i}\right)\right]^{-1}\rho\left(z_{i};\theta,w\right)$.\tabularnewline
\hline 
\end{tabular}
\end{table}

The SGLS estimator allows for arbitrary correlation between measurement errors and heteroskedasticity. We can impose homoskedasticity by assuming $\Sigma_{0}\left(x\right)=\Sigma_{0}$ so that the optimal weighting matrix does not vary with $x$. If we further assume that the measurement errors are uncorrelated, i.e., $\Sigma_0$ is diagonal, we can use the sieve least squares (SLS) estimator from step 1 of the three-step procedure. We summarize the key differences in assumptions imposed in different estimation procedures in Table \ref{tab:comparison-estimators}.

\begin{table}[ht!]
\caption{Comparison of key assumptions in estimation procedures}
\label{tab:comparison-estimators}
\centering
\begin{tabular}{lcccc}
\hline
\multirow{2}{*}{Assumptions} & \cite{lindenlaub2017} & \multicolumn{3}{c}{Sieve Estimators} \\
\cmidrule(lr){2-2} \cmidrule(lr){3-5}
 & ML & SML & SLS & SGLS \\
\hline
Joint normality of $x$ and $y$ & {\checkmark} & {} & {} & {} \\
Normality of measurement errors & {\checkmark} & {\checkmark} & {} & {} \\
Uncorrelated measurement errors & {\checkmark} & {\checkmark} & {} & {} \\
Homoskedasticity of measurement errors & {\checkmark} & {\checkmark} & {\checkmark} & {} \\
\hline
\end{tabular}
\end{table}

We implement the sieve estimators using finite-dimensional Bernstein polynomials to construct the approximating space $\mathcal{W}_{n}$ of $\mathcal{W}$ on $\left[0,1\right]^{2}$. The basis functions are
$p_{j_{C}}\left(x_{C}\right)
    =\binom{k_{C_{n}}}{j_{C}}\left(x_{C}\right)^{j_{C}}\left(1-x_{C}\right)^{k_{C_{n}}-j_{C}}$ and 
$p_{j_{M}}\left(x_{M}\right)
    =\binom{k_{M_{n}}}{j_{M}}\left(x_{M}\right)^{j_{M}}\left(1-x_{M}\right)^{k_{M_{n}}-j_{M}}$, where
$j_{C}=0,1,\ldots,k_{C_{n}}$, $j_{M}=0,1,\ldots,k_{M_{n}}$, and $\binom{k}{j}$ is a binomial coefficient.\footnote{$x$ does not lie in $\left[0,1\right]^{2}$ in many applications. To satisfy the domain restriction for our simulation studies and empirical application, we use the following linear transformation when $\left(x_{C},x_{M}\right)\in\left[\underline{x}_{C},\overline{x}_{C}\right]\times\left[\underline{x}_{M},\overline{x}_{M}\right]$: $p_{j_{C}}\left(x_{C}\right)
    =\binom{k_{C_{n}}}{j_{C}}x_{1}^{j_{C}}\left(1-x_{1}\right)^{k_{C_{n}}-j_{C}}$ and 
$p_{j_{M}}\left(x_{M}\right)
    =\binom{k_{M_{n}}}{j_{M}}x_{2}^{j_{M}}\left(1-x_{2}\right)^{k_{M_{n}}-j_{M}}$,
where $x_{1}=(x_{C}-\underline{x}_{C})/(\overline{x}_{C}-\underline{x}_{C})$ and $x_{2}=(x_{M}-\underline{x}_{M})/(\overline{x}_{M}-\underline{x}_{M})$.}

If $\gamma_{j_{C}j_{M}}=w\left(j_{C}/k_{C_{n}},j_{M}/k_{M_{n}}\right)$, the Bernstein polynomial $w_{n}\left(x;\gamma\right)$ converges uniformly to $w(x)$ by the Stone-Weierstrass approximation theorem (see, e.g., \citet{lorentz1986}). 
This provides an approach to imposing shape restrictions on the sieve estimator with a linear constraint which can be solved easily.\footnote{\citet{compiani2022market} uses linear constraints for the function $w$ to impose monotonicity restrictions and a so-called ``diagonal dominance'' constraint.} Without any constraint, the equilibrium wage function, $w\left(x\right)+x'b$, is unique and convex. To obtain a more stable estimator, without loss of generality, we impose linear constraints on the Bernstein polynomials, which are necessary for the function to be convex. A detailed description of implementing this convexity constraint in the estimation procedures is provided in Appendix \ref{appen: bernstein convexity}.

To understand technological changes in the production function, the parametric components of the model are of primary interest. The SGLS estimator is ideal in this case because for $\theta$ (i) it is easy to use, $\sqrt{n}$-consistent, and asymptotically normal; (ii) it is semiparametrically efficient; and (iii) the asymptotic variance estimator of $\hat{\theta}$ is consistent and easy-to-compute.\footnote{The sieve minimum distance estimator can be considered. However, when $\rho\left(z;\theta,w\right)-\rho\left(z;\theta_{0},w_{0}\right)$ does not depend on $y$, the SGLS estimator is simpler to implement and computationally faster.} We formally derive its asymptotic properties in the following section. 

\section{Asymptotic theory for the SGLS estimator}\label{sec: asymptotics}
We establish consistency, convergence rate, asymptotic normality, and semiparametric efficiency of our SGLS estimator using results in \cite{chen1998}, \cite{ai2003}, and \cite{chen2007}. Define $\lambda := (\theta, w(\cdot)).$ Let $\hat{\lambda}_n$ and $\lambda_0$ denote our sieve GLS estimator and the true parameter values, respectively. We first show that $\hat{\lambda}_{n}$ converges to $\lambda_{0}$ at a rate faster than $n^{-1/4}$ under a pseudo norm $\lVert\cdot\rVert$. For any $\lambda_{1}=\left(\theta_{1},w_{1}\left(\cdot\right)\right),\lambda_{2}=\left(\theta_{2},w_{2}\left(\cdot\right)\right)\in\Lambda$, $\lVert\cdot\rVert$ is defined as
\[
    \lVert\lambda_{1}-\lambda_{2}\rVert^{2}
    =\mathbb{E}\left[\left(\frac{d\rho\left(z_{i};\lambda_{0}\right)}{d\lambda}
                     \left[\lambda_{1}-\lambda_{2}\right]\right)'
                     \Sigma\left(x_{i}\right)^{-1}
                     \left(\frac{d\rho\left(z_{i};\lambda_{0}\right)}{d\lambda}
                     \left[\lambda_{1}-\lambda_{2}\right]\right)\right],
\]
where
\[
    \frac{d\rho\left(z;\lambda_{0}\right)}{d\lambda}
    \left[\lambda_{1}-\lambda_{2}\right]
    =\begin{pmatrix}
        w_{1}\left(x\right)-w_{2}\left(x\right)+x'\left(b_{1}-b_{2}\right) \\
        \nabla_{C}w_{0}\left(x\right)\left(\kappa_{C1}-\kappa_{C2}\right)
        +\kappa_{C0}\left(\nabla_{C}w_{1}\left(x\right)-\nabla_{C}w_{2}\left(x\right)\right)\\
        \nabla_{M}w_{0}\left(x\right)\left(\kappa_{M1}-\kappa_{M2}\right)
        +\kappa_{M0}\left(\nabla_{M}w_{1}\left(x\right)-\nabla_{M}w_{2}\left(x\right)\right)
    \end{pmatrix}.
\]
The pseudo metric is comparatively weaker than the standard sup or $L_{2}$ metric, wherein convergence of $\hat{\lambda}_n$ to $\lambda_0$ under the standard metric implies convergence under the pseudo metric. \cite{ai2003} show that $\hat{\lambda}_n$ converging at a rate faster than $n^{-1/4}$ under the weaker metric $||\cdot||$ suffices to derive the $\sqrt{n}$-asymptotic normality of the parametric component, $\hat{\theta}_n$. 

Let $\Lambda=\Theta\times\mathcal{W}$ be equipped with a norm $\lVert\lambda\rVert_{s}=\left|\theta\right|_{e}+\lVert w\rVert_{\infty}
+\lVert\nabla_{C}w\rVert_{\infty}+\lVert\nabla_{M}w\rVert_{\infty}$,
where $\left|\cdot\right|_{e}$ denotes the Euclidean norm and $\lVert w\rVert_{\infty}=\sup_{x\in\mathcal{X}}\left|w\left(x\right)\right|$ is the supremum norm. We introduce the H\"{o}lder class of functions. Let $\left[m\right]$ be the largest nonnegative integer such that $\left[m\right]<m$. A real-valued function $w$ on $\mathcal{X}$ is said to be in H\"{o}lder space $\Lambda^{m}\left(\mathcal{X}\right)$  if it is $\left[m\right]$ times continuously differentiable on $\mathcal{X}$ and
\[
    \max_{\ell_{1}+\ell_{2}\leq\left[m\right]}\sup_{x}
        \left|\frac{\partial^{\ell_{1}+\ell_{2}}w\left(x\right)}
                   {\partial x_{C}^{\ell_{1}}\partial x_{M}^{\ell_{2}}}\right|+
    \sup_{m_{1}+m_{2}=\left[m\right]}\sup_{x,x'}
    \left|\frac{\partial^{\left[m\right]}w\left(x\right)}{\partial x_{C}^{m_{1}}\partial x_{M}^{m_{2}}}
    -\frac{\partial^{\left[m\right]}w\left(x'\right)}{\partial x_{C}^{m_{1}}\partial x_{M}^{m_{2}}}\right|
    /\left|x-x'\right|_{e}^{m-\left[m\right]}
\]
is finite. We provide the following assumptions for convergence.

\begin{assumption}\label{assu:IID}
(i) $\left\{w_{i},y_{i}',x_{i}'\right\}_{i=1}^{n}$ are i.i.d.; (ii) $\mathcal{X}$ is compact and a Cartesian product of compact intervals $\mathcal{X}_{C}$ and $\mathcal{X}_{M}$.
\end{assumption}
\begin{assumption}\label{assu:Sigma}
$\Sigma\left(x\right)$ and $\Sigma_{0}\left(x\right)\equiv\mathrm{Var}\left(\rho\left(z_{i},\lambda_{0}\right)|x_{i}=x\right)$ are positive definite and bounded uniformly over $x\in\mathcal{X}$.
\end{assumption}
\begin{assumption}\label{assu:compact}
$\Lambda\equiv\Theta\times\mathcal{W}$ is compact under $\lVert\cdot\rVert_{s}$.
\end{assumption}
\begin{assumption}\label{assu:Holder}
(i) $w\in\Lambda^{m}\left(\mathcal{X}\right)$ with $m>2$; (ii) $\forall w\in\Lambda^{m}\left(\mathcal{X}\right), \exists w_{n}\left(x;\gamma\right)\in\mathcal{W}_{n}$ such that $\lVert w_{n}-w\rVert_{\infty}=O(\left(k_{Cn}k_{Mn}\right)^{-m/2})$ with $k_{Cn},k_{Mn}=O\left(n^{1/2\left(m+1\right)}\right)$.
\end{assumption}

Assumption \ref{assu:IID} requires the data to be i.i.d. and the support of $x$ to be compact. Assumption \ref{assu:Sigma} ensures the existence and invertibility of the weight matrix for GLS estimation. Assumption \ref{assu:compact} requires the parameter space to be compact under the sup norm. We do not explicitly require identification of $\lambda$ here as Assumptions \ref{assu:Px}--\ref{assu:bilinearID2} guarantee it by Theorem \ref{thm:MatchingID}. Assumption \ref{assu:Holder} quantifies the sieve approximation error. Most papers in the literature require $m>d_{\mathcal{X}}/2$, where $d_{\mathcal{X}}$ is the dimension of $\mathcal{X}$. However, our objective function involves $\nabla_{C}w\left(x\right)$ and $\nabla_{M}w\left(x\right)$, so we need a higher order of $m$. Note that the smoothness of $w$ can be verified using the theory of optimal transport. The degree of the smoothness of the solution function $w$ depends on how smooth the densities of $x$ and $y^{*}$ are.\footnote{Assuming the densities are bounded away from zero and infinity, if the densities of variables $x$ and $y^{*}$ belong to the space $\Lambda^{m-2}$, the function $w_{0}$ is a member of $\Lambda^{m}\left(\mathcal{X}\right)$. For a more comprehensive understanding, refer to \citet{caffarelli1992CPAM,caffarelli1992JAMS,caffarelli1996} which cover the case of compactly supported $\mathcal{X}$ and $\mathcal{Y}^{*}.$ \citet{cordero-erausquin2019} provides an extended result for distributions with unbounded supports.} The following proposition establishes the convergence rate of $\hat{\lambda}_n.$

\begin{prop}\label{prop:convrate}
    If Assumptions \ref{assu:Px}-\ref{assu:Holder} hold, then $\lVert\hat{\lambda}_{n}-\lambda_{0}\rVert=o_{p}\left(n^{-1/4}\right)$.
\end{prop}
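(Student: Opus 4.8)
The plan is to derive Proposition \ref{prop:convrate} by verifying the regularity conditions of the general convergence-rate theory for sieve extremum estimators in \citet{chen1998}, \citet{ai2003}, and \citet{chen2007}, and then balancing the induced bias and variance. The structural fact that makes this feasible, noted in Section \ref{sec:MatchingEst}, is that $g(x;\lambda):=\rho(z_i;\lambda)-\rho(z_i;\lambda_0)$ depends on the data only through $x_i$: under Assumption \ref{assu:bilinearID} the observed $w_i$ and $y_i$ cancel. Hence the SGLS problem reduces to a weighted nonlinear least-squares fit of the conditional-mean restriction \eqref{eq:CondMom} with no nonparametric first-stage operator to estimate, which lets me sidestep the ill-posedness machinery of general sieve minimum distance and work directly with the empirical criterion. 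I would organize the argument as consistency followed by rate, in the manner of \citet{ai2003}.

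First I would show $\lVert\hat\lambda_n-\lambda_0\rVert_s=o_p(1)$. Identification is supplied by Theorem \ref{thm:MatchingID} under Assumptions \ref{assu:Px}--\ref{assu:bilinearID2}, and $\Lambda$ is $\lVert\cdot\rVert_s$-compact by Assumption \ref{assu:compact}; what remains is a uniform law of large numbers for the weighted objective over the sieve. This follows because $\rho$ is continuous in $(\theta,w)$, the finite-dimensional Bernstein sieve has controlled complexity on the compact domain $\mathcal{X}$ (Assumption \ref{assu:IID}), and Assumption \ref{assu:Sigma} keeps the weighting matrices uniformly bounded and bounded away from singularity. The same two-sided bound makes the feasible weight $\hat\Sigma_0(x)$ interchangeable with $\Sigma_0(x)$ up to $o_p(1)$ once the Step-1 SLS estimator of Table \ref{tab: SGLS algorithm} is shown consistent, so the estimated weight does not disturb consistency.

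The heart of the proof is a local quadratic expansion identifying the criterion curvature with the pseudo-norm. Using $\mathbb{E}[\rho(z_i;\lambda_0)\mid x_i]=0$ and the fact that $g(x;\lambda)$ and $\Sigma(x)^{-1}$ are functions of $x$ alone, the population GLS criterion minus its value at $\lambda_0$ equals exactly $\mathbb{E}[g(x;\lambda)'\Sigma(x)^{-1}g(x;\lambda)]$, the cross term vanishing by the conditional-moment restriction. Because $\rho_y$ is bilinear in $(\kappa_C,\kappa_M)$ and $\nabla w$, $g(x;\lambda)$ equals the Gateaux derivative $\tfrac{d\rho}{d\lambda}[\lambda-\lambda_0]$ that defines $\lVert\cdot\rVert$ plus a remainder quadratic in the deviations $(\kappa-\kappa_0,\nabla w-\nabla w_0)$; on the shrinking $\lVert\cdot\rVert_s$-neighborhood furnished by the consistency step this remainder is of smaller order, so the criterion behaves like $\lVert\lambda-\lambda_0\rVert^2$. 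With this curvature in place, the rate lemma of \citet{ai2003} and \citet{chen2007} gives $\lVert\hat\lambda_n-\lambda_0\rVert=O_p(\max\{\eta_n,\lVert\Pi_n\lambda_0-\lambda_0\rVert\})$, with stochastic term $\eta_n\asymp\sqrt{k_{Cn}k_{Mn}/n}$; under the sieve-dimension rate of Assumption \ref{assu:Holder} this variance contribution is $o(n^{-1/4})$ for every $m>1$ and so is not binding.

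The main obstacle, which I expect to be the decisive part, is the bias term $\lVert\Pi_n\lambda_0-\lambda_0\rVert$, and this is precisely where the elevated smoothness $m>2$ (beyond the usual $m>d_{\mathcal{X}}/2=1$) is spent. Since $\theta$ is not sieved I can take $\Pi_n\lambda_0=(\theta_0,\Pi_n w_0)$, so the first coordinate of $\tfrac{d\rho}{d\lambda}[\Pi_n\lambda_0-\lambda_0]$ contributes only $\lVert\Pi_n w_0-w_0\rVert_\infty=O((k_{Cn}k_{Mn})^{-m/2})$, which is harmless. The difficulty is that the remaining coordinates of the pseudo-norm load on $\nabla_C w$ and $\nabla_M w$, so $\lVert\Pi_n\lambda_0-\lambda_0\rVert$ is also governed by $\lVert\nabla\Pi_n w_0-\nabla w_0\rVert$; and because the shared coefficients $\gamma$ force the derivative of the approximant to serve as the approximant of the gradient, I must invoke a simultaneous-approximation result for polynomials under which a single $w_n\in\mathcal{W}_n$ approximates $w_0$ and $\nabla w_0$ at once, with the gradient approximated one differentiation order more slowly. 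Feeding $\nabla w_0\in\Lambda^{m-1}$ with $m-1>1$ into this bound and choosing the sieve dimension to balance $\eta_n$ against this gradient approximation error — the binding contribution — the smoothness margin $m>2$ is what delivers $\lVert\Pi_n\lambda_0-\lambda_0\rVert=o(n^{-1/4})$; combined with the variance bound this yields $\lVert\hat\lambda_n-\lambda_0\rVert=o_p(n^{-1/4})$. The residual technical work, namely a maximal inequality over the linear sieve securing the stochastic equicontinuity behind $\eta_n$ and the verification that substituting $\hat\Sigma_0(x)$ for $\Sigma(x)$ does not inflate the rate, follows from Assumptions \ref{assu:Sigma} and \ref{assu:IID} together with the Step-1 consistency already established.
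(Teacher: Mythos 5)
Your proposal follows essentially the same route as the paper's proof: both verify the conditions of the sieve rate theorem (Theorem 3.2 in \citet{chen2007}, together with the rate lemma of \citet{ai2003}); both exploit the key structural fact that $\rho\left(z;\lambda\right)-\rho\left(z;\lambda_{0}\right)$ is a function of $x$ alone, so that the cross term with the residual vanishes under \eqref{eq:CondMom} and the population criterion curvature is equivalent, via Assumption \ref{assu:Sigma}, to $\lVert\lambda-\lambda_{0}\rVert^{2}$; and both then reduce the problem to a metric-entropy bound plus a deterministic approximation-error computation. Where you are more explicit than the paper is in isolating the binding term: the approximation of $\nabla w_{0}$ by the derivative of the shared-coefficient sieve approximant. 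The paper handles derivatives only implicitly, through the interpolation inequality of Lemma 2 in \citet{chen1998} ($\lVert\nabla_{C}\left\{w-w_{0}\right\}\rVert_{\infty}\leq c\lVert w-w_{0}\rVert_{2}^{\left(m-1\right)/m}$) and a deferral to the computation in Proposition 3.3 of \citet{chen2007}. One substantive point of divergence deserves flagging: you choose the sieve dimension to balance the stochastic term against the \emph{gradient} bias, which gives $k_{Cn}k_{Mn}\asymp n^{1/m}$ and makes $m>2$ exactly the threshold for $o_{p}\left(n^{-1/4}\right)$; the paper instead fixes $k_{Cn},k_{Mn}=O\left(n^{1/2\left(m+1\right)}\right)$ in Assumption \ref{assu:Holder}(ii), which balances the variance against the bias of $w_{0}$ itself. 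Under the paper's dimension the gradient bias is of order $\left(k_{Cn}k_{Mn}\right)^{-\left(m-1\right)/2}\asymp n^{-\left(m-1\right)/2\left(m+1\right)}$, and the same arithmetic would appear to demand $m>3$ rather than $m>2$. Your re-balancing is therefore not cosmetic --- it is the version of the argument under which the stated smoothness condition actually delivers the claimed rate --- but it uses a growth rate for $k_{Cn},k_{Mn}$ different from the one written in Assumption \ref{assu:Holder}(ii), and you should state explicitly that you are modifying that condition (or, equivalently, note that under the paper's stated dimension the conclusion requires the extra smoothness).
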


We now derive the asymptotic normality of the parametric components of the SGLS estimator, $\hat{\theta}_{n}$. Define
$D_{v}\left(x\right):=\left(D_{v_{1}}\left(x\right),D_{v_{2}}\left(x\right),D_{v_{3}}\left(x\right),D_{v_{4}}\left(x\right)\right)$ where
\begin{align*}
    D_{v_{1}}\left(x\right)
    &=\begin{pmatrix}v_{1}\left(x\right)\\
    \kappa_{C0}\nabla_{C}v_{1}\left(x\right)-\nabla_{C}w_{0}\left(x\right)\\
    \kappa_{M0}\nabla_{M}v_{1}\left(x\right)\end{pmatrix},\
    &D_{v_{2}}\left(x\right)
    &=\begin{pmatrix}v_{2}\left(x\right)\\
    \kappa_{C0}\nabla_{C}v_{2}\left(x\right)\\
    \kappa_{M0}\nabla_{M}v_{2}\left(x\right)-\nabla_{M}w_{0}\left(x\right)\end{pmatrix},\\
    D_{v_{3}}\left(x\right)
    &=\begin{pmatrix}v_{3}\left(x\right)-x_{C}\\
    \kappa_{C0}\nabla_{C}v_{3}\left(x\right)\\
    \kappa_{M0}\nabla_{M}v_{3}\left(x\right)\end{pmatrix},\
    &D_{v_{4}}\left(x\right)
    &=\begin{pmatrix}v_{4}\left(x\right)-x_{M}\\
    \kappa_{C0}\nabla_{C}v_{4}\left(x\right)\\
    \kappa_{M0}\nabla_{M}v_{4}\left(x\right)\end{pmatrix}.
\end{align*}
Let $v^{*}=\left(v_{1}^{*},v_{2}^{*},v_{3}^{*},v_{4}^{*}\right)$, where $v_{j}^{*}$ solves
\begin{equation}\label{eq: v*}
    \inf_{v_{j}}\mathbb{E}\left[D_{v_{j}}\left(x_{i}\right)'
    \Sigma\left(x_{i}\right)^{-1}D_{v_{j}}\left(x_{i}\right)\right].
\end{equation}
\begin{assumption}\label{assu:Dv*}
(i) $\mathbb{E}\left[D_{v^{*}}\left(x_{i}\right)'D_{v^{*}}\left(x_{i}\right)\right]$ is positive definite; (ii) Each element of $v^{*}$ belongs to the H\"{o}lder space $\Lambda^{m}\left(\mathcal{X}\right)$ with $m>2$.
\end{assumption}
\begin{assumption}\label{assu:interior}
$\theta_{0}\in\mathrm{int}\left(\Theta\right)$.
\end{assumption}
Assumptions \ref{assu:Dv*}--\ref{assu:interior} are standard conditions for M-estimation that guarantee the uniqueness of the solution within the parameter space. Under Assumptions \ref{assu:Px}--\ref{assu:Dv*}, it follows from Lemma B.1 in \citet{ai2003} that $|\hat{\theta}_{n}-\theta_{0}|_{e}=o_{p}\left(n^{-1/4}\right)$, $\lVert\hat{w}_{n}-w_{0}\rVert_{2}=\left(\mathbb{E}\left[\left(\hat{w}_{n}\left(x_{i}\right)-w_{0}\left(x_{i}\right)\right)^{2}\right]\right)^{1/2}=o_{p}\left(n^{-1/3}\right)$, and $\lVert\nabla_{C}\hat{w}_{n}-\nabla_{C}w_{0}\rVert_{2},\lVert\nabla_{M}\hat{w}_{n}-\nabla_{M}w_{0}\rVert_{2}=o_{p}\left(n^{-1/4}\right)$. Now the following theorem provides the asymptotic normality of $\hat{\theta}_n.$

\begin{theorem}\label{thm:AsympSieveGLSE}
Let Assumptions \ref{assu:Px}--\ref{assu:interior} hold. Then, $\sqrt{n}(\hat{\theta}_{n}-\theta_{0})\rightarrow_{d}N\left(0,V_{1}^{-1}V_{2}V_{1}^{-1}\right)$, where
\[\begin{split}
    V_{1}
    &=\mathbb{E}\left[D_{v^{*}}\left(x_{i}\right)'\Sigma\left(x_{i}\right)^{-1}D_{v^{*}}\left(x_{i}\right)\right],\\
    V_{2}
    &=\mathbb{E}\left[D_{v^{*}}\left(x_{i}\right)'
                     \Sigma\left(x_{i}\right)^{-1}\Sigma_{0}\left(x_{i}\right)\Sigma\left(x_{i}\right)^{-1}
                     D_{v^{*}}\left(x_{i}\right)\right].
\end{split}\]
\end{theorem}

The asymptotic variance $V_{1}^{-1}V_{2}V_{1}^{-1}$ can be consistently estimated (see Remark 4.2 in \citet{chen2007}) and the standard errors of $\left(\hat{\alpha}_{CC},\hat{\alpha}_{MM}\right)=\left(1/\hat{\kappa}_{C},1/\hat{\kappa}_{M}\right)$ are obtained by using the delta method. Furthermore, if all conditions of Theorem \ref{thm:AsympSieveGLSE} are satisfied with $\Sigma\left(x\right)=\Sigma_{0}\left(x\right)$, $\hat{\theta}_{n}$ achieves semiparametric efficiency with a consistent estimator $\hat{\Sigma}_{0}\left(x\right)$ of $\Sigma_{0}\left(x\right)$. The estimation of $\Sigma_{0}\left(x\right)$ is straightforward through series least squares estimation, using the initial consistent SLS estimator $(\tilde{\theta}_{n},\tilde{w}_{n})$. To ensure the efficiency of the SGLS estimator, $\hat{\Sigma}_{0}\left(x\right)$ is required to exhibit the following uniform convergence rate.

\begin{assumption}\label{assu:Sigma2}
$\hat{\Sigma}_{0}\left(x\right)=\Sigma_{0}\left(x\right)+o_{p}\left(n^{-1/4}\right)$ uniformly over $x\in\mathcal{X}$.
\end{assumption}

Let $v_{0}=\left(v_{01},v_{02},v_{03},v_{04}\right)$, where $v_{0j}$ solves \eqref{eq: v*} with $\Sigma\left(x\right)$ replaced by $\Sigma_{0}\left(x\right)$. Now the following theorem establishes the semiparametric efficiency of $\hat{\theta}_n$.
\begin{theorem}\label{thm:EfficSieveGLSE}
Suppose that all conditions of Theorem \ref{thm:AsympSieveGLSE} with $\Sigma\left(x\right)=\Sigma_{0}\left(x\right)$ and $v^{*}=v_{0}$ hold, and Assumption \ref{assu:Sigma2} is satisfied. Then, $\sqrt{n}(\hat{\theta}_{n}-\theta_{0})\rightarrow_{d}N\left(0,V_{0}^{-1}\right)$, with $V_{0}=\mathbb{E}\left[D_{v_{0}}\left(x_{i}\right)'\Sigma_{0}\left(x_{i}\right)^{-1}D_{v_{0}}\left(x_{i}\right)\right]$.
\end{theorem}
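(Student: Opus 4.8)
The plan is to obtain the result as a specialization of Theorem \ref{thm:AsympSieveGLSE}, where the only genuinely new element is controlling the replacement of the true conditional variance $\Sigma_0(x)$ by its estimate $\hat{\Sigma}_0(x)$ in the SGLS criterion. First I would take the conclusion of Theorem \ref{thm:AsympSieveGLSE} as given and set the working weight equal to the conditional variance, $\Sigma(x) = \Sigma_0(x)$, together with $v^* = v_0$ as hypothesized. Substituting directly into \eqref{eq: V1V2}, the middle factor $\Sigma(x_i)^{-1}\Sigma_0(x_i)\Sigma(x_i)^{-1}$ appearing in $V_2$ reduces to $\Sigma_0(x_i)^{-1}$, so that $V_2 = \mathbb{E}[D_{v_0}(x_i)'\Sigma_0(x_i)^{-1}D_{v_0}(x_i)] = V_0$, and identically $V_1 = V_0$. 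The sandwich therefore collapses: $V_1^{-1}V_2V_1^{-1} = V_0^{-1}V_0V_0^{-1} = V_0^{-1}$. This is a purely algebraic step and delivers the stated limiting variance once the distributional convergence is secured.

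The substantive work is to show that the feasible SGLS estimator, which uses the estimated weighting matrix $\hat{\Sigma}_0(x)$ from Step 2 of the algorithm in Table \ref{tab: SGLS algorithm}, is first-order equivalent to the infeasible estimator that uses the true $\Sigma_0(x)$; this is precisely where Assumption \ref{assu:Sigma2} enters. I would examine the sieve estimating equations that define $\hat{\theta}_n$ and decompose the criterion's gradient into the term that would arise with $\Sigma_0(x)$ plus a remainder driven by $\hat{\Sigma}_0(x) - \Sigma_0(x)$. Because the criterion is a quadratic form in $\rho(z_i;\theta,w)$ evaluated near the optimum, the remainder takes the product form (weighting error) $\times$ (moment residual). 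The uniform rate $\hat{\Sigma}_0(x) - \Sigma_0(x) = o_p(n^{-1/4})$ granted by Assumption \ref{assu:Sigma2}, combined with the $o_p(n^{-1/4})$ convergence of $\hat{\lambda}_n$ under the pseudo-metric from Proposition \ref{prop:convrate}, renders this cross term $o_p(n^{-1/2})$, hence asymptotically negligible after the $\sqrt{n}$ rescaling. Consequently the feasible estimator inherits the limiting distribution of Theorem \ref{thm:AsympSieveGLSE} evaluated at $\Sigma(x) = \Sigma_0(x)$, which by the previous step is $N(0, V_0^{-1})$.

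Finally, I would record that $V_0^{-1}$ is the semiparametric efficiency bound for the conditional moment restriction model \eqref{eq:CondMom}: by the characterization in \citet{ai2003} (building on Chamberlain's bound), weighting the moments by the inverse conditional variance attains the minimal asymptotic variance, so the optimally weighted SGLS estimator is efficient. The main obstacle is the middle step, namely verifying that plugging in $\hat{\Sigma}_0$ does not perturb the first-order asymptotics, since it hinges on the interplay between the uniform weighting-matrix rate of Assumption \ref{assu:Sigma2} and the nonparametric convergence rate of Proposition \ref{prop:convrate}, rather than on any new moment or smoothness condition. The remaining ingredients, including positive definiteness of $V_0$ and membership of each component of $v_0$ in the H\"{o}lder class $\Lambda^m(\mathcal{X})$, are inherited verbatim from the hypotheses carried over from Theorem \ref{thm:AsympSieveGLSE}.
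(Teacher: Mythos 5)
Your overall strategy is sound but genuinely different from the paper's. You propose a modular argument: (i) the purely algebraic collapse $V_1^{-1}V_2V_1^{-1}=V_0^{-1}$ when $\Sigma(x)=\Sigma_0(x)$ and $v^{*}=v_0$ (this part is correct and immediate from \eqref{eq: V1V2}); (ii) first-order equivalence of the feasible SGLS estimator (weighted by $\hat{\Sigma}_0$) and the infeasible one (weighted by $\Sigma_0$), so that Theorem \ref{thm:AsympSieveGLSE} can be cited; (iii) a remark on the efficiency bound. The paper instead re-runs the entire Ai--Chen machinery from scratch with the estimated weight in the criterion: it constructs the Riesz representer $u_0=(u_{0\theta},-v_0u_{0\theta})$ with $u_{0\theta}=V_0^{-1}\eta$, perturbs $\hat{\lambda}_n$ along the path $\lambda(t)=\hat{\lambda}_n\pm t\varepsilon_n\pi_nu_0$, Taylor-expands $\hat{Q}_n$ to second order, shows the second-derivative terms are $O_p(\varepsilon_n^2)$, and then linearizes the resulting score identity, using a Donsker-class argument and Assumption \ref{assu:Sigma2} to replace $\hat{\Sigma}_0$ by $\Sigma_0$, before applying the CLT to $n^{-1/2}\sum_i\rho(z_i;\lambda_0)'\Sigma_0(x_i)^{-1}\frac{d\rho(z_i;\lambda_0)}{d\lambda}[u_0]$. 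Your route is cleaner if Theorem \ref{thm:AsympSieveGLSE} is read as applying to the estimator with deterministic weight $\Sigma(x)$; the paper's route avoids having to introduce an auxiliary infeasible estimator at the cost of repeating the expansion.

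There is, however, a gap in the step you yourself identify as the main obstacle. You argue that the remainder driven by $\hat{\Sigma}_0-\Sigma_0$ is a product of a weighting error of order $o_p(n^{-1/4})$ and a moment residual of order $o_p(n^{-1/4})$, hence $o_p(n^{-1/2})$. But the relevant score-difference term is
\begin{equation*}
\frac{1}{n}\sum_{i=1}^{n}\rho\left(z_{i};\hat{\lambda}_{n}\right)'
\left[\hat{\Sigma}_{0}\left(x_{i}\right)^{-1}-\Sigma_{0}\left(x_{i}\right)^{-1}\right]
\frac{d\rho\left(z_{i};\lambda_{0}\right)}{d\lambda}\left[u_{0}\right],
\end{equation*}
and after writing $\rho(z_i;\hat{\lambda}_n)=\rho(z_i;\lambda_0)+\{\rho(z_i;\hat{\lambda}_n)-\rho(z_i;\lambda_0)\}$, only the second piece is $o_p(n^{-1/4})$; the first piece equals $-\varepsilon_i$, which is $O_p(1)$. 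Naive Cauchy--Schwarz on that part yields only $O_p(1)\times o_p(n^{-1/4})=o_p(n^{-1/4})$, which does not survive the $\sqrt{n}$ rescaling. Closing this requires exploiting $\mathbb{E}[\rho(z_i;\lambda_0)\,|\,x_i]=0$ together with a stochastic-equicontinuity (Donsker/maximal-inequality) argument over the class of perturbations $\hat{\Sigma}_0^{-1}-\Sigma_0^{-1}$ with uniform norm $o_p(n^{-1/4})$, which is exactly how the paper's three-term decomposition of the score (and its invocation of the Donsker property of $\{\rho(z;\lambda)'\Sigma_0^{-1}\frac{d\rho(z;\lambda_0)}{d\lambda}[u_0]:\lambda\in\mathcal{N}_{on}\}$) disposes of it. Your sketch should make this mean-zero-plus-equicontinuity step explicit rather than relying on rate multiplication alone; with that repair, and keeping the remaining hypotheses (positive definiteness of $V_0$, H\"{o}lder smoothness of $v_0$) inherited from Theorem \ref{thm:AsympSieveGLSE} as you note, the argument goes through.
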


Note that the assumptions introduced in this section are standard regularity conditions for sieve M-estimation. They do not impose additional restrictive structures compared to the assumptions required for \cite{lindenlaub2017}'s maximum likelihood (ML) estimation. Rather, they are the nonparametric generalizations of the standard regularity conditions for ML estimators.\footnote{For example, Assumptions \ref{assu:IID}, \ref{assu:Sigma}, \ref{assu:Dv*}, and \ref{assu:interior} have exact counterparts in standard finite-dimensional MLE theory. Assumptions \ref{assu:compact}, \ref{assu:Holder}, and \ref{assu:Sigma2} are merely the technical conditions required to allow the underlying distributions of $x$ and $y$ to be ``unknown smooth distributions''.}

\section{\label{sec: Simul}Monte Carlo simulations}
This section evaluates the finite sample performances of our sieve estimators using known data-generating processes (DGPs). We first generate Monte Carlo samples from \cite{lindenlaub2017}'s quadratic-Gaussian model. Workers' skill bundle, $x$, and occupations' skill requirements, $y$, follow joint Gaussian distributions:
$$  \begin{pmatrix} x_C\\x_M \end{pmatrix}
    \sim N\left(\begin{pmatrix} 0\\0 \end{pmatrix},
                \begin{pmatrix} 1&\rho_x \\ \rho_x&1 \end{pmatrix}\right),\quad
    \begin{pmatrix} y_C\\y_M \end{pmatrix}
    \sim N\left(\begin{pmatrix} 0\\0 \end{pmatrix},
                \begin{pmatrix} 1&\rho_y \\ \rho_y&1 \end{pmatrix}\right),$$
from which $\left\{x_i\right\}_{i=1}^n$ are drawn with sample size $n=3000$. Following \cite{lindenlaub2017}, we rule out between-task complementarities in the production technology \eqref{eq: bilinear production function} so that $A$ is a diagonal matrix $(A={\rm diag}\left(\alpha_{CC},\alpha_{MM}\right))$. Then, the equilibrium assignment $y^*$ and wage $w^*$ have closed-form solutions as shown in \cite{lindenlaub2017}. Both depend on $\rho_x, \rho_y,$ and technology parameters. We generate $\left\{y_i^*\right\}_{i=1}^n$ and the equilibrium wages $\left\{w_i^*\right\}_{i=1}^n$ using closed-form solutions. Lastly, we draw measurement errors from mean-zero Gaussian distributions:
$$  \varepsilon_w \sim N(0,\sigma_w^2),\quad \varepsilon_C \sim N(0,\sigma_C^2),\quad
    \varepsilon_M \sim N(0,\sigma_M^2),$$
and add them to $w_i^*$, $y_{Ci}^*$, and $y_{Mi}^*$, respectively, to generate the observable data $(w_i, y_i, x_i)_{i=1}^n$ following \eqref{eq:OTmodel}. The true parameter values used in simulations are:
$$  (\alpha_{CC},\alpha_{MM},\beta_C,\beta_M,c,\rho_x,\rho_y,\sigma_w,\sigma_C,\sigma_M)
    =(0.5,0.2,1.7,-0.4,30,-0.4,-0.5,2,1,1),$$
which are set close to the empirical ML estimates in \cite{lindenlaub2017}. 

We estimate the production technology parameters using \cite{lindenlaub2017}'s parametric ML estimator and our sieve estimators (SML, SLS, and SGLS) across 1000 Monte Carlo samples. As we discussed earlier, the original ML estimator suffers from bias because the solution uses the measurement-error-contaminated productivity correlation $\tilde{\rho}_y = corr(y)$, which is different from the true productivity correlation $\rho_y = corr(y^*).$ If the measurement errors in $y$ are negligible, e.g., $(\sigma_C, \sigma_M)$ are close to 0, the ML estimator works well for this DGP. However, given the current parameter specification, the measurement errors are substantial, so the ML estimator can be inconsistent. To address this issue, we define a corrected ML estimator (referred to as `ML$^*$') that uses the corrected productivity correlation $\rho_y=\tilde{\rho}_y \sqrt{var(y_1)var(y_2)}/\sqrt{(var(y_1)-\sigma_C^2)(var(y_2)-\sigma_M^2)}.$ This correction in turn yields much more precise estimates than the original ML estimator. The sieve estimators do not share this problem. 

\begin{figure}[ht!]
\begin{centering}
\includegraphics[width=\textwidth]{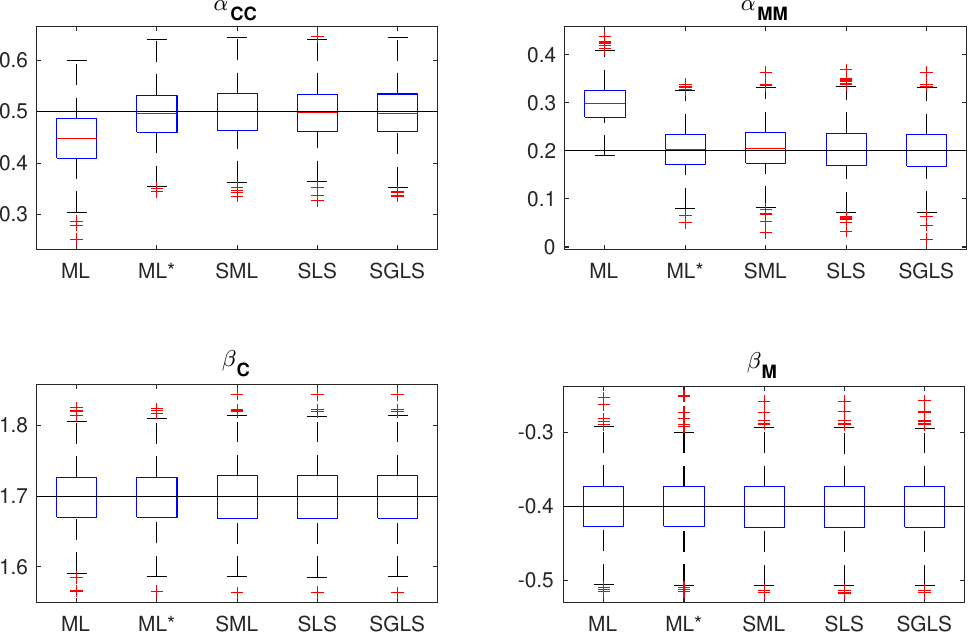}
\par\end{centering}
\caption{\label{fig:simul performance}Box plots of parameter estimates (Gaussian DGP)}
\end{figure}

The box plots in Figure \ref{fig:simul performance} summarize the distributions of parameter estimates delivered by the 5 estimators we consider. For $(\beta_C, \beta_M),$ all the estimators work well. On the other hand, for $(\alpha_{CC}, \alpha_{MM})$, the original ``ML'' estimator shows substantial bias due to the use of the observed (rather than latent) productivity correlation in the likelihood. Our corrected ``ML$^*$'' estimator, on the contrary, accurately recovers the true parameters and performs the best under the Gaussian DGP as it relies on the true specification. In this regard, we focus on ``ML$^*$'' as the appropriate parametric benchmark and omit the ``ML'' results in subsequent analyses. The sieve estimators also perform very well for $(\alpha_{CC}, \alpha_{MM})$ as the distributions of their parameter estimates are centered around the true parameter values. We document the estimators' bias and root-mean-squared errors (RMSE) in Table \ref{tab:simul performance}. Notably, the sieve estimators, while being more robust, tend to be no less efficient than the parametric estimators. The SML estimator's root-mean-squared errors (RMSE) are slightly larger than those of the ML$^*$ estimator for most parameters. The SLS and SGLS estimators---which perform very similarly because the measurement errors are uncorrelated---are slightly less efficient than the SML estimator, as expected.

\begin{table}[ht!]
\centering\footnotesize
\caption{\label{tab:simul performance}Finite sample performances of the estimators (Gaussian DGP)}
\begin{tabular}{ccccccc}
\toprule
&& ML & ML$^*$ & SML & SLS & SGLS \\
\midrule
$\alpha_{CC}$ & Bias& -0.0536 & -0.0041 & -0.0007 & -0.0018 & -0.0027\\
              & RMSE& 0.0775 & 0.0513 & 0.0520 & 0.0523 & 0.0523\\
\hline
$\alpha_{MM}$ & Bias& 0.0992 & 0.0044 & 0.0065 & 0.0031 & 0.0022\\
              & RMSE& 0.1077 & 0.0473 & 0.0491 & 0.0502 & 0.0489\\
\hline
$\beta_{C}$   & Bias& -0.0006 & -0.0007 & -0.0009 & -0.0009 & -0.0009\\
              & RMSE& 0.0416 & 0.0414 & 0.0427 & 0.0427 & 0.0427\\
\hline
$\beta_{M}$   & Bias& -0.0000 & -0.0001 & -0.0000 & -0.0000 & -0.0000\\
              & RMSE& 0.0398 & 0.0395 & 0.0397 & 0.0397 & 0.0397\\
\bottomrule
\end{tabular}
\end{table}

Now we consider DGPs for which the Gaussian model is moderately misspecified. We first generate $\{x_{1i},x_{2i}\}_{i=1}^n$ and $\{y_{1j},y_{2j}\}_{j=1}^n$ separately from the Gumbel copula with shape parameter values $1.3$ and $1.4$, respectively. These variables are uniformly distributed in $[0,1]$ by construction. Then we transform them into standard normally distributed variables:
$$  x_{Ci}=\Phi^{-1}(x_{1i}),\quad x_{Mi}=\Phi^{-1}(1-x_{2i}),\quad
    y_{Cj}=\Phi^{-1}(y_{1j}),\quad y_{Mj}=\Phi^{-1}(1-y_{2j}),$$
so that $(x_{Ci},x_{Mi})$ and $(y_{Cj},y_{Mj})$ are negatively correlated. Define matrices $x$ and $y$ by
$$  x:=\begin{bmatrix} x_{C1}&x_{M1} \\ \vdots&\vdots \\ x_{Cn}&x_{Mn} \end{bmatrix},\quad
    y:=\begin{bmatrix} y_{C1}&y_{M1} \\ \vdots&\vdots \\ y_{Cn}&y_{Mn} \end{bmatrix}.$$
The skill demand and supply bundles are standard normally distributed in this DGP so the ML$^*$ estimator is directly applicable without transformation, although it is misspecified because their joint distributions are not Gaussian. The Gumbel copula exhibits asymmetric tail dependence (the upper tail has stronger dependence than the lower tail), whereas the Gaussian copula has symmetric dependence. However, given the current parameter setup, the Gumbel copula does not drastically differ from the Gaussian copula. The production technology is specified as before. 

There exists no closed-form solution for the equilibrium assignment in this case. We therefore numerically solve the equilibrium matching through linear programming for each Monte Carlo sample. To do so, we first compute the pairwise surplus of each possible match between $x$ and $y$ and construct the surplus matrix $S$ whose $ij$ entry is the surplus generated by worker $i$ and firm $j.$ Let $\mathbb{I}_n$ denote an $n \times n$ identity matrix and $\textbf{1}_n$ be an $n \times 1$ vector of ones. Let $f$ be the vector obtained by stacking the columns of $S$. Then the solution ($q^*$) to the following linear programming problem provides the equilibrium assignment: 
\begin{equation}
    \max_{q} f'q,
    \quad{\rm s.t.}\
    \begin{bmatrix}
        \underbrace{A_1}_{n \times n^2} \\ \underbrace{A_2}_{n \times n^2}
    \end{bmatrix}q \le \underbrace{b}_{2n \times 1},\label{eq: linear programming}
\end{equation}
where $A_1 := \mathbb{I}_n \bigotimes \textbf{1}_n'$, $A_2:= \textbf{1}_n' \bigotimes \mathbb{I}_n,$ and $b:= \textbf{1}_{2n}.$ Reshaping $q^*$ into an $n \times n$ matrix gives the optimal transportation matrix, $T$. Then the optimal matching for $x$ is given by $y^*:=Ty$. The wage $w^*$ is computed from the solution to the dual problem \eqref{eq: linear programming}.\footnote{Even with the moderate sample size $n=3000$, the constraint matrix is enormous ($6000 \times 9,000,000).$ Solving the linear program \eqref{eq: linear programming} over many Monte Carlo samples is computationally demanding. We employ \textsc{Gurobi Optimizer 10.0} to solve it efficiently.} 

For measurement errors, we consider two different specifications. In the first case, the errors are independently drawn from a gamma distribution, $\Gamma(a,b)$, with $a = 1$, $b=2$. They are demeaned and scaled to have the same means and variances specified in the Gaussian DGP. Under this specification, the ML$^*$ and SML estimators are further misspecified as measurement errors are non-normal. In the second case, we generate the errors from a joint normal distribution in which the errors are correlated as follows:
$$  \begin{pmatrix} \varepsilon_w \\ \varepsilon_C \\ \varepsilon_M \end{pmatrix}
    \sim N\left(\begin{pmatrix} 0\\0\\0 \end{pmatrix},
                \begin{pmatrix} 2&1&1 \\ 1&1&0.5 \\ 1&0.5&1 \end{pmatrix}\right).$$
The ML$^*$ and SML estimators are still misspecified as measurement errors are correlated. The SLS estimator is consistent but not as efficient as the SGLS estimator, since it does not take the correlation structure of errors into account. The observable data $(y_i,x_i,w_i)_{i=1}^n$ are generated by adding the measurement errors to $w_i^*$, $y_{Ci}^*$, and $y_{Mi}^*$ respectively. 

\begin{table}[ht!]
\centering\footnotesize
\caption{\label{tab:simul performance gumbel}Finite sample performances of the estimators (Gumbel DGP)}
\begin{tabular}{cccccccccc}
\toprule
& &\multicolumn{4}{c}{Gamma errors} & \multicolumn{4}{c}{Joint Gaussian errors} \\
\cmidrule(lr){3-6} \cmidrule(lr){7-10}
&& ML$^*$ & SML & SLS & SGLS & ML$^*$ & SML & SLS & SGLS \\
\midrule
$\alpha_{CC}$ &Bias& -0.0608 & -0.0021 & -0.0015 & -0.0020 & -0.2032 & -0.0023 & 0.0008 & 0.0112 \\
              &RMSE& 0.3780 & 0.0538 & 0.0535 & 0.0538 & 0.2971 & 0.0914 & 0.0925 & 0.0809 \\
\hline
$\alpha_{MM}$ &Bias& -0.0550 & 0.0019 & 0.0024 & 0.0020 & -0.3339 & 0.0018 & 0.0034 & -0.0123 \\
              &RMSE& 0.4209 & 0.0512 & 0.0527 & 0.0513 & 1.6626 & 0.0886 & 0.0898 & 0.0827 \\
\hline
$\beta_{C}$   &Bias& 0.0055 & 0.0003 & 0.0003 & 0.0003 & -0.0015 & -0.0054 & -0.0054 & -0.0024 \\
              &RMSE& 0.1138 & 0.0406 & 0.0405 & 0.0406 & 0.1013 & 0.0762 & 0.0757 & 0.0760 \\
\hline
$\beta_{M}$   &Bias& -0.0056 & 0.0011 & 0.0011 & 0.0011 & -0.0327 & -0.0058 & -0.0056 & -0.0032 \\
              &RMSE& 0.1033 & 0.0398 & 0.0398 & 0.0399 & 0.1137 & 0.0735 & 0.0728 & 0.0677 \\
\bottomrule
\end{tabular}
\end{table}

The estimation results are provided in Table \ref{tab:simul performance gumbel}. In both cases, the ML$^*$ estimator is misspecified for the distributions of $X,$ $Y,$ and the measurement errors so that it performs the worst for all the parameters. It exhibits especially large biases and RMSE for complementarity parameters. Even for the linear productivity parameters, the ML$^*$ estimator shows much larger RMSEs than the sieve estimators. In contrast, all the sieve estimators work equally well for the linear coefficients. The SML estimator is misspecified for the distributions of measurement errors but it produces accurate estimates for the complementarity parameters $(\alpha_{CC}, \alpha_{MM})$ in both cases. The SLS and SGLS estimators perform similarly to the SML estimator when the measurement errors are drawn from the Gamma distributions. In the case of correlated errors, the SLS estimator performs similarly to the SML estimator. The SGLS estimator outperforms the other estimators as it takes into account the correlations between measurement errors, resulting in more efficient estimation.  

Lastly, we consider a DGP in which \cite{lindenlaub2017}'s model is more severely misspecified. Specifically, we draw $x$ and $y$ from finite Gaussian mixture distributions. Each mixture distribution has two Gaussian components with one-half weight for each. For both $x$ and $y$, the Gaussian components, $K_1$ and $K_2$, are specified as follows. 
$$  K_1 \sim N\left(\begin{pmatrix} 1\\1 \end{pmatrix},
                    \begin{pmatrix} 1&\rho \\ \rho&1\end{pmatrix}\right),\quad
    K_2 \sim N\left(\begin{pmatrix} -1\\-1 \end{pmatrix},
                    \begin{pmatrix} 1&-\rho \\ -\rho&1 \end{pmatrix}\right).$$
We set $\rho$ equal to $0.4$ for $x$ and $0.5$ for $y$. The equilibrium assignments and wages are solved via linear programming as before. We also generate the measurement errors from a joint Gaussian mixture distribution which has two components:
$$  M_1 \sim N\left(\begin{pmatrix} 1\\1\\1 \end{pmatrix},
                    \begin{pmatrix} 1&0.7&0.7 \\ 0.7&1&0.3 \\ 0.7&0.3&1 \end{pmatrix}\right),\quad
    M_2 \sim N\left(\begin{pmatrix} -3\\-3\\-3 \end{pmatrix},
                    \begin{pmatrix} 1&0.7&0.7 \\ 0.7&1&0.3 \\ 0.7&0.3&1 \end{pmatrix}\right).$$
In this case, both $(x,y)$ and measurement errors have bi-modal distributions that are far from a normal distribution.

\begin{table}[ht!]
\centering\footnotesize
\caption{\label{tab:simul performance mixture}Finite sample performances of the estimators (Gaussian mixture DGP)}
\begin{tabular}{cccccc}
\toprule
&& ML$^*$ & SML & SLS & SGLS \\
\midrule
$\alpha_{CC}$ & Bias& 0.2896 & -0.0014 & -0.0016 & -0.0017 \\
              & RMSE& 0.3653 & 0.0429 & 0.0425 & 0.0385 \\
\hline
$\alpha_{MM}$ & Bias& 0.2446 & -0.0017 & 0.0006 & -0.0027 \\
              & RMSE& 0.3584 & 0.0454 & 0.0431 & 0.0337 \\
\hline
$\beta_{C}$   & Bias& 0.7123 & -0.0007 & -0.0013 & 0.0008 \\
              & RMSE& 0.7204 & 0.0428 & 0.0426 & 0.0364 \\
\hline
$\beta_{M}$   & Bias& -0.1288 & -0.0001 & 0.0002 & -0.0011 \\
              & RMSE& 0.1592 & 0.0421 & 0.0419 & 0.0305 \\
\bottomrule
\end{tabular}
\end{table}

As the margins of $x$ and $y$ are not standard normal, the ML$^*$ estimator is not directly applicable. Therefore, we use the inverse transform method to convert $x$ and $y$ to standard normal variables for the ML$^*$ estimator. As the joint distribution after transformation is non-normal, ML$^*$ is misspecified. Our sieve estimators can be applied without this transformation so we use untransformed data for the sieve-based estimators. The estimates of technology parameters are reported in Table \ref{tab:simul performance mixture}. Not surprisingly, the ML$^*$ estimator delivers parameter estimates that are very different from the true values. On top of misspecification, the transformation procedure introduces additional bias as the actual assignments are determined on the original data. By contrast, the sieve estimators still perform well in this case. Estimators relying on fewer assumptions deliver more accurate estimates. The SML estimator produces the least precise estimates among the sieve estimators. The SGLS estimator incorporates the correlation structure among measurement errors so it possesses substantial efficiency gains compared to the SLS estimator. 

\begin{table}[ht!]
    \centering
    \caption{Average computation time (in seconds) and RMSE (in parentheses)}
    \label{tab:simul3D}
    \begin{tabular}{l c cc cc}
        \toprule
         & & \multicolumn{4}{c}{Time (RMSE)} \\
         \cmidrule(lr){3-6}
         & & \multicolumn{2}{c}{\textit{2 characteristics (2D)}} &\multicolumn{2}{c}{\textit{3 characteristics (3D)}} \\
         \cmidrule(lr){3-4} \cmidrule(lr){5-6}
        & Sieve Order ($k_n$) & $n=5,000$ & $n=10,000$ & $n=5,000$ & $n=10,000$ \\
        \midrule
        ML* & -- & 0.010 (0.034) & 0.016 (0.024) & 0.100 (0.047) & 0.157 (0.033) \\
        SML  & 2  & 0.034 (0.035) & 0.024 (0.024) & 0.169 (0.046) & 0.281 (0.032) \\
             & 3  & 0.066 (0.035) & 0.106 (0.025) & 0.459 (0.046) & 0.675 (0.032) \\
        SLS  & 2  & 0.058 (0.035) & 0.101 (0.025) & 0.375 (0.047) & 0.636 (0.033) \\
             & 3  & 0.133 (0.035) & 0.237 (0.025) & 1.568 (0.048) & 2.539 (0.033) \\
        SGLS & 2  & 0.171 (0.035) & 0.368 (0.024) & 0.738 (0.046) & 1.254 (0.032) \\
             & 3  & 0.320 (0.035) & 0.631 (0.025) & 3.221 (0.046) & 5.196 (0.032) \\
        \bottomrule
    \end{tabular}

    \vspace{1mm}
    {\footnotesize \textit{Note:} Times are measured in MATLAB on a MacBook Pro with Apple Silicon (M1, 8-core CPU).}
\end{table}

Lastly, we investigate the computational scalability of our sieve estimators in settings where worker and job attributes have more than two dimensions. In additional simulations, we compare computation times of the parametric ML$^*$ estimator and sieve-based estimators in Gaussian designs with two (2D) and three characteristics (3D). The results are summarized in Table \ref{tab:simul3D}. As expected, the ML$^*$ estimator is the fastest in this Gaussian benchmark. One more characteristic increases the computing time for ML$^*$ roughly tenfold. The sieve-based methods are only moderately slower in absolute terms. For instance, with three characteristics, $n=10{,}000,$ and the Bernstein polynomial degree of 2 $(k_n=2$) for each attribute, SML requires 0.281 seconds versus 0.157 seconds for ML*, while delivering virtually identical RMSE. In all cases reported, computation times remain well below one second for SML and below a few seconds for the more demanding SGLS specifications. The additional dimension tends to increase the computation time for the sieve estimators no more than tenfold, and often much less. Furthermore, relatively low sieve orders are sufficient to match the performance of the correct parametric model. With $k_n=2$, all sieve estimators attain RMSEs that are essentially indistinguishable from those of ML$^*$. This suggests that, in practice, choosing modest sieve orders keeps the effective parameter dimension small, while retaining flexibility.

Regarding scalability beyond 3D, our sieve estimators can be computationally manageable for moderate numbers of characteristics (e.g., 3--5) and low sieve orders, as the simulations illustrate in the most demanding SGLS configurations. In applications with richer covariate sets, it is natural to combine our method with standard dimension-reduction tools (e.g., economically motivated indices or principal components) and to keep sieve orders small. The sieve estimators also offer a great ``implementation'' advantage. Computational burden in practice is not only about CPU time but also about implementation cost. In ML estimation, the log-likelihood depends on the parameters in a complex nonlinear way, so deriving analytic expressions of the gradient and Hessian becomes cumbersome and error-prone for higher dimensions. By contrast, for our sieve-based estimators, the objective functions are much simpler in the sieve coefficients: their gradients and Hessians have straightforward closed-form expressions that can be implemented with relatively little coding effort. This substantially reduces the ``programmer time'' required to obtain a stable and efficient implementation in higher-dimensional specifications.

Our simulation exercises provide evidence that the Gaussian model can be misleading when the model is misspecified. Even with moderate misspecification, the ML estimator does not produce reliable estimates. Furthermore, the actual impact of technological shifts on wage distribution may differ from the prediction of the Gaussian model. \citet{lindenlaub2017} shows that (i) wage distributions are positively skewed for any pairs of $\alpha_{CC}$ and $\alpha_{MM}$, (ii) the variance of the wage distribution increases as cognitive or manual skill complementarity increases, and (iii) wage skewness is minimized when $\alpha_{CC}=\alpha_{MM}$. However, in our simulations using non-normal distributions (detailed in Appendix \ref{appen: skewdisp}), the resulting wage distribution's skewness does not reach its minimum when $\alpha_{CC}=\alpha_{MM}$. On the contrary, our sieve estimators do not suffer from these problems. Therefore, our estimators can be more generally applicable regardless of underlying distributions of $x$, $y$, and measurement errors with no need for transformation. They also show excellent finite sample performances and remain computationally feasible for large samples and for settings with more than two attributes.  

\section{\label{sec: Matchingemp}Empirical application to U.S. worker-job matching}

We revisit the dataset constructed by \cite{lindenlaub2017} to learn how production technology in the U.S.\ has evolved. We estimate the production technology parameters in the model using the dataset and sieve estimators. The National Longitudinal Survey of Youth (NLSY) data and U.S. Department of Labor Occupational Characteristics Database (O{*}NET) are used to construct workers' cognitive and manual skills as well as the occupational skill requirements of firms. To assess the effect of technological changes on wage inequality, we compare estimation results based on two cohorts: the first cohort commencing in 1979 (referred to as NLSY79) and the second commencing in 1997 (referred to as NLSY97). Following Lindenlaub's main specification, we focus on employed workers aged 27 to 29 during the years 1990-91 and 2009-10, sourced from the NLSY79 and NLSY97 cohorts, respectively.\footnote{The dataset excludes military samples and oversamples of special demographic/racial groups to give primary focus on the core sample of the NLSY.} The wage, $w$, is defined as the CPI-adjusted hourly rate.  

Firms' skill demands, $(y_C,y_M),$ are constructed from the O{*}NET, which contains information on skill requirements for each occupation. \citet{sanders2014} classifies occupational skill requirements into cognitive and manual categories and constructs two task scores for over 400 occupations. These scores are employed to obtain skill demands at the occupation level.\footnote{For instance, the occupation `dancer' has a normalized cognitive score ($y_C$) of $0.34$ and a normalized manual score ($y_M$) of 1. By contrast, `physicist' has a skill demand bundle of $(y_C=1,y_M=0.11).$ We use the normalized task scores for illustration purposes following \cite{lindenlaub2017}. The original supports of worker skills and firms' skill demands are provided in Table \ref{tab: statistics-matching}.} The skill bundle $(x_C,x_M)$ is constructed using survey responses in the NLSY on education and training that are predetermined prior to employment. Given college education, apprenticeships, government training degrees, and occupational training history, workers are qualified for specific occupations from which their skill bundles are determined.\footnote{For example, a worker who studied economics at a university is qualified for the `economist' job. Then the worker possesses a normalized skill bundle of $\left(x_{C}=0.615,x_{M}=0.034\right)$, that is required to be an economist.} We argue that measurement errors in $y_i$ arise primarily from coarsening and aggregation of O*NET descriptors and survey noise. As the workers' skills are independent of their current (realized) occupation and skill requirements of each occupation are not tailored to individual workers, the exogeneity of measurement errors we impose for estimation is an empirically reasonable approximation in our setting.

Table \ref{tab: statistics-matching} presents summary statistics of workers' skills and firms' skill demands. In 1990/91, workers had higher cognitive skills on average than manual skills, and firms also required more cognitive skills than manual skills. Two decades later, workers had increased cognitive skills and decreased manual skills on average compared to 1990/91, with firms also showing a similar trend. The skill correlation ($\rho_x$) shifted from $-0.40$ to $-0.52$, indicating increased worker specialization. In contrast, the productivity correlation ($\rho_y$) remained stable at $-0.49$. Initially, jobs were more specialized than workers, but skill supply caught up, resulting in slightly greater worker specialization in 2009/10.

\begin{table}[t!]
\caption{\label{tab: statistics-matching}Summary statistics of worker skills ($x$) and firms' skill demands ($y$)}
\centering{}%
\begin{tabular}{cr@{\extracolsep{0pt}.}lr@{\extracolsep{0pt}.}lr@{\extracolsep{0pt}.}lr@{\extracolsep{0pt}.}lr@{\extracolsep{0pt}.}lr@{\extracolsep{0pt}.}lr@{\extracolsep{0pt}.}lr@{\extracolsep{0pt}.}l}
\hline 
 & \multicolumn{8}{c}{1990/91 ($n=2984$)} & \multicolumn{8}{c}{2009/10 ($n=4495$)}\tabularnewline
\cmidrule(lr){2-9} \cmidrule(lr){10-17}
 & \multicolumn{2}{c}{$x_{C}$} & \multicolumn{2}{c}{$x_{M}$} & \multicolumn{2}{c}{$y_{C}$} & \multicolumn{2}{c}{$y_{M}$} & \multicolumn{2}{c}{$x_{C}$} & \multicolumn{2}{c}{$x_{M}$} & \multicolumn{2}{c}{$y_{C}$} & \multicolumn{2}{c}{$y_{M}$}\tabularnewline
\hline 
{Mean} & 0&3596 & -0&2912 & 0&0135 & -0&1189 & 0&5667 & -0&6601 & 0&0468 & -0&2509\tabularnewline
{SD} & 0&7423 & 0&9923 & 0&8490 & 1&0240 & 0&7556 & 0&8358 & 0&9280 & 0&9656\tabularnewline
{Min} & -2&0595 & -1&7004 & -2&0622 & -1&6949 & -2&3019 & -1&8116 & -2&5200 & -1&6597\tabularnewline
{Max} & 2&1649 & 2&1855 & 2&0925 & 2&1895 & 1&9160 & 2&1838 & 3&0504 & 2&1351\tabularnewline
\hline 
\end{tabular}
\end{table}

\cite{lindenlaub2017} transforms each element of $x$ and $y$ into a standard Gaussian variable, and their dependence is modeled using the Gaussian copula. However, while their marginal distributions are standard normal, the transformed variables are not guaranteed to have a joint normal distribution. Let $\tilde{x}$ and $\tilde{y}$ be Gaussian-transformed $x$ and $y$ respectively. As shown in Figure \ref{fig:dist-transformed}, the joint distributions of $\tilde{x}$ and $\tilde{y}$ are not normal. In particular, the joint density of $\tilde{y}$ is multi-modal in both periods. 

\begin{figure}[t!]
\begin{centering}
\includegraphics[width=1\textwidth]{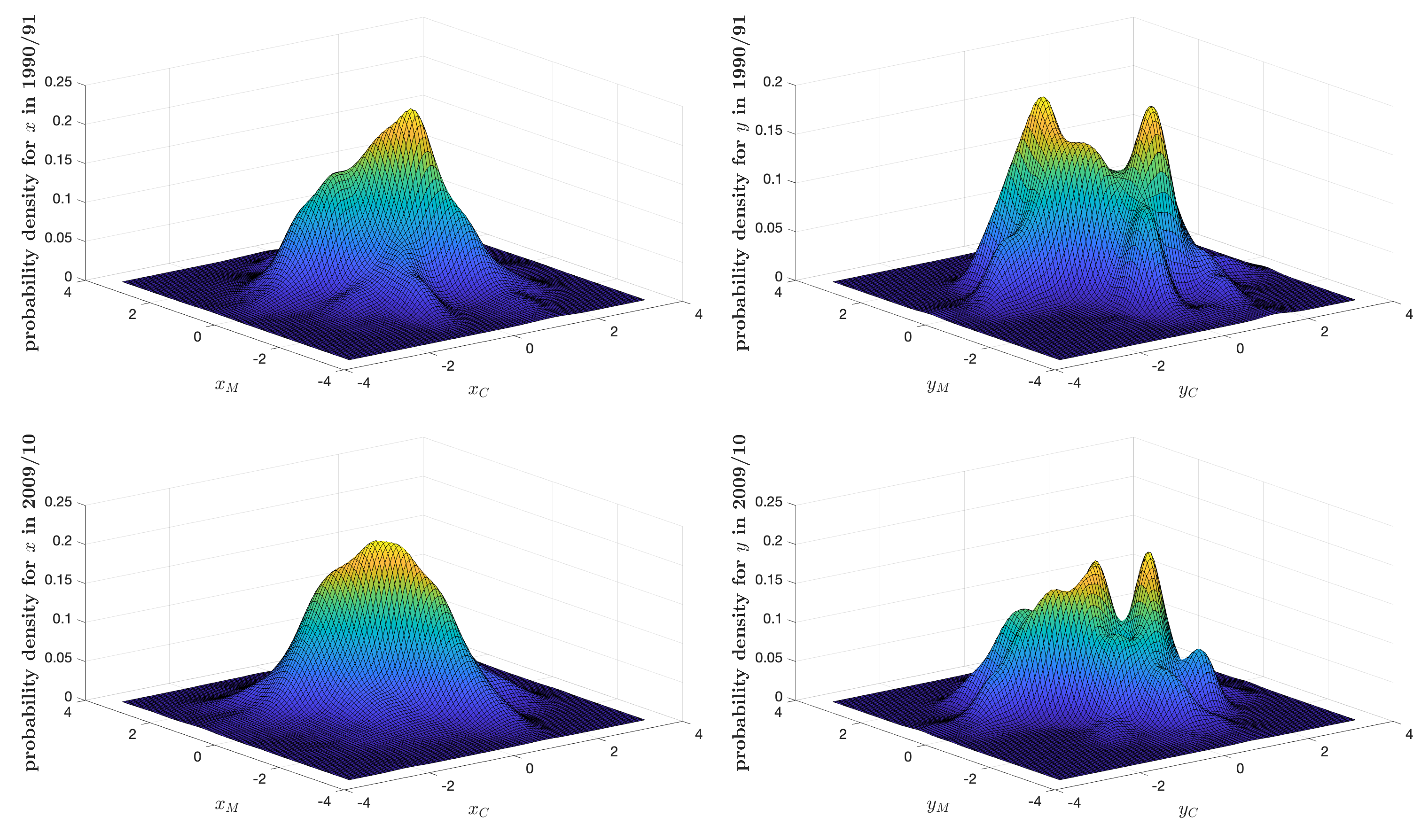}
\par\end{centering}
\caption{\label{fig:dist-transformed}Joint densities of $\tilde{x}$ and $\tilde{y}$ (transformed data)}
\end{figure}

We employ Mardia's test \citep{mardia1970} to formally test the joint normality of $\tilde{x}$ and $\tilde{y}$. We first create an $n \times n$ matrix for $\tilde{x}$:
\[
    C=\left(c_{ij}\right)=x^{*}S^{-1}\left(x^{*}\right)',
\]
where the $i$-th row of $x^{*}$ is $x_{i}^{*}=\tilde{x}_{i}-\sum_{i=1}^{n}\tilde{x}_{i}/n$, and define multivariate measures of skewness and kurtosis as follows:
\[
    b_{1}=\frac{1}{n^{2}}\sum_{i=1}^{n}\sum_{j=1}^{n}c_{ij}^{3},\quad
    b_{2}=\frac{1}{n}\sum_{i=1}^{n}c_{ii}^{2}.
\]
Under bivariate normality, the limiting distribution of $\frac{nb_{1}}{6}$ is a chi-square distribution with $d\left(d+1\right)\left(d+2\right)/6$ degrees of freedom and the limiting distribution of $\frac{\sqrt{n}\left(b_{2}-d\left(d+2\right)\right)}{\sqrt{8d\left(d+2\right)}}$ is the standard normal distribution where $d$ is the dimensionality of $\tilde{x}$. We conduct the same procedure for $\tilde{y}$. Table \ref{tab: mardia-transformed} shows the test statistics. The test strongly rejects the bivariate normality of $\tilde{x}$ and $\tilde{y}$ in both periods. The normality assumption imposed in Lindenlaub's model is not satisfied even after transforming $x$ and $y$. Therefore, our semiparametric approach is more appropriate in this case.

\begin{table}[tbh]
\caption{\label{tab: mardia-transformed}Mardia's multivariate normality test statistics (p-values in parentheses)}
\centering{}%
\begin{tabular}{cr@{\extracolsep{0pt}.}lr@{\extracolsep{0pt}.}lr@{\extracolsep{0pt}.}lr@{\extracolsep{0pt}.}l}
\hline 
 & \multicolumn{4}{c}{1990/91 ($n=2984$)} & \multicolumn{4}{c}{2009/10 ($n=4495$)}\tabularnewline
\cmidrule(lr){2-5} \cmidrule(lr){6-9}
 & \multicolumn{2}{c}{$\tilde{x}$} & \multicolumn{2}{c}{$\tilde{y}$} & \multicolumn{2}{c}{$\tilde{x}$} & \multicolumn{2}{c}{$\tilde{y}$}\tabularnewline
\hline 
Skewness & 4&58 (0.333) & 100&09 (0.000) & 16&34 (0.003) & 145&14 (0.000)\tabularnewline
Kurtosis & 4&44 (0.000) & 0&29 (0.774) & 14&42 (0.000) & 1&98 (0.048)\tabularnewline
\hline 
\end{tabular}
\end{table}

We estimate the production technology in each period separately. As in the simulations, we shut down between-task complementarities ($A$ is diagonal) following \cite{lindenlaub2017}. As described in the simulation section, the corrected ML procedure (ML$^*$) modifies \cite{lindenlaub2017}'s MLE to use the latent productivity correlation in place of the observed one. Here we further generalize ML$^*$ to accommodate the case where the true skill requirements $\tilde{y}^*$ have variances not equal to $1$ (since the inverse transform method converts the measurement error contaminated $\tilde{y}$, not $\tilde{y}^*$) and allow the Gaussian measurement errors to be correlated with each other. We then estimate the technology parameters using our sieve estimators (SML, SLS, and SGLS) with Bernstein polynomial basis functions of degree $3$, which perform the best in terms of information criteria and model fit. We compare our semiparametric estimates of the technology parameters to Lindenlaub's original estimates and the ML$^{*}$ estimates.

\begin{table}[t!]
\caption{\label{tab: est-matching-trans}Estimates of production technology parameters on transformed data}
\centering{}%
{\footnotesize
\begin{tabular}
{cr@{\extracolsep{0pt}.}lr@{\extracolsep{0pt}.}lr@{\extracolsep{0pt}.}lr@{\extracolsep{0pt}.}lr@{\extracolsep{0pt}.}lr@{\extracolsep{0pt}.}lr@{\extracolsep{0pt}.}lr@{\extracolsep{0pt}.}lr@{\extracolsep{0pt}.}lr@{\extracolsep{0pt}.}l}
\hline 
 & \multicolumn{10}{c}{1990/91} & \multicolumn{10}{c}{2009/10}\tabularnewline
\cmidrule(lr){2-11} \cmidrule(lr){12-21}
& \multicolumn{2}{c}{ML} &\multicolumn{2}{c}{ML*} & \multicolumn{2}{c}{SML} & \multicolumn{2}{c}{SLS} & \multicolumn{2}{c}{SGLS} &\multicolumn{2}{c}{ML} &\multicolumn{2}{c}{ML*} & \multicolumn{2}{c}{SML} & \multicolumn{2}{c}{SLS} & \multicolumn{2}{c}{SGLS}\tabularnewline
\hline 
$\alpha_{CC}$ &  0&203 & 0&765 & 0&454 & 0&000  & 0&000  & 0&739 &  1&119 &  2&048 & 2&293 & 2&290\tabularnewline
              & (0&342)&(0&574)&(0&009)& (0&000)&(0&000) &(0&198)& (0&408)& (0&036)&(0&256)&(0&265)\tabularnewline
$\alpha_{MM}$ &  0&479 & 1&270 & 1&422 & 1&084  & 0&856  & 0&055 &  0&486 &  0&237 & 1&033 & 0&291\tabularnewline
              & (0&175)&(0&148)&(0&031)& (0&113)&(0&098) &(0&154)& (0&633)& (0&006)&(0&215)&(0&059)\tabularnewline
$\beta_{C}$   &  1&686 & 1&711 & 1&692 & 1&719  & 1&585  & 2&203 &  2&208 &  2&115 & 2&063 & 2&198\tabularnewline
              & (0&143)&(0&589)&(0&068)& (0&434)&(0&416) &(0&152)& (0&540)& (0&068)&(0&589)&(0&534)\tabularnewline
$\beta_{M}$   & -0&421 & -0&392 &-0&374 & -0&382 &-0&388  & 0&210 &  0&243 &  0&198 & 0&180 & 0&327\tabularnewline
              & (0&141)&(0&406)&(0&068)& (0&329)&(0&309) &(0&152)& (0&731)& (0&076)&(0&545)&(0&532)\tabularnewline
\hline 
\multicolumn{21}{l}{{\footnotesize Standard errors in parentheses. ML indicates the original estimates in \cite{lindenlaub2017}.}}\tabularnewline
\end{tabular}}
\end{table}

The estimation results are provided in Table \ref{tab: est-matching-trans}. All the models clearly show a substantial shift in the relative importance between manual and cognitive tasks over the two decades. Our results are qualitatively consistent with Lindenlaub's but quantitatively very different. In 1990/91, the estimated complementarity in manual tasks was much larger than in cognitive tasks. The Gaussian model (ML$^*$) indicates that the complementarity in manual tasks is roughly 1.7 times as large as that of cognitive tasks. When we dispense with the normality assumption on skill demand and supply, the ratio becomes larger than 3. When we further generalize the model by removing the Gaussian assumption on measurement errors, the complementarity in cognitive tasks shrinks close to 0 (but significantly larger than 0), whereas that of manual tasks is still close to 1. The estimates of linear productivity coefficients $\beta_C$ and $\beta_M$ are similar across all the specifications as shown in the simulations.

\begin{figure}[b!]
\begin{centering}
\includegraphics[width=1\textwidth]{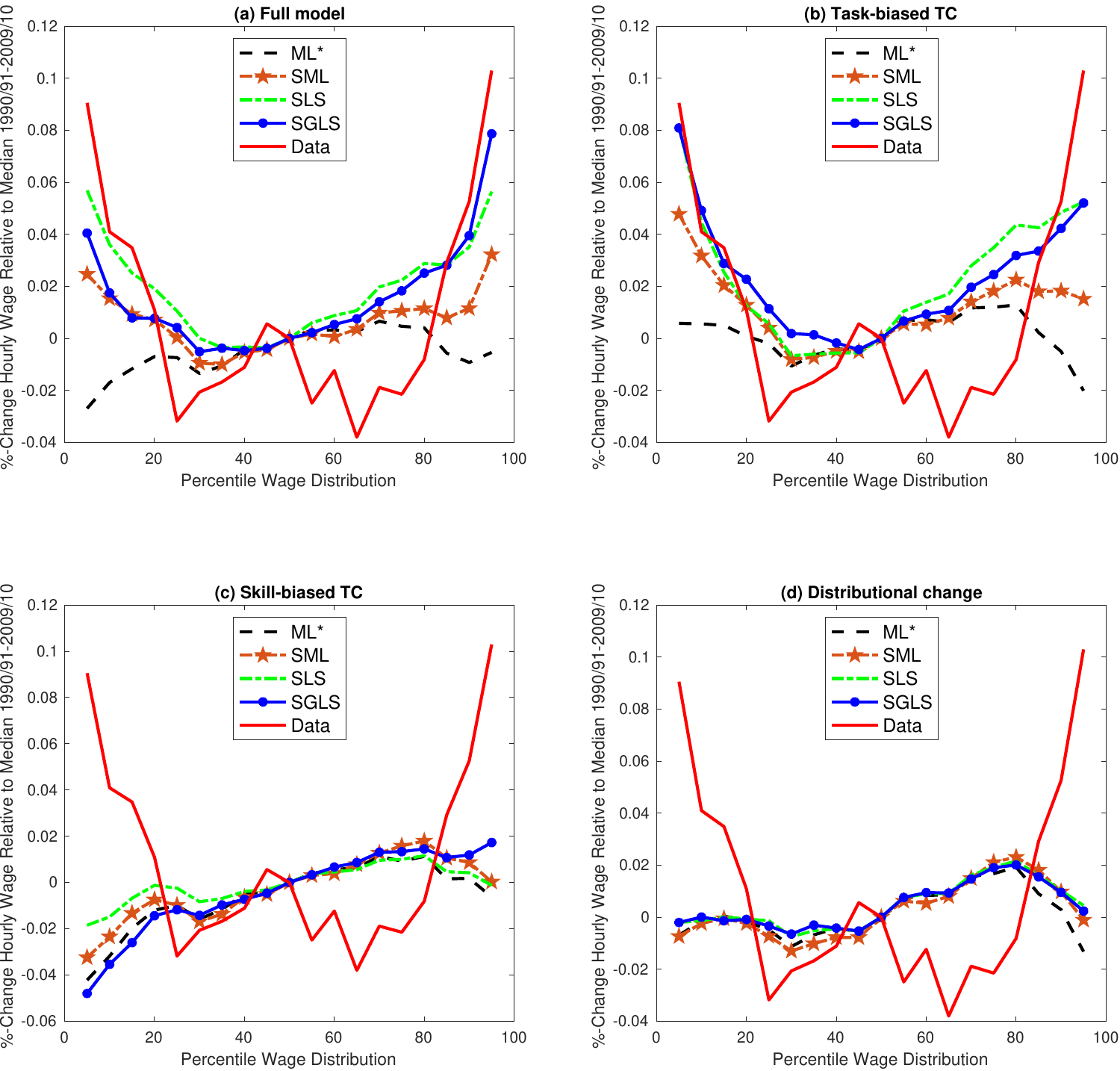}
\par\end{centering}
\caption{\label{fig:wagepol}Actual and model predicted wage polarization (transformed data)}
\end{figure}

This pattern reverses over the following 20 years. All the complementarity estimates for 2009/10 indicate a substantial increase in the complementarity between cognitive worker and job attributes, whereas the complementarity in manual tasks sharply decreased. In the Gaussian model (ML$^*$), the ratio of estimated complementarities in manual tasks to cognitive tasks ($\frac{\alpha_{MM}}{\alpha_{CC}}$) is around $0.43$, which is similar to the estimated value by SLS. However, SML and SGLS deliver much smaller values close to $0.1$. The linear coefficients are similar across specifications. These patterns imply substantial changes in the relative complementarities across tasks because of technological advances. Lindenlaub describes this phenomenon as ``task-biased technological change in favor of cognitive tasks''.
The cognitive dimension became much more important in labor market sorting. Our semiparametric models suggest that the ``task-biased technological change'' favoring cognitive tasks in the last two decades may have been much larger than previously found. The increases in $\beta_C$ and $\beta_M$ indicate that both cognitive and manual skill productivity have risen. However, the estimated manual skill productivity in both periods is insignificant in most specifications. Therefore, following Lindenlaub's description, we can conclude that the U.S. economy has also experienced ``skill-biased technological change'' in favor of cognitive skills. 

We now investigate the effect of estimated technological changes on wage inequality. Wage inequality in the U.S. labor market until the late 2000s is well characterized by \textit{wage polarization}, which is defined as stronger wage growth in the bottom and upper tails relative to the median. The red solid line in Figure \ref{fig:wagepol} plots how wages changed relative to the median wage between 1990/91 and 2009/10 by wage percentile, implying that the U.S. labor market experienced a spike in the upper-tail wage inequality, while the lower-tail inequality declined. This phenomenon is surprisingly well predicted by our models, as shown in Figure \ref{fig:wagepol} (a). All the semiparametric models predict substantial wage polarization once the estimated parameter values are fed in. By contrast, the Gaussian model fails to account for wage polarization in both tails. We can also observe that the model fit improves as the model becomes more flexible.

\begin{figure}[b!] 
\begin{centering}
\includegraphics[width=1\textwidth]{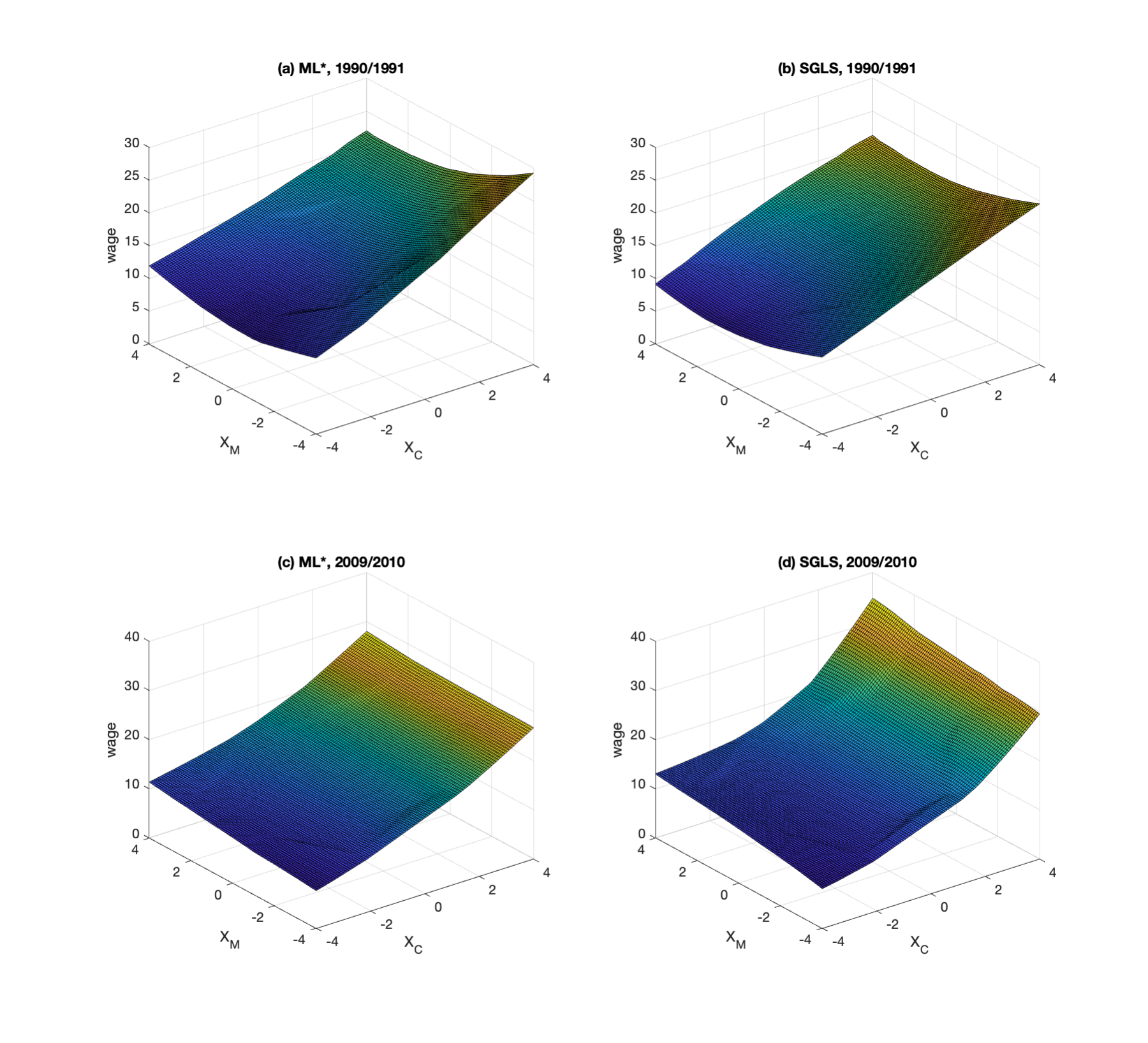}
\par\end{centering}
\caption{\label{fig:wage3d}Estimated wage functions (transformed data)}
\end{figure}

To further explore why the matching model requires greater flexibility to account for wage polarization, we compare the curvature of estimated wage functions across different models in Figures \ref{fig:wage3d}--\ref{fig:wage curvature}. In 1990/91, all models produce almost linear wage functions, indicating a relatively uniform relationship between wages and cognitive skills. However, in 2009/10, our semiparametric models predict a significantly steeper curvature, particularly at high cognitive skill levels, whereas the Gaussian model still generates a more linear wage function. The Gaussian model constrains the wage function to a quadratic form of standard normal variables, limiting its shape to a low-degree polynomial. In contrast, our models do not impose such constraints, allowing for greater flexibility in curvature that better fits the data. Notably, our most flexible model predicts a sharply increasing slope in the wage function for 2009/10, which is relatively flat at low cognitive skill levels and very steep at high skill levels, generating substantial wage polarization.

\begin{figure}[t!]
\begin{centering}
\includegraphics[width=1\textwidth]{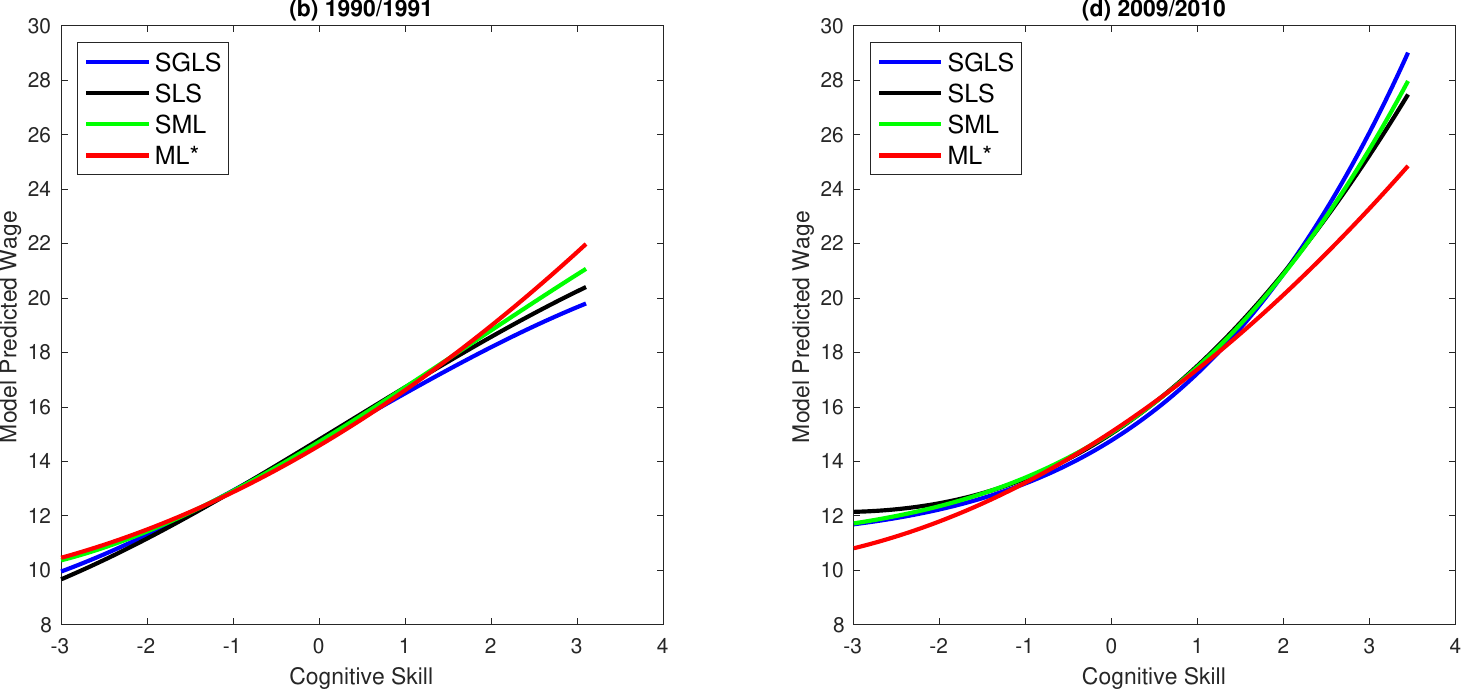}
\par\end{centering}
\caption{\label{fig:wage curvature}Predicted wage with respect to cognitive skill (transformed data)}
\end{figure}

To understand the driving forces behind wage polarization, we isolate the effects of technological and distributional changes in Figure \ref{fig:wagepol}. We only keep task-biased technological change (shutting down changes in linear productivity coefficients) in panel (b), skill-biased technological change (shutting down changes in complementarity parameters) in (c), and distributional change (shutting down both changes in linear productivity and complementarity parameters) in (d). We find that task-biased technological change explains wage polarization remarkably well. In particular, all three semiparametric models exhibit an excellent fit in the lower tail in panel (b), while the Gaussian model shows only a slight decline in lower tail inequality. In contrast, skill-biased technological change exacerbates wage inequality in the lower tail as shown in panel (c). The distributional change has a negligible impact on wage inequality. In summary, despite their parsimony, our matching models effectively account for the evolution of wage inequality over the past 20 years in the U.S., with task-biased technological change being the primary driver of the observed pattern.

\begin{figure}[b!]
\begin{centering}
\includegraphics[width=1\textwidth]{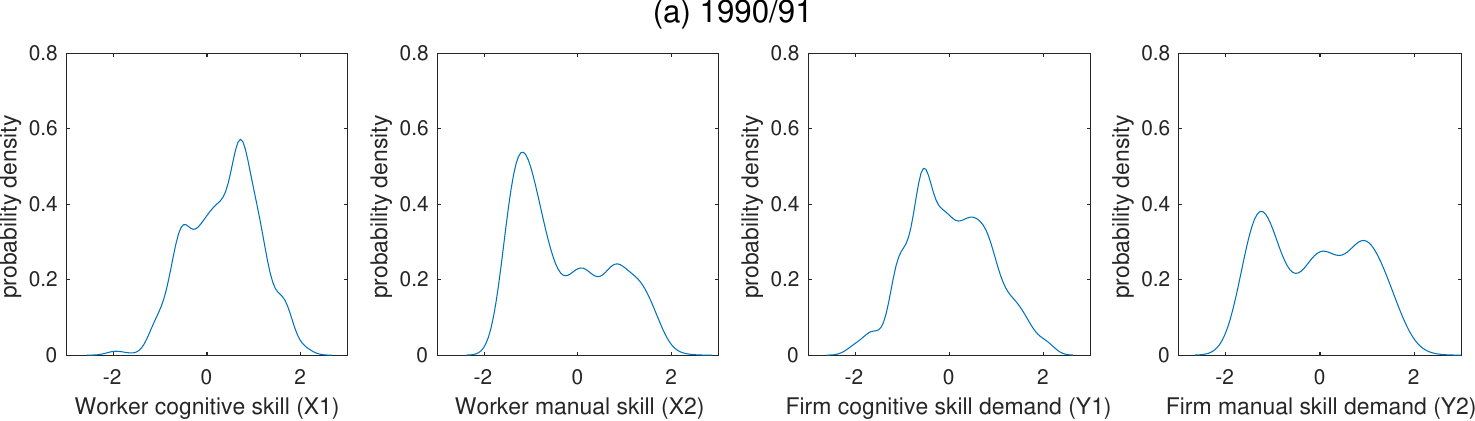}
\medskip

\includegraphics[width=1\textwidth]{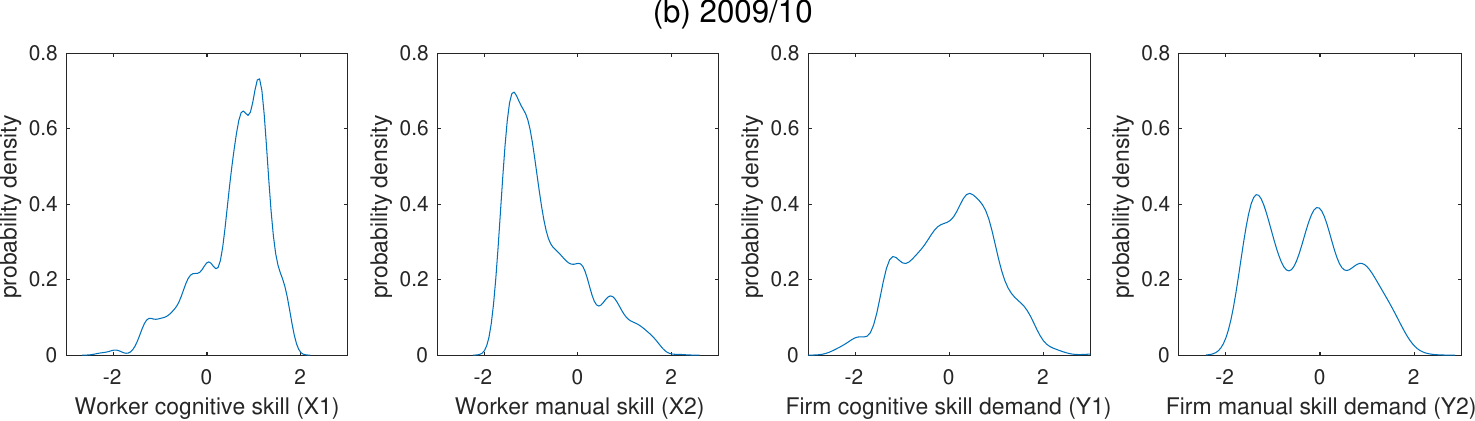}
\par\end{centering}
\caption{\label{fig:marginal dist}Marginal distributions of skill supply and demand}
\end{figure}

Lastly, we estimate our semiparametric models on the original data. Unlike the Gaussian model, our models can be applied directly to the data without any transformation. As illustrated in Figure \ref{fig: lindenlaub}, the matching occurs based on the original distributions. Hence, transforming marginal distributions to standard normal can lead to a different solution from $T(x).$ The marginal distributions of workers' skill supply $x$ and firms' skill demand $y$ in Figure \ref{fig:marginal dist} are skewed and multi-modal, quite different from standard normal. Therefore, it is crucial to investigate the robustness of the Gaussian model using the original data. We report the estimated parameters in Table \ref{tab: est-matching-original}, using Bernstein polynomial basis functions of degree 4 to accommodate the less well-behaved original data. The estimated parameters reveal similar patterns across our models. Notably, the complementarity in cognitive tasks increased significantly from near 0 in 1990/91 to around 6 in 2009/10, while the complementarity in manual tasks decreased from near 2 to almost 0. Additionally, linear skill productivity improved, with a larger increase in cognitive skill productivity. These findings confirm that the U.S. economy experienced substantial task-biased and skill-biased technological changes favoring cognitive skills over the two decades.

\begin{table}[tbh]
\caption{\label{tab: est-matching-original}Estimates of production technology parameters on original data}
\centering{}%
\footnotesize
\begin{tabular}
{cr@{\extracolsep{0pt}.}lr@{\extracolsep{0pt}.}lr@{\extracolsep{0pt}.}lr@{\extracolsep{0pt}.}lr@{\extracolsep{0pt}.}lr@{\extracolsep{0pt}.}l}
\hline 
 & \multicolumn{6}{c}{1990/91} & \multicolumn{6}{c}{2009/10}\tabularnewline
 \cmidrule(lr){2-7} \cmidrule(lr){8-13}
 & \multicolumn{2}{c}{SML} & \multicolumn{2}{c}{SLS} & \multicolumn{2}{c}{SGLS} & \multicolumn{2}{c}{SML} & \multicolumn{2}{c}{SLS} & \multicolumn{2}{c}{SGLS}\tabularnewline
\hline 
$\alpha_{CC}$ & 0&000  & 0&000 &  0&000 &  6&032 & 5&784 & 6&471\tabularnewline
              &(0&000) &(0&000)& (0&000)& (0&110)&(0&184)&(0&196)\tabularnewline
$\alpha_{MM}$ & 1&947  & 2&292 &  2&234 &  0&000 & 1&020 & 0&003\tabularnewline
              &(0&060) &(0&081)& (0&079)& (0&000)&(0&024)&(0&000)\tabularnewline
$\beta_{C}$   & 2&238  & 2&246 &  2&108 &  3&395 & 3&299 & 3&707\tabularnewline
              &(0&083) &(0&225)& (0&220)& (0&101)&(0&396)&(0&292)\tabularnewline
$\beta_{M}$   &-0&341  & -0&441 & -0&317 & 0&254 & 0&384 & 0&440\tabularnewline
              &(0&073) &(0&211)& (0&213)& (0&097)&(0&333)&(0&239)\tabularnewline
\hline 
\multicolumn{13}{l}{{\footnotesize{}Standard errors in parentheses}}\tabularnewline
\end{tabular}
\end{table}

\begin{figure}[h!]
\begin{centering}
\includegraphics[width=1\textwidth]{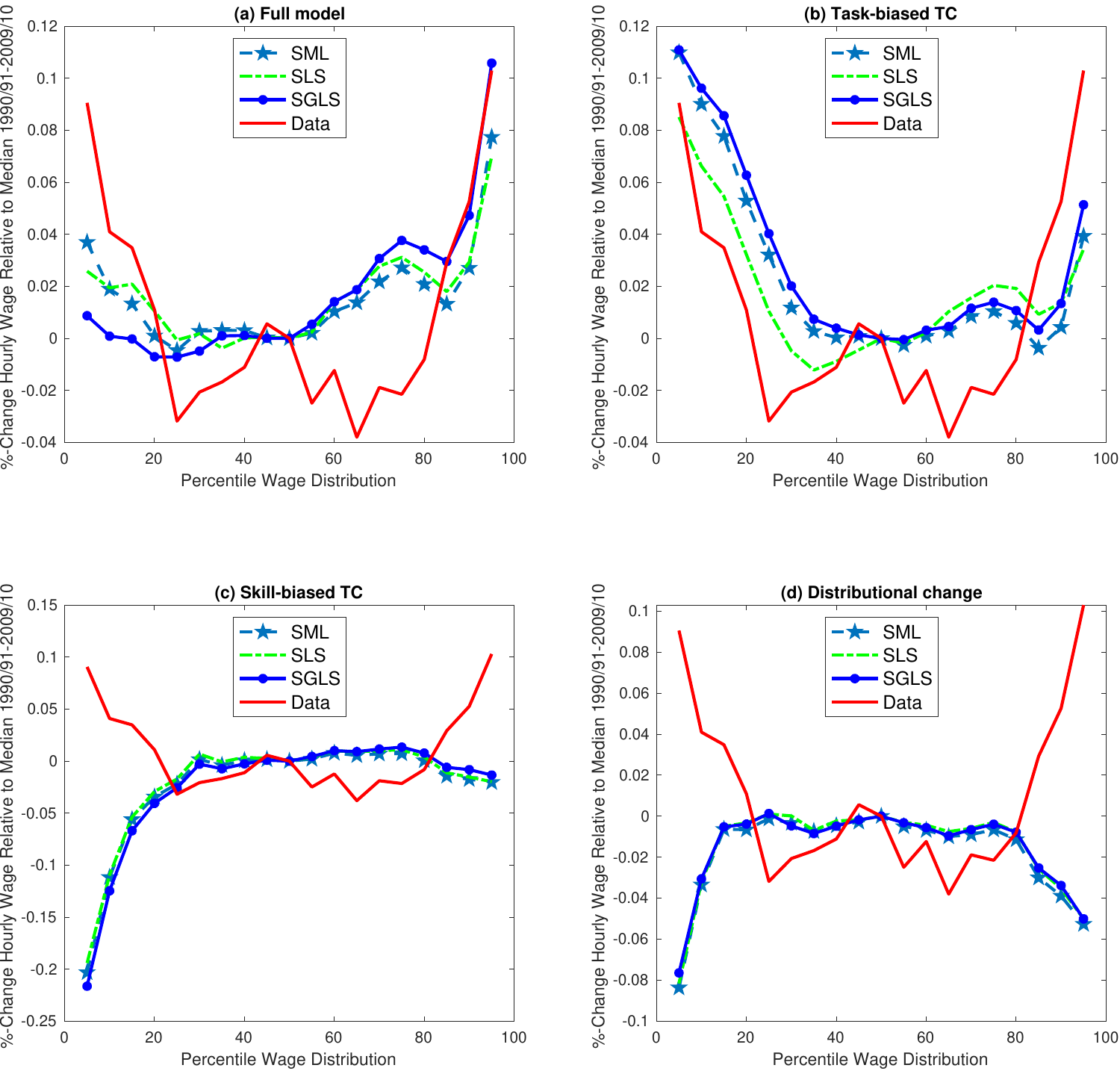}
\par\end{centering}
\caption{\label{fig:wagepol-UT}Actual and model predicted wage polarization (original data)}
\end{figure}

The estimated models on the original data effectively capture the patterns of wage polarization, particularly in the upper tail, as shown in Figure \ref{fig:wagepol-UT}. While the models slightly under-predict wage polarization in the lower tail, they confirm that task-biased technological change was the primary driver of wage polarization. In the absence of skill-biased technological change, the model shows a significant relative wage increase in the lower tail. However, skill-biased technological change had a negative impact on wage inequality, exacerbating lower tail inequality. Distributional change improved upper tail inequality but worsened lower tail inequality.

In summary, the estimated semiparametric models on the original data exhibit similar patterns to those on the Gaussian transformed data, with task- and skill-biased technological changes being more pronounced. Our models demonstrate a remarkable fit for wage polarization, highlighting the substantial changes in production technology in the U.S. over the past two decades. This exercise showcases the versatility and effectiveness of our semiparametric models and sieve-based estimators, which can accommodate any underlying joint distributions of skill supply and demand without requiring data transformation or distributional assumptions. Moreover, our approach builds upon standard sieve-based estimators, which have been shown to achieve semiparametric efficiency and are easy to implement in practice.


\section{Conclusion}\label{sec: conclusion}

Theoretical matching models often face empirical challenges due to discrepancies between model assumptions and real-world data. \cite{lindenlaub2017} presents a tractable theoretical model suitable for comparative statics and qualitative analysis of multidimensional matching. However, its empirical application is limited by restrictive distributional assumptions on observed characteristics and measurement errors. We generalize this model by relaxing these key distributional restrictions, enabling our models to accommodate datasets with matched pairs, regardless of the underlying characteristic and error distributions. Our simulation results demonstrate the accuracy of our semi-nonparametric estimators across various data-generating processes. Moreover, our flexible models generate significant wage polarization, aligning with U.S. data patterns, whereas the parametric Gaussian model falls short. Our estimated models indicate that task-biased technological progress, favoring cognitive abilities over manual skills, is the primary driver of wage polarization.

Our empirical results are consistent with task-biased technical change raising the relative productivity (and returns) to cognitive skill bundles and contributing to wage polarization through equilibrium sorting. Looking ahead, the diffusion of generative artificial intelligence may further reshape the relative productivity of cognitive tasks and—where AI complements higher-order cognition and coordination—could amplify these forces, although the direction and magnitude remain an empirical question. A first policy implication is to broaden access to cognitive and hybrid skill formation through higher education, STEM pathways, and mid-career retraining targeted to task bundles whose returns are rising. Second, policies that foster broad diffusion of productivity-enhancing technologies and expand access to high-quality job opportunities—across firms, industries, and regions—may mitigate inequality by reducing barriers to adoption and complementarity-driven sorting. Finally, redistributive instruments such as wage insurance and earnings-smoothing policies can help insure workers against technology-driven wage risk during transitions.

Our study opens up promising research avenues. First, incorporating additional dimensions such as interpersonal and digital skills into our model holds great potential. By employing advanced techniques such as artificial neural networks \citep{chen2023efficient} to approximate equilibrium functions in high-dimensional spaces, one could enhance the model's accuracy in elucidating intricate matching patterns and their impact on wage inequality. Second, addressing measurement errors in assessing worker skills is crucial. Overcoming this challenge requires innovative econometric approaches that can accurately estimate models amidst multidimensional measurement errors. Finally, our framework's application extends beyond the worker-job matching problem, offering insights into matching problems in diverse contexts such as the marriage market.

While our approach offers valuable insights, it is essential to acknowledge its limitations. As in \citet{lindenlaub2017}, our frictionless matching model, grounded in optimal transport, assumes that all workers are employed and all jobs are filled. The lack of treatment of unemployed workers and job vacancies (unmatched individuals on both sides of the market) is a limitation of the balanced optimal transport formulation employed here and in \citet{lindenlaub2017}. Incorporating unemployment and vacancies would require extending the framework to partial or unbalanced optimal transport, or to models with search frictions. Introducing randomness in assignment through factors such as search frictions and unobserved heterogeneity poses a fruitful challenge, especially when extending existing one-dimensional theories, e.g., those in \citet{eeckhout2011}, to multidimensional settings.



\newpage
\appendix
\setcounter{page}{1}

\section{Technical proofs}\label{appen: proof}
\begin{proof}[Proof of Proposition \ref{Prop:Matchingiden}]
In equilibrium, the firm maximizes its profit, so the first-order condition of the firm's maximization problem is satisfied:
{\small\[
    \nabla w^{*}\left(x\right)-b=\left.\nabla_{x}x'\tilde{y}\right|_{\tilde{y}=T\left(x\right)}.
\]}
By Theorem 2.12 in \citet{villani2003}, $\left.\nabla_{x}x'\tilde{y}\right|_{\tilde{y}=T\left(x\right)} = \nabla w^{o*}\left(x\right)$ and therefore, 
{\small\[
    \nabla w^{*}\left(x\right)
    =\nabla w^{o*}\left(x\right) + b.
\]}
This implies that $w^{*}\left(x\right)=w^{o*}\left(x\right)+x'b+c$ where $c$ is the constant of integration.
\end{proof}

\begin{proof}[Proof of Theorem \ref{thm:MatchingID}]
We first note from $\mathbb{E}\left[\varepsilon_{wi}|x_{i}\right]=0$ that the convex function $w_{0}^{*}\left(x\right)=\mathbb{E}\left[w_{i}|x_{i}=x\right]=w_{0}\left(x\right)+x'b_{0}$
is identified. Since $\nabla w_{0}^{*}\left(x\right)$ is also identified and $\mathbb{E}\left[y_{i}|x_{i}=x\right]=A_{0}^{-1}\left(\nabla w_{0}^{*}\left(x\right)-b_{0}\right)$, the invertibility of $A_{0}$ implies that $A_{0}$ and $b_{0}$ are identified:
We consider $y_{1},\cdots,y_{d+1}\in\mathcal{Y}$ satisfying Assumption \ref{assu:bilinearID2}. Then, there are corresponding $d+1$ distinct points $x_{1},\cdots,x_{d+1}\in\mathcal{X}$ such that $A_{0}y_{1}=\nabla w_{0}\left(x_{1}\right),\cdots,A_{0}y_{d+1}=\nabla w_{0}\left(x_{d+1}\right)$. It follows from the invertibility of $A_{0}$ that 
$$\nabla w_{0}\left(x_{1}\right)-\nabla w_{0}\left(x_{2}\right)=A_{0}\left(y_{1}-y_{2}\right), \cdots,\nabla w_{0}\left(x_{d}\right)-\nabla w_{0}\left(x_{d+1}\right)=A_{0}\left(y_{d}-y_{d+1}\right)$$ are independent. Since $\nabla w_{0}^{*}\left(x\right)=\nabla w_{0}\left(x\right)+b_{0}$, the differences $\nabla w_{0}^{*}\left(x_{1}\right)-\nabla w_{0}^{*}\left(x_{2}\right),\cdots,\nabla w_{0}^{*}\left(x_{d}\right)-\nabla w_{0}^{*}\left(x_{d+1}\right)$ are also linearly independent. Then, $A_{0}$ is identified from
{\small\[
    \begin{pmatrix}
    \mathbb{E}\left[y_{i}|x_{i}=x_{1}\right]-\mathbb{E}\left[y_{i}|x_{i}=x_{2}\right] \\ \vdots \\
    \mathbb{E}\left[y_{i}|x_{i}=x_{d}\right]-\mathbb{E}\left[y_{i}|x_{i}=x_{d+1}\right]\end{pmatrix}'
    =A_{0}^{-1}\begin{pmatrix}
    \nabla w_{0}^{*}\left(x_{1}\right)-\nabla w_{0}^{*}\left(x_{2}\right) \\ \vdots \\
    \nabla w_{0}^{*}\left(x_{d}\right)-\nabla w_{0}^{*}\left(x_{d+1}\right)\end{pmatrix}',
\]}
In turn, $b_{0}$ and $w_{0}\left(x\right)=w_{0}^{*}\left(x\right)-x'b_{0}$ are also identified.
\end{proof}

\begin{proof}[Proof of Proposition \ref{prop:convrate}]
We obtain the result by applying Theorem 3.2 in \citet{chen2007}. We note that
{\small\[\begin{split}
    &\rho'\left(z;\lambda\right)\Sigma\left(x\right)^{-1}\rho\left(z;\lambda\right)
    -\rho'\left(z;\lambda_{0}\right)\Sigma\left(x\right)^{-1}\rho\left(z;\lambda_{0}\right)\\
    &=\left[\rho\left(z;\lambda\right)-\rho\left(z;\lambda_{0}\right)\right]'
    \Sigma\left(x\right)^{-1}\left[\rho\left(z;\lambda\right)-\rho\left(z;\lambda_{0}\right)\right]
    -2\varepsilon'\Sigma\left(x\right)^{-1}
    \left[\rho\left(z;\lambda\right)-\rho\left(z;\lambda_{0}\right)\right].
\end{split}\]}
and
{\small\[
    \rho\left(z;\lambda\right)-\rho\left(z;\lambda_{0}\right)
    =\begin{pmatrix}
        w\left(x\right)-w_{0}\left(x\right)+x'\left(b-b_{0}\right) \\
        \nabla_{C}w_{0}\left(x\right)\left(\kappa_{C}-\kappa_{C0}\right)
        +\kappa_{C}\nabla_{C}\left\{w\left(x\right)-w_{0}\left(x\right)\right\} \\
        \nabla_{M}w_{0}\left(x\right)\left(\kappa_{M}-\kappa_{M0}\right)
        +\kappa_{M}\nabla_{M}\left\{w\left(x\right)-w_{0}\left(x\right)\right\}
    \end{pmatrix}.
\]}
Then, it follows from Assumption \ref{assu:Sigma} that
{\small\[
    \lVert\lambda-\lambda_{0}\rVert^{2}
    \asymp\mathbb{E}\left[\rho'\left(z_{i};\lambda\right)\Sigma\left(x_{i}\right)^{-1}\rho\left(z_{i};\lambda\right)
    -\rho'\left(z_{i};\lambda_{0}\right)\Sigma\left(x_{i}\right)^{-1}\rho\left(z_{i};\lambda_{0}\right)\right],
\]}
i.e., there exists a finite $C_{1}>0$ such that
{\small\[
    C_{1}^{-1}\lVert\lambda-\lambda_{0}\rVert^{2}
    \leq\mathbb{E}\left[\rho'\left(z_{i};\lambda\right)\Sigma\left(x_{i}\right)^{-1}\rho\left(z_{i};\lambda\right)
    -\rho'\left(z_{i};\lambda_{0}\right)\Sigma\left(x_{i}\right)^{-1}\rho\left(z_{i};\lambda_{0}\right)\right]
    \leq C_{1}\lVert\lambda-\lambda_{0}\rVert^{2}.
\]}
Condition 3.6 in \citet{chen2007} is assumed with Assumption \ref{assu:IID}. Now we check Conditions 3.7 and 3.8 in \citet{chen2007}. Again, Assumption \ref{assu:Sigma} implies that there exists a $C_{2}>0$ such that
{\small\[\begin{split}
    &\mathbb{E}
        \left[\left(\rho'\left(z_{i};\lambda_{0}\right)\Sigma\left(x_{i}\right)^{-1}\rho\left(z_{i};\lambda_{0}\right)
            -\rho'\left(z_{i};\lambda\right)\Sigma\left(x_{i}\right)^{-1}\rho\left(z_{i};\lambda\right)\right)^{2}\right]\\
    &\leq C_{2}\mathbb{E}\left[\left|\rho\left(z_{i};\lambda_{0}\right)
                -\rho\left(z_{i};\lambda\right)\right|_{e}^{4}\right].
\end{split}\]}
By Lemma 2 in \citet{chen1998}, we have $\lVert w-w_{0}\rVert_{\infty}\leq c\lVert w-w_{0}\rVert_{2}^{m/\left(m+1\right)}$ and $\lVert\nabla_{C}\left\{w-w_{0}\right\}\rVert_{\infty},\lVert\nabla_{M}\left\{w-w_{0}\right\}\rVert_{\infty}\leq c\lVert w-w_{0}\rVert_{2}^{\left(m-1\right)/m}$ for some finite $c>0$, where $\lVert w\rVert_{2}^{2}=\mathbb{E}\left[w^{2}\left(x_{i}\right)\right]$. Since $\lVert\cdot\rVert_{2}\asymp\lVert\cdot\rVert$,
{\small\[
    \mathbb{E}\left[
    \left(\rho'\left(z_{i};\lambda_{0}\right)\Sigma\left(x_{i}\right)^{-1}\rho\left(z_{i};\lambda_{0}\right)
    -\rho'\left(z_{i};\lambda\right)\Sigma\left(x_{i}\right)^{-1}\rho\left(z_{i};\lambda\right)\right)^{2}
    \right]
    \leq C_{3}\lVert \lambda-\lambda_{0}\rVert^{2\left[1+\left(m-1\right)/m\right]}.
\]}
So Condition 3.7 is satisfied for all $\varepsilon\leq1$. On the other hand,
{\small\[\begin{split}
    &\left|\rho'\left(z_{i};\lambda_{0}\right)\Sigma\left(x_{i}\right)^{-1}\rho\left(z_{i};\lambda_{0}\right)
    -\rho'\left(z_{i};\lambda\right)\Sigma\left(x_{i}\right)^{-1}\rho\left(z_{i};\lambda\right)\right|\\
    &\leq\lVert\lambda-\lambda_{0}\rVert_{\infty}\left|\Sigma\left(x_{i}\right)\right|^{-1}
    \left(2\left|\varepsilon\right|_{e}+\lVert\lambda\rVert_{\infty}+\lVert\lambda_{0}\rVert_{\infty}\right),
\end{split}\]}
almost surely. Using Lemma 2 in \citet{chen1998} again, Condition 3.8 is satisfied.

To apply Theorem 3.2 in \citet{chen2007}, it remains to compute the deterministic approximation error rate $\inf_{\lambda\in\Theta\times\mathcal{W}_{n}}\lVert\lambda-\lambda_{0}\rVert$ and the metric entropy with bracketing. By the same proof as that for Proposition 3.3 in \citet{chen2007}, they are also computed, and then the result follows.
\end{proof}

\begin{proof}[Proof of Theorem \ref{thm:AsympSieveGLSE}]
We obtain the limiting distribution of $\hat{\theta}_{n}$ by verifying that Assumptions 4.1 and 4.2 of Proposition 4.4 in \citet{chen2007} are satisfied. It is easy to see that Assumptions 4.2.(ii) and (iv) are satisfied with the expression for $\rho$. Assumptions 4.1.(i) and 4.2.(i) are our assumption \ref{assu:interior} and \ref{assu:Sigma}. Assumption 4.1.(ii) is implied by our assumption \ref{assu:Sigma} and \ref{assu:Dv*}. Assumption 4.1.(iii) is implied by our Proposition \ref{prop:convrate} and Assumption \ref{assu:Dv*}: there is $\pi_{n}u^{*}\in\mathcal{W}_{n}$ such that $\lVert\pi_{n}u^{*}-u^{*}\rVert\times\lVert\hat{\lambda}_{n}-\lambda_{0}\rVert=o_{p}\left(n^{-1/2}\right)$. Let $u_{\theta}^{*}=\left(u_{\theta1}^{*},u_{\theta2}^{*},u_{\theta3}^{*},u_{\theta4}^{*}\right)'=\left(\mathbb{E}\left[D_{v^{*}}\left(x_{i}\right)'
\Sigma\left(x_{i}\right)^{-1}D_{v^{*}}\left(x_{i}\right)\right]\right)^{-1}\eta$,
$u_{w}^{*}=-v^{*}u_{\theta}^{*}$ and $u^{*}=\left(u_{\theta}^{*},u_{w}^{*}\right)$, where $\eta\in\mathbb{R}^{4}$ is an arbitrary unit vector. It remains to show that Assumption 4.2.(iii) (Conditions 4.2$'$ and 4.3$'$) in \citet{chen2007} are satisfied with
{\small\[\begin{split}
    &\frac{\partial\ell\left(z;\bar{\lambda}\right)}{\partial\lambda}\left[\pi_{n}u^{*}\right]
    =\begin{pmatrix} \bar{w}\left(x\right)+x'\bar{b}-w \\
        \bar{\kappa}_{C}\nabla_{C}\bar{w}\left(x\right)-y_{C} \\
        \bar{\kappa}_{M}\nabla_{M}\bar{w}\left(x\right)-y_{M} \end{pmatrix}'
    \Sigma\left(x\right)^{-1}
    \begin{pmatrix}
        \pi_{n}u_{w}^{*}+u_{\theta3}^{*}x_{C}+u_{\theta4}^{*}x_{M} \\
        u_{\theta1}^{*}\nabla_{C}\bar{w}\left(x\right)
        +\bar{\kappa}_{C}\nabla_{C}\left(\pi_{n}u_{w}^{*}\left(x\right)\right) \\
        u_{\theta2}^{*}\nabla_{M}\bar{w}\left(x\right)
        +\bar{\kappa}_{M}\nabla_{M}\left(\pi_{n}u_{w}^{*}\left(x\right)\right)
    \end{pmatrix}\\
    &=\begin{pmatrix}
        \bar{w}\left(x\right)-w_{0}\left(x\right)+x'\left(\bar{b}-b_{0}\right)-\varepsilon_{w} \\
        \bar{\kappa}_{C}\nabla_{C}\bar{w}\left(x\right)-\kappa_{C0}\nabla_{C}w_{0}\left(x\right)-\varepsilon_{C} \\
        \bar{\kappa}_{M}\nabla_{M}\bar{w}\left(x\right)-\kappa_{M0}\nabla_{M}w_{0}\left(x\right)-\varepsilon_{M}
     \end{pmatrix}'
    \Sigma\left(x\right)^{-1}
    \begin{pmatrix}
        \pi_{n}u_{w}^{*}+u_{\theta3}^{*}x_{C}+u_{\theta4}^{*}x_{M} \\
        u_{\theta1}^{*}\nabla_{C}\bar{w}\left(x\right)
        +\bar{\kappa}_{C}\nabla_{C}\left(\pi_{n}u_{w}^{*}\left(x\right)\right) \\
        u_{\theta2}^{*}\nabla_{M}\bar{w}\left(x\right)
        +\bar{\kappa}_{M}\nabla_{M}\left(\pi_{n}u_{w}^{*}\left(x\right)\right)
    \end{pmatrix}
\end{split}\]}
for all $\bar{\lambda}\in\Theta\times\mathcal{W}_{n}$ with $\lVert\bar{\lambda}-\lambda_{0}\rVert=o\left(1\right)$. Since
{\footnotesize\[
    \begin{pmatrix}
        \hat{w}\left(x\right)-w_{0}\left(x\right)+x'\left(\hat{b}-b_{0}\right) \\
        \hat{\kappa}_{C}\nabla_{C}\hat{w}\left(x\right)-\kappa_{C0}\nabla_{C}w_{0}\left(x\right) \\
        \hat{\kappa}_{M}\nabla_{M}\hat{w}\left(x\right)-\kappa_{M0}\nabla_{M}w_{0}\left(x\right)
    \end{pmatrix}
    =\frac{d\rho\left(z;\lambda_{0}\right)}{d\lambda}\left[\hat{\lambda}_{n}-\lambda_{0}\right]
    +\begin{pmatrix} 0 \\
    \left(\hat{\kappa}_{C}-\kappa_{C0}\right)
    \left(\nabla_{C}\hat{w}\left(x\right)-\nabla_{C}w_{0}\left(x\right)\right) \\
    \left(\hat{\kappa}_{M}-\kappa_{M0}\right)
    \left(\nabla_{M}\hat{w}\left(x\right)-\nabla_{M}w_{0}\left(x\right)\right) \end{pmatrix},
\]}
and
{\small\[\begin{split}
    &\begin{pmatrix}
        \pi_{n}u_{w}^{*}+u_{\theta3}^{*}x_{C}+u_{\theta4}^{*}x_{M} \\
        u_{\theta1}^{*}\nabla_{C}\hat{w}\left(x\right)
        +\hat{\kappa}_{C}\nabla_{C}\left(\pi_{n}u_{w}^{*}\left(x\right)\right) \\
        u_{\theta2}^{*}\nabla_{M}\hat{w}\left(x\right)
        +\hat{\kappa}_{M}\nabla_{M}\left(\pi_{n}u_{w}^{*}\left(x\right)\right)
    \end{pmatrix}\\
    &=\frac{d\rho\left(Z;\lambda_{0}\right)}{d\lambda}\left[\pi_{n}u^{*}\right]
    +\begin{pmatrix} 0 \\ u_{\theta1}^{*}\nabla_{C}\left(\hat{w}\left(x\right)-w_{0}\left(x\right)\right)
    +\left(\hat{\kappa}_{C}-\kappa_{C0}\right)\nabla_{C}\left(\pi_{n}u_{w}^{*}\left(x\right)\right) \\
    u_{\theta2}^{*}\nabla_{M}\left(\hat{w}\left(x\right)-w_{0}\left(x\right)\right)
    +\left(\hat{\kappa}_{M}-\kappa_{M0}\right)\nabla_{M}\left(\pi_{n}u_{w}^{*}\left(x\right)\right)
    \end{pmatrix},
\end{split}\]}
Condition 4.3$'$ is satisfied given the definition of $\lVert\cdot\rVert$ and
{\small\[
    \langle\hat{\lambda}_{n}-\lambda_{0},\pi_{n}u^{*}\rangle
    =\mathbb{E}\left[
    \left(\frac{d\rho\left(z_{i};\lambda_{0}\right)}{d\lambda}\left[\hat{\lambda}_{n}-\lambda_{0}\right]\right)'
    \Sigma\left(x_{i}\right)^{-1}
    \frac{d\rho\left(z_{i};\lambda_{0}\right)}{d\lambda}\left[\pi_{n}u^{*}\right]\right].
\]}
Condition 4.2$'$ can be verified by applying Lemma 4.2 in \citet{chen2007}. The condition on the metric entropy with bracketing for Lemma 4.2 is satisfied with Assumption \ref{assu:Holder}.
\end{proof}

\begin{proof}[Proof of Theorem \ref{thm:EfficSieveGLSE}]
We follow the proofs of Theorem 4.1 and 6.2 in \citet{ai2003}. Let $\mathcal{N}_{on}\equiv\left\{\lambda\in\Theta\times\mathcal{W}_{n}:\lVert\lambda-\lambda_{0}\rVert_{s}=o\left(1\right),\ \lVert\lambda-\lambda_{0}\rVert=o\left(n^{-1/4}\right)\right\}$. By Proposition \ref{prop:convrate}, the sieve estimator $\tilde{\lambda}_{n}$ in Step 1 satisfies
$\lVert\tilde{\lambda}_{n}-\lambda_{0}\rVert_{s}=o_{p}\left(1\right)$ and $\lVert\tilde{\lambda}_{n}-\lambda_{0}\rVert=o_{p}\left(n^{-1/4}\right)$. Hence $\tilde{\lambda}_{n}\in\mathcal{N}_{on}$. Using the proof similar to those of Proposition \ref{prop:convrate}, we can also show that $\hat{\lambda}_{n}\in\mathcal{N}_{on}$. Let $u_{0\theta}=\left(u_{0\theta1},u_{0\theta2},u_{0\theta3},u_{0\theta4}\right)'=\left(\mathbb{E}\left[D_{v_{0}}\left(x_{i}\right)'
\Sigma\left(x_{i}\right)^{-1}D_{v_{0}}\left(x_{i}\right)\right]\right)^{-1}\eta$,
$u_{0w}=-v_{0}u_{0\theta}$ and $u_{0}=\left(u_{0\theta},u_{0w}\right)$, where $\eta\in\mathbb{R}^{4}$ is an arbitrary unit vector. Then, we have
{\small\[
    \left(\theta-\theta_{0}\right)'\eta
    =\langle\lambda-\lambda_{0},u_{0}\rangle
    =\mathbb{E}\left[
    \left(\frac{d\rho\left(z_{i};\lambda_{0}\right)}{d\lambda}\left[\lambda-\lambda_{0}\right]\right)'
    \Sigma_{0}\left(x_{i}\right)^{-1}
    \frac{d\rho\left(z_{i};\lambda_{0}\right)}{d\lambda}\left[u_{0}\right]\right]
\]}
for all $\lambda\in\Lambda$. Let $\varepsilon_{n}=o\left(n^{-1/2}\right)>0$. Denote $u_{n0}:=\left(u_{n0\theta},u_{n0w}\right)=\pi_{n}u_{0}$ to simplify notation. We take a continuous path $\lambda\left(t\right)=\hat{\lambda}_{n}\pm t\varepsilon_{n}u_{n0}$. Then $\left\{\lambda\left(t\right):t\in\left[0,1\right]\right\}$ in $\mathcal{N}_{on}$. Let
{\small\[
    \hat{Q}_{n}\left(\lambda\left(t\right)\right)
    =-\frac{1}{n}\sum_{i=1}^{n}\rho\left(z_{i};\lambda\left(t\right)\right)'
        \hat{\Sigma}_{0}\left(x_{i}\right)^{-1}
        \rho\left(z_{i};\lambda\left(t\right)\right).
\]}
By definition of $\hat{\lambda}_{n}$, and a Taylor expansion around $t=0$ up to second order, we obtain
{\small\[
    0\leq
    \hat{Q}_{n}\left(\hat{\lambda}_{n}\right)-\hat{Q}_{n}\left(\hat{\lambda}_{n}\pm\varepsilon_{n}u_{n0}\right)
    =-\left.\frac{d\hat{Q}_{n}\left(\lambda\left(t\right)\right)}{dt}\right|_{t=0}
    -\frac{1}{2}\left.\frac{d^{2}\hat{Q}_{n}\left(\lambda\left(t\right)\right)}{dt^{2}}\right|_{t=s},
\]}
for some $s\in\left[0,1\right]$, where
{\small\[\begin{split}
    -\left.\frac{d\hat{Q}_{n}\left(\lambda\left(t\right)\right)}{dt}\right|_{t=0}
    =&\frac{\pm2\varepsilon_{n}}{n}\sum_{i=1}^{n}\rho\left(z_{i};\hat{\lambda}_{n}\right)'
        \hat{\Sigma}_{0}\left(x_{i}\right)^{-1}
        \frac{d\rho\left(z_{i};\hat{\lambda}_{n}\right)}{d\lambda}\left[u_{n0}\right]\\
    -\left.\frac{d^{2}\hat{Q}_{n}\left(\lambda\left(t\right)\right)}{dt^{2}}\right|_{t=s}
    =&\underbrace{\frac{2\varepsilon_{n}^{2}}{n}\sum_{i=1}^{n}\rho\left(z_{i};\lambda\left(s\right)\right)'
        \hat{\Sigma}_{0}\left(x_{i}\right)^{-1}
        \frac{d^{2}\rho\left(z_{i};\lambda\left(s\right)\right)}{d\lambda d\lambda}\left[u_{n0},u_{n0}\right]}_{:=A_{1}}\\
    &+\underbrace{\frac{2\varepsilon_{n}^{2}}{n}\sum_{i=1}^{n}
        \left(\frac{d\rho\left(z_{i};\lambda\left(s\right)\right)}{d\lambda}\left[u_{n0}\right]\right)'
        \hat{\Sigma}_{0}\left(x_{i}\right)^{-1}
        \frac{d\rho\left(z_{i};\lambda\left(s\right)\right)}{d\lambda}\left[u_{n0}\right]}_{:=A_{2}},
\end{split}\]}
with
{\small\[
    \frac{d\rho\left(z;\lambda\left(s\right)\right)}{d\lambda}\left[\varepsilon_{n}u_{n0}\right]
    \equiv\left.\frac{d\rho\left(z;\lambda\left(t\right)\right)}{dt}\right|_{t=s},\quad
    \frac{d^{2}\rho\left(z;\lambda\left(s\right)\right)}{d\lambda d\lambda}
        \left[\varepsilon_{n}u_{n0},\varepsilon_{n}u_{n0}\right]
    \equiv\left.\frac{d^{2}\rho\left(z;\lambda\left(t\right)\right)}{dt^{2}}\right|_{t=s}.
\]}
We note from Assumption \ref{assu:Dv*}.(ii) that each element of
{\small\[
    \frac{d^{2}\rho\left(z;\lambda\left(s\right)\right)}{d\lambda d\lambda}\left[u_{n0},u_{n0}\right]
    =\left.\frac{d^{2}\rho\left(z;\lambda+tu_{n0}\right)}{dt^{2}}\right|_{t=s}
    =\begin{pmatrix} 0 \\ -2u_{0\theta1}\nabla_{C}u_{n0w}\left(x\right) \\
    -2u_{0\theta2}\nabla_{M}u_{n0w}\left(x\right) \end{pmatrix}
\]}
is uniformly bounded over $x\in\mathcal{X}$. For $A_{1}$, we write
{\small\[\begin{split}
    &\frac{1}{n}\sum_{i=1}^{n}
        \rho\left(z_{i};\lambda\left(s\right)\right)'\hat{\Sigma}_{0}\left(x_{i}\right)^{-1}
        \frac{d^{2}\rho\left(z_{i};\lambda\left(s\right)\right)}{d\lambda d\lambda}\left[u_{n0},u_{n0}\right]\\
    =&\frac{1}{n}\sum_{i=1}^{n}
        \left(\rho\left(z_{i};\lambda\left(s\right)\right)-\rho\left(z_{i};\lambda_{0}\right)\right)'
        \hat{\Sigma}_{0}\left(x_{i}\right)^{-1}
        \frac{d^{2}\rho\left(z_{i};\lambda\left(s\right)\right)}{d\lambda d\lambda}\left[u_{n0},u_{n0}\right]\\
    &+\frac{1}{n}\sum_{i=1}^{n}
        \rho\left(z_{i};\lambda_{0}\right)'
        \left(\hat{\Sigma}_{0}\left(x_{i}\right)^{-1}-\Sigma_{0}\left(x_{i}\right)^{-1}\right)
        \frac{d^{2}\rho\left(z_{i};\lambda\left(s\right)\right)}{d\lambda d\lambda}\left[u_{n0},u_{n0}\right]\\
    &+\frac{1}{n}\sum_{i=1}^{n}
        \rho\left(z_{i};\lambda_{0}\right)'\Sigma_{0}\left(x_{i}\right)^{-1}
        \frac{d^{2}\rho\left(z_{i};\lambda\left(s\right)\right)}{d\lambda d\lambda}\left[u_{n0},u_{n0}\right].
\end{split}\]}
Since
{\small\[\begin{split}
    \rho\left(z;\lambda\left(s\right)\right)-\rho\left(z;\lambda_{0}\right)
    =&\frac{d\rho\left(z;\lambda_{0}\right)}{d\lambda}\left[\hat{\lambda}_{n}-\lambda_{0}\right]
    +\begin{pmatrix} 0 \\
    \left(\hat{\kappa}_{C}-\kappa_{C0}\right)
    \left(\nabla_{C}\hat{w}\left(x\right)-\nabla_{C}w_{0}\left(x\right)\right) \\
    \left(\hat{\kappa}_{M}-\kappa_{M0}\right)
    \left(\nabla_{M}\hat{w}\left(x\right)-\nabla_{M}w_{0}\left(x\right)\right) \end{pmatrix}\\
    &\mp s\varepsilon_{n}\begin{pmatrix} u_{n0w}\left(x\right)+x_{C}u_{0\theta1}+x_{M}u_{0\theta2} \\
        \hat{\kappa}_{C}\nabla_{C}u_{n0w}\left(x\right)+u_{0\theta3}\nabla_{C}\hat{w}\left(x\right)
        +u_{0\theta3}\nabla_{C}u_{n0w}\left(x\right)\\
        \hat{\kappa}_{M}\nabla_{M}u_{n0w}\left(x\right)+u_{0\theta4}\nabla_{M}\hat{w}\left(x\right)
        +u_{0\theta4}\nabla_{M}u_{n0w}\left(x\right)
    \end{pmatrix},
\end{split}\]}
the first term of the right-hand side is $o_{p}\left(n^{-1/4}\right)$ uniformly over $\lambda\left(s\right)\in\mathcal{N}_{on}$, which implies that $A_{1}$ is $o_{p}\left(\varepsilon_{n}^{2}\right)$. For $A_{2}$, we write
{\small\[\begin{split}
    &\frac{1}{n}\sum_{i=1}^{n}
        \left(\frac{d\rho\left(z_{i};\lambda\left(s\right)\right)}{d\lambda}\left[u_{n0}\right]\right)'
        \hat{\Sigma}_{0}\left(x_{i}\right)^{-1}
        \frac{d\rho\left(z_{i};\lambda\left(s\right)\right)}{d\lambda}\left[u_{n0}\right]\\
    =&\frac{1}{n}\sum_{i=1}^{n}
        \left(\frac{d\rho\left(z_{i};\lambda\left(s\right)\right)}{d\lambda}\left[u_{n0}\right]
            -\frac{d\rho\left(z_{i};\lambda_{0}\right)}{d\lambda}\left[u_{n0}\right]\right)'
        \hat{\Sigma}_{0}\left(x_{i}\right)^{-1}
        \frac{d\rho\left(z_{i};\lambda\left(s\right)\right)}{d\lambda}\left[u_{n0}\right]\\
    &+\frac{1}{n}\sum_{i=1}^{n}
        \left(\frac{d\rho\left(z_{i};\lambda_{0}\right)}{d\lambda}\left[u_{n0}\right]\right)'
        \hat{\Sigma}_{0}\left(x_{i}\right)^{-1}
        \left(\frac{d\rho\left(z_{i};\lambda\left(s\right)\right)}{d\lambda}\left[u_{n0}\right]
            -\frac{d\rho\left(z_{i};\lambda_{0}\right)}{d\lambda}\left[u_{n0}\right]\right)\\
    &+\frac{1}{n}\sum_{i=1}^{n}
        \left(\frac{d\rho\left(z_{i};\lambda_{0}\right)}{d\lambda}\left[u_{n0}\right]\right)'
        \left(\hat{\Sigma}_{0}\left(x_{i}\right)^{-1}-\Sigma_{0}\left(x_{i}\right)^{-1}\right)
        \frac{d\rho\left(z_{i};\lambda_{0}\right)}{d\lambda}\left[u_{n0}\right]\\
    &+\frac{1}{n}\sum_{i=1}^{n}
        \left(\frac{d\rho\left(z_{i};\lambda_{0}\right)}{d\lambda}\left[u_{n0}\right]\right)'
        \Sigma_{0}\left(x_{i}\right)^{-1}
        \frac{d\rho\left(z_{i};\lambda_{0}\right)}{d\lambda}\left[u_{n0}\right]
\end{split}\]}
Since 
{\small\[\begin{split}
    \frac{d\rho\left(z;\lambda\left(s\right)\right)}{d\lambda}\left[u_{n0}\right]
            -\frac{d\rho\left(z;\lambda_{0}\right)}{d\lambda}\left[u_{n0}\right]
    =&\begin{pmatrix} 0 \\ u_{0\theta1}\nabla_{C}\left(\hat{w}\left(x\right)-w_{0}\left(x\right)\right)
    +\left(\hat{\kappa}_{C}-\kappa_{C0}\right)\nabla_{C}u_{n0w}\left(x\right) \\
    u_{0\theta2}\nabla_{M}\left(\hat{w}\left(x\right)-w_{0}\left(x\right)\right)
    +\left(\hat{\kappa}_{M}-\kappa_{M0}\right)\nabla_{M}u_{n0w}\left(x\right)
    \end{pmatrix}\\
    &\pm s\varepsilon_{n}\begin{pmatrix} 0 \\
    u_{0\theta1}\nabla_{C}u_{n0w}\left(x\right)+u_{0\theta1}\nabla_{C}\left(\pi_{n}u_{w}^{*}\left(x\right)\right) \\
    u_{0\theta2}\nabla_{M}u_{n0w}\left(x\right)+u_{0\theta2}\nabla_{M}\left(\pi_{n}u_{w}^{*}\left(x\right)\right)
    \end{pmatrix},
\end{split}\]}
the first two terms on the right-hand side are $o_{p}\left(n^{-1/4}\right)$ and the third term is $o_{p}\left(1\right)$ uniformly over $\lambda\left(s\right)\in\mathcal{N}_{on}$, which implies that $A_{2}$ is $O_{p}\left(\varepsilon_{n}^{2}\right)$. Moreover, since $\varepsilon_{n}=o\left(n^{-1/2}\right)>0$, we obtain uniformly over $\lambda\left(s\right)\in\mathcal{N}_{on}$:
{\small\[
    \frac{1}{n}\sum_{i=1}^{n}\rho\left(z_{i};\hat{\lambda}_{n}\right)'
        \hat{\Sigma}_{0}\left(x_{i}\right)^{-1}
    \frac{d\rho\left(z_{i};\hat{\lambda}_{n}\right)}{d\lambda}\left[u_{n0}\right]
    =o_{p}\left(n^{-1/2}\right).
\]}
Write
{\small\[\begin{split}
    &\frac{1}{n}\sum_{i=1}^{n}\rho\left(z_{i};\hat{\lambda}_{n}\right)'
        \hat{\Sigma}_{0}\left(x_{i}\right)^{-1}
    \frac{d\rho\left(z_{i};\hat{\lambda}_{n}\right)}{d\lambda}\left[u_{n0}\right]\\
    =&\frac{1}{n}\sum_{i=1}^{n}
        \rho\left(z_{i};\hat{\lambda}_{n}\right)'\hat{\Sigma}_{0}\left(x_{i}\right)^{-1}
        \left(\frac{d\rho\left(z_{i};\hat{\lambda}_{n}\right)}{d\lambda}\left[u_{n0}\right]
        -\frac{d\rho\left(z_{i};\lambda_{0}\right)}{d\lambda}\left[u_{0}\right]\right)\\
    &+\frac{1}{n}\sum_{i=1}^{n}\rho\left(z_{i};\hat{\lambda}_{n}\right)'
        \left[\hat{\Sigma}_{0}\left(x_{i}\right)^{-1}
        -\Sigma_{0}\left(x_{i}\right)^{-1}\right]
        \frac{d\rho\left(z_{i};\lambda_{0}\right)}{d\lambda}\left[u_{0}\right]\\
    &+\frac{1}{n}\sum_{i=1}^{n}\rho\left(z_{i};\hat{\lambda}_{n}\right)'
        \Sigma_{0}\left(x_{i}\right)^{-1}
        \frac{d\rho\left(z_{i};\lambda_{0}\right)}{d\lambda}\left[u_{0}\right].
\end{split}\]}
Using proofs for second derivative terms together with Assumption \ref{assu:Sigma2}, the first two terms on the right-hand side are $o_{p}\left(n^{-1/2}\right)$. Since $\left\{\rho\left(z_{i};\lambda\right)'
        \Sigma_{0}\left(x_{i}\right)^{-1}
        \frac{d\rho\left(z_{i};\lambda_{0}\right)}{d\lambda}\left[u_{0}\right]:\lambda\in\mathcal{N}_{on}\right\}$ is a Donsker class,
{\small\[\begin{split}
    &\frac{1}{n}\sum_{i=1}^{n}
        \left(\rho\left(z_{i};\hat{\lambda}_{n}\right)-\rho\left(z_{i};\lambda_{0}\right)\right)'
        \Sigma_{0}\left(x_{i}\right)^{-1}\frac{d\rho\left(z_{i};\lambda_{0}\right)}{d\lambda}\left[u_{0}\right]\\
    &=\mathbb{E}\left[\left(\rho\left(z_{i};\hat{\lambda}_{n}\right)-\rho\left(z_{i};\lambda_{0}\right)\right)'
        \Sigma_{0}\left(x_{i}\right)^{-1}
        \frac{d\rho\left(z_{i};\lambda_{0}\right)}{d\lambda}\left[u_{0}\right]\right]
    +o_{p}\left(n^{-1/2}\right).
\end{split}\]}
With $\rho\left(z_{i};\hat{\lambda}_{n}\right)-\rho\left(z_{i};\lambda_{0}\right)-\frac{d\rho\left(z_{i};\lambda_{0}\right)}{d\lambda}\left[\hat{\lambda}_{n}-\lambda_{0}\right]=o_{p}\left(n^{-1/2}\right)$ uniformly over $x_{i}\in\mathcal{X}$,
{\small\[\begin{split}
    &\frac{1}{n}\sum_{i=1}^{n}\rho\left(z_{i};\hat{\lambda}_{n}\right)'\Sigma_{0}\left(x_{i}\right)^{-1}
        \frac{d\rho\left(z_{i};\lambda_{0}\right)}{d\lambda}\left[u_{0}\right]\\
    &=\langle\hat{\lambda}_{n}-\lambda_{0},u_{0}\rangle
    +\frac{1}{n}\sum_{i=1}^{n}\rho\left(z_{i};\lambda_{0}\right)'\Sigma_{0}\left(x_{i}\right)^{-1}
        \frac{d\rho\left(z_{i};\lambda_{0}\right)}{d\lambda}\left[u_{0}\right]
    +o_{p}\left(n^{-1/2}\right).
\end{split}\]}
Then,
{\small\[
    \sqrt{n}\left(\theta-\theta_{0}\right)'\eta=\sqrt{n}\langle\hat{\lambda}_{n}-\lambda_{0},u_{0}\rangle
    =-\frac{1}{\sqrt{n}}\sum_{i=1}^{n}\rho\left(z_{i};\lambda_{0}\right)'
        \Sigma_{0}\left(x_{i}\right)^{-1}
        \frac{d\rho\left(z_{i};\lambda_{0}\right)}{d\lambda}\left[u_{0}\right]
    +o_{p}\left(1\right),
\]}
and Theorem \ref{thm:EfficSieveGLSE} follows from applying a standard CLT for i.i.d. data.
\end{proof}

\section{Extension to unequal dimensions}\label{appen: unequal}

We extend the identification results in Section~\ref{sec: MatchingModelID} to the case where worker and job characteristics have different dimensions, $d_x\neq d_y$. Without Assumption~\ref{assu:bilinearID} (the invertibility of $A$), we can express the equilibrium as
\[
  w^{*}\left(x\right)=w^{o*}\left(x\right)+x'b+c,\quad
  A\left(y_{1}^{*}\left(x\right),\ldots,y_{d_{y}}^{*}\left(x\right)\right)'
  =\nabla w^{o*}\left(x\right).
\]
Then, the econometric model with measurement errors can be defined by:
\[
\begin{split}
  w_{i}&=w^{*}\left(x_{i}\right)+\varepsilon_{wi}
        =w^{o*}\left(x_{i}\right)+x_{i}'b+c+\varepsilon_{wi},\\
  Ay_{i}&=Ay_{i}^{*}\left(x_i\right)+\varepsilon_{yi}
        =\nabla w^{o*}\left(x_i\right)+\varepsilon_{yi}.
\end{split}
\]
The matrix $A_{0}\in\mathbb{R}^{d_{x}\times d_{y}}$, the vector $b_{0}\in\mathbb{R}^{d_{x}}$, and the convex function $w_{0}\left(\cdot\right)=w^{o*}\left(\cdot\right)+c$ are identified by the following theorem.
\begin{theorem}\label{thm:Unequal}
Let Assumption \ref{assu:Px} hold. Assume that there exist $d_y+1$ values of $x$ such that the $d_y\times d_y$ matrix formed by
differences in the conditional mean of job characteristics,
\[
  M_Y:=\left(\mathbb E[y_i\mid x_i=x_1]-\mathbb E[y_i\mid x_i=x_{d_y+1}],
  \cdots,\mathbb E[y_i\mid x_i=x_{d_y}]-\mathbb E[y_i\mid x_i=x_{d_y+1}]\right)
\]
is invertible. Then, $A_0, b_0,$ and $w_0(\cdot)$ are identified.
\end{theorem}

Intuitively, the additional identification condition (the invertibility of $M_Y$) only requires that the conditional mean of job characteristics varies in $d_y$ linearly independent directions across worker types. In the proof of Theorem~\ref{thm:MatchingID} for the equal-dimension case, Assumption~\ref{assu:bilinearID2} is used to establish such linear independence of the conditional means. In the unequal-dimension case, we impose this rank condition directly by assuming that $M_Y$ is invertible for $d_y+1$ worker types. Hence, the invertibility of $M_Y$ provides a sufficient condition for identification, which can be viewed as a conditional-mean analogue of Assumption~\ref{assu:bilinearID2}.

Note that the equilibrium assignment function is also identified when we have $d_x \ge d_y$ and $A$ has rank $d_y$, as it is recovered by $y^*(x) = (A'A)^{-1}A'\nabla w^{o*}\left(x\right).$ In this case, the twist condition is satisfied and the matching between $x$ and $y$ is pure, which implies the uniqueness of stable matching. When $d_y > d_x,$ the twist condition is not met, so the purity and uniqueness of matching between $x$ and $y$ are not guaranteed.

\begin{proof}[Proof of Theorem \ref{thm:Unequal}]
Similar to the proof of Theorem \ref{thm:MatchingID}, it follows from the exogeneity condition
$\mathbb E[\varepsilon_{wi}\mid x_i]=0$ that the convex function $w_0^*(\cdot)=w_0(x) + x'b_0$
is identified from the conditional expectation of wages. Since $w_0^*$ is convex and continuously
differentiable, its gradient $\nabla w_0^*(x)$ is also identified. Moreover,
\[
\nabla w_0^*(x) = \nabla w_0(x) + b_0\quad
\Rightarrow\quad
\nabla w_0(x) = A_0\,\mathbb E[y_i \mid x_i=x] = \nabla w_0^*(x) - b_0.
\]
Now consider the points $x_1,\dots,x_{d_y+1}$ satisfying the new assumption. For each $k=1,\dots,d_y$,
\[
A_0
\Big(
\mathbb E[y_i \mid x_i=x_k]
-
\mathbb E[y_i \mid x_i=x_{d_y+1}]
\Big)
=
\nabla w_0^*(x_k) - \nabla w_0^*(x_{d_y+1}).
\]

Define the $d_x\times d_y$ matrix
\[
M_X :=
\begin{pmatrix}
\nabla w_0^*(x_1)-\nabla w_0^*(x_{d_y+1}) &
\cdots &
\nabla w_0^*(x_{d_y})-\nabla w_0^*(x_{d_y+1})
\end{pmatrix},
\]
and the $d_y\times d_y$ matrix
\[
M_Y :=
\begin{pmatrix}
\mathbb E[y_i \mid x_i=x_1]-\mathbb E[y_i \mid x_i=x_{d_y+1}] &
\cdots &
\mathbb E[y_i \mid x_i=x_{d_y}]-\mathbb E[y_i \mid x_i=x_{d_y+1}]
\end{pmatrix}.
\]
Stacking the above $d_y$ equalities
gives
\[
A_0 M_Y = M_X.
\]
By the invertibility of $M_Y$, we can solve
uniquely for $A_0$:
\[
A_0 = M_X M_Y^{-1}.
\]
This shows that $A_0 \in \mathbb R^{d_x\times d_y}$ is identified, without any
restriction on the relative sizes of $d_x$ and $d_y$.

Having identified $A_0$, we can recover $b_0$. For any $x$ in the support of $x_i$,
\[
b_0
= \nabla w_0^*(x) - A_0\,\mathbb E[y_i \mid x_i=x],
\]
and the right-hand side is identified from the data. Thus $b_0$ is identified.
This completes the proof.
\end{proof}

\section{Bernstein polynomials with convex constraints}\label{appen: bernstein convexity}
The equilibrium wage function, $w\left(x\right)$, obtained from optimal transport theory is unique and convex (up to a constant). However, finite sample estimators of $w\left(x\right)$, $\hat{w}_{n}\left(x;\gamma\right)$, might be nonconvex at the values close to the boundary of $\mathcal{X}$. To obtain a more stable estimator for $w(x)$, we impose a convexity restriction in the sieve-based estimation procedures without loss of generality. Among many possible linear approximating spaces, we particularly consider the following Bernstein polynomial sieve space:
{\small\begin{multline*}
    \mathcal{W}_{n}
    =\left\{w_{n}:\mathcal{X}\rightarrow\mathbb{R}:
    w_{n}\left(x;\gamma\right)=\sum_{j_{1},\ldots,j_{d}=0}^{k_{n}}\gamma_{j_{1}\cdots j_{d}}
    \left[\prod_{\ell=1}^{d}p_{j_{\ell}}\left(x_{\ell}\right)\right]:\right.\\
    \left.p_{j_{\ell}}\left(x_{\ell}\right)
    =\binom{k_{n}}{j_{\ell}}
    \left(\frac{x_{\ell}-\underline{x}_{\ell}}{\overline{x}_{\ell}-\underline{x}_{\ell}}\right)^{j_{\ell}}
    \left(\frac{\overline{x}_{\ell}-x_{\ell}}{\overline{x}_{\ell}-\underline{x}_{\ell}}\right)^{k_{n}-j_{\ell}}\right\},
\end{multline*}}
for $j_{\ell}=0,1,2,\ldots,k_{n}$ where $p_{j_{\ell}}$ is the Bernstein basis polynomial.

Let $\mathcal{W}^{cvx}$ be the set of midpoint convex functions: 
{\small\[
    \mathcal{W}^{cvx}=
    \left\{w\in C\left(\mathcal{X}\right):
    2w\left(\frac{x_{1}+x_{2}}{2}\right)
    \leq w\left(x_{1}\right)+w\left(x_{2}\right),\
    \forall x_{1},x_{2}\in\mathcal{X}\right\},
\]}
where $C\left(\mathcal{X}\right)$ is the class of all continuous functions on $\mathcal{X}$. We do not assume that the true function $w\left(x\right)$ has derivatives of any order. In fact, $\mathcal{W}^{cvx}$ is the class of all continuous convex functions because a continuous function that is midpoint convex is convex.

First, we consider the one-dimensional ($d=1$) constrained Bernstein polynomial sieve space, $\mathcal{W}_{n}^{cvx}=\left\{w_{n}\left(x;\gamma\right)\in\mathcal{W}_{n}:A\gamma\geq0\right\}$, where
{\small\[
    A\gamma
    \equiv
    \begin{pmatrix}
        1 & -2 & 1 & 0 & \cdots & 0 \\  0 & 1 & -2 & 1 & \cdots & 0\\
        &  & \ddots \\  0 & \cdots & 0 & 1 & -2 & 1
    \end{pmatrix}_{\left(k_{n}-1\right)\times\left(k_{n}+1\right)}
    \begin{pmatrix}
        \gamma_{0} \\ \gamma_{1} \\ \vdots \\ \gamma_{k_{n}}
    \end{pmatrix}
    \geq\begin{pmatrix} 0 \\ 0 \\ \vdots \\ 0 \end{pmatrix}.
\]}
Since the second derivatives of $w_{n}\left(x;\gamma\right)$ can be written
as
{\small\[
    w_{n}^{(2)}\left(x;\gamma\right)
    =\frac{k_{n}\left(k_{n}-1\right)}{\left(\overline{x}-\underline{x}\right)^{2}}
    \sum_{j=0}^{k_{n}-2}
    \left(\gamma_{j+2}-2\gamma_{j+1}+\gamma_{j}\right)
    \begin{pmatrix} k_{n}-2 \\ j \end{pmatrix}
    \left(\frac{x-\underline{x}}{\overline{x}-\underline{x}}\right)^{j}
    \left(\frac{\overline{x}-x}{\overline{x}-\underline{x}}\right)^{k_{n}-2-j},
\]}
the above restriction ensures $w_{n}^{\left(2\right)}\left(\cdot\right)\geq0$
for all $n$. \citet{wang2012} show that $\left\{\mathcal{W}_{n}^{cvx}\right\}$ is nested and dense in $\mathcal{W}^{cvx}$ with respect to sup-norm. 

For the two-dimensional sieve in eq.\eqref{eq: funcspace}, we consider the following linear constraints:
{\small\begin{equation}\label{eq: convex2d}\begin{split}
    &\gamma_{j_{C}+2,j_{M}}-2\gamma_{j_{C}+1,j_{M}}+\gamma_{j_{C},j_{M}}\geq0,\quad
    \forall j_{C}=0,\ldots,k_{C_{n}}-2, j_{M}=0,\ldots,k_{M_{n}},\\
    &\gamma_{j_{C},j_{M}+2}-2\gamma_{j_{C},j_{M}+1}+\gamma_{j_{C},j_{M}}\geq0,\quad
    \forall j_{C}=0,\ldots,k_{C_{n}}, j_{M}=0,\ldots,k_{M_{n}}-2.
\end{split}\end{equation}}
Then, the Bernstein polynomial sieve space with linear constraints \eqref{eq: convex2d} is nested and dense in
{\small\begin{multline*}
    \widetilde{\mathcal{W}}^{cvx}
    =\left\{w\in C\left(\mathcal{X}\right):
    2w\left(\frac{x_{C1}+x_{C2}}{2},x_{M}\right)
    \leq w\left(x_{C1},x_{M}\right)+w\left(x_{C2},x_{M}\right)\ \&\right.\\
    \left.2w\left(x_{C},\frac{x_{M1}+x_{M2}}{2}\right)
    \leq w\left(x_{C},x_{M1}\right)+w\left(x_{C},x_{M2}\right),\right.\\
    \left.\forall\left(x_{C1},x_{M}\right),\left(x_{C2},x_{M}\right),
    \left(x_{C},x_{M1}\right),\left(x_{C},x_{M2}\right)\in\mathcal{X}\right\},
\end{multline*}}
which is larger than
{\small\begin{multline*}
    \mathcal{W}^{cvx}
    =\left\{w\in C\left(\mathcal{X}\right):
    2w\left(\frac{x_{C1}+x_{C2}}{2},\frac{x_{M1}+x_{M2}}{2}\right)
    \leq w\left(x_{C1},x_{M1}\right)+w\left(x_{C2},x_{M2}\right),\right.\\
    \left.\forall\left(x_{C1},x_{M1}\right),\left(x_{C2},x_{M2}\right)\in\mathcal{X}\right\}.
\end{multline*}}

Note that \citet{floater1994} provides sufficient conditions for the two-dimensional Bernstein polynomial $w_{n}\left(x;\gamma\right)$ to be convex, which include linear inequalities \eqref{eq: convex2d} as well as additional nonlinear constraints. We use \eqref{eq: convex2d} for our estimation because (i) they are easy to impose in the optimization procedure and to extend to higher-dimensional functions, and (ii) $w\left(x\right)\in\mathcal{W}^{cvx}\subset\widetilde{\mathcal{W}}^{cvx}$.

\section{Effects of production technology on wage distribution}\label{appen: skewdisp}
In this section, we check whether the \citet{lindenlaub2017} model's predictions on the effects of technological changes on wage distribution hold for the transformed data and other distributions. We consider three DGPs with the same quadratic production function, $s\left(x,y\right)=\alpha_{CC}x_{C}y_{C}+\alpha_{MM}x_{M}y_{M}$, and three different skill distributions, respectively: (1) bivariate normal distribution \citep{lindenlaub2017}, (2) transformed Gumbel copula, and (3) untransformed Gumbel copula.

For the first DGP, we set workers' skill bundle, $x$, and jobs' skill requirements, $y$, to follow standard joint normal distributions with $\rho_{x}=-0.2$ and $\rho_{y}=-0.6$, respectively. For both nonnormal transformed and untransformed DGPs, we set $x=\left(x_{1},x_{2}\right)$ and $y=\left(y_{1},y_{2}\right)$ to follow the Gumbel copula with the shape parameter values 1.25 and 2.5, respectively. Then Kendall's correlation coefficients of $x$ and $y$ are 0.2 and 0.6. For the transformed data, we convert $x$ and $y$ into standard normally distributed variables:
\[
    x_{Ci}=\Phi^{-1}(x_{1i}),\ x_{Mi}=\Phi^{-1}(1-x_{2i}),\
    y_{Cj}=\Phi^{-1}(y_{1j}),\ y_{Mj}=\Phi^{-1}(1-y_{2j}),
\]
so that $(x_{Ci},x_{Mi})$ and $(y_{Cj},y_{Mj})$ are negatively correlated. For the untransformed data,
\[
    x_{Ci}=x_{1i},\ x_{Mi}=1-x_{2i},\ y_{Cj}=y_{1j},\ y_{Mj}=1-y_{2j}.
\]
As mentioned in the main text, there is no guarantee that the transformed data follow a joint normal distribution, implying that no closed-form solution exists for the equilibrium wage and assignment function. Hence, we numerically solve the equilibrium matching through linear programming for each Monte Carlo sample and different parameter values of $\left(\alpha_{CC},\alpha_{MM}\right)$.

Figure \ref{fig:wageskewdisp} plots the skewness and variance of the distribution of wages for each DGP. As \citet{lindenlaub2017} derived, for all three DGPs, wage distributions are positively skewed for different pairs of $\alpha_{CC}$ and $\alpha_{MM}$, and the wage dispersion increases as cognitive or manual skill complementarity increases. However, for the transformed Gumbel copula, the skewness decreases as $\alpha_{CC}$ decreases and $\alpha_{MM}$ increases. This simulation result is inconsistent with the theoretical result for normally distributed $X$ and $Y$, in which the skewness is minimized when $\alpha_{CC}=\alpha_{MM}$. It implies that estimating the Gaussian model in \citet{lindenlaub2017} with transformed data can mislead the effects of technological changes on wages and inequality.

\begin{figure}[t!]
\begin{centering}
\includegraphics[width=1\textwidth]{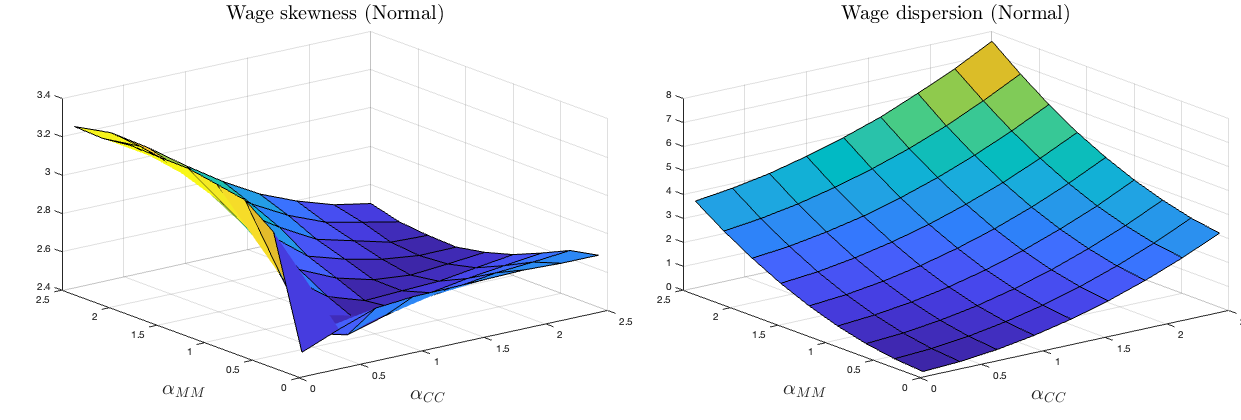}
\medskip
\includegraphics[width=1\textwidth]{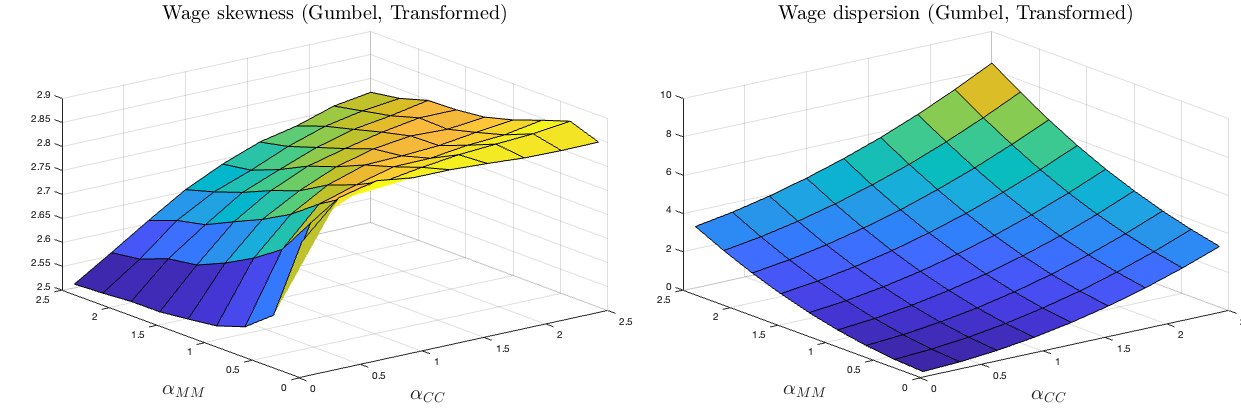}
\medskip
\includegraphics[width=1\textwidth]{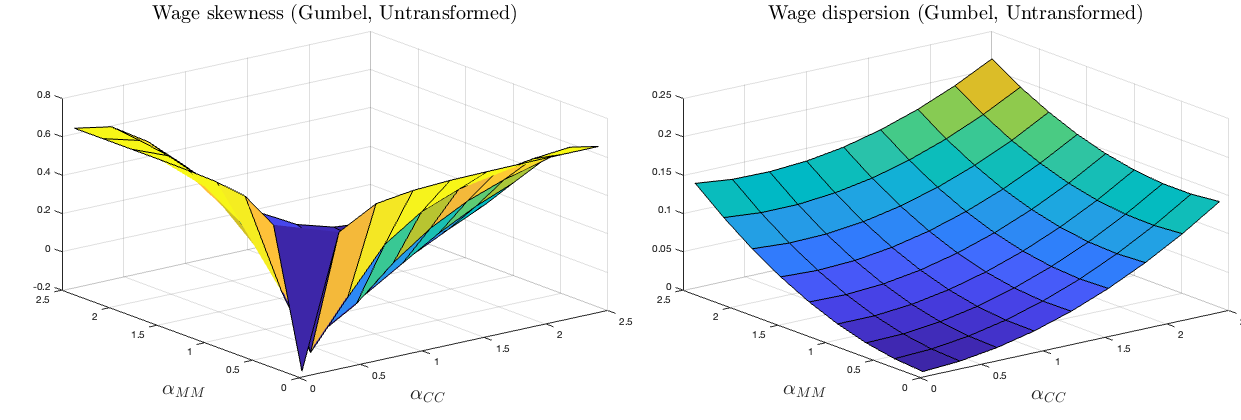}
\par\end{centering}
\caption{\label{fig:wageskewdisp}Effects of changes in production technology on wages and inequality}
\end{figure}

\end{document}